\newtheorem{assumption}{Assumption}
\newtheorem{theorem}{Theorem}
\newtheorem{corollary}{Corollary}
\newtheorem{lemma}{Lemma}
\newtheorem{proposition}{Proposition}
\theoremstyle{remark}
\newtheorem{remark}{Remark}
\newtheorem{example}{Example}
\newcommand{\dif}{\mathop{}\!\mathrm{d}}
\newcommand{\SFM}{\mathsf M}
\newcommand{\SFZ}{\mathsf Z}
\newcommand{\CalO}{\mathcal O}
\newcommand{\CalT}{\mathcal T}
\newcommand{\CalX}{\mathcal X}
\begin{document}

\title{Scalable Stochastic Kriging with Markovian Covariances}

\author{Liang Ding and Xiaowei Zhang\thanks{Corresponding author. Email: \href{mailto:xiaoweiz@ust.hk}{xiaoweiz@ust.hk}}}

\date{\small{Department of Industrial Engineering and Decision Analytics \\ The Hong Kong University of Science and Technology, Clear Water Bay, Hong Kong}}

\maketitle

\begin{abstract}
Stochastic kriging is a popular technique for simulation metamodeling due to its flexibility and analytical tractability. Its computational bottleneck is the inversion of a covariance matrix, which takes $\CalO(n^3)$ time in general and becomes prohibitive for large $n$, where $n$ is the number of design points. Moreover, the covariance matrix is often ill-conditioned for large $n$, and thus the inversion is prone to numerical instability, resulting in erroneous parameter estimation and prediction. These two numerical issues preclude the use of stochastic kriging at a large scale. This paper presents a novel approach to address them. We construct a class of covariance functions, called Markovian covariance functions (MCFs), which have two properties: (i) the associated covariance matrices  can be inverted analytically, and (ii) the inverse matrices are sparse. With the use of MCFs, the inversion-related computational time is reduced to $\CalO(n^2)$ in general, and can be further reduced by orders of magnitude with additional assumptions on the simulation errors and design points. The analytical invertibility also enhance the numerical stability dramatically. The key  in our approach is that we identify a general functional form of covariance functions that can induce sparsity in the corresponding inverse matrices. We also establish a connection between MCFs and linear ordinary differential equations. Such a connection provides a  flexible, principled approach to constructing a wide class of MCFs. Extensive numerical experiments demonstrate that stochastic kriging with MCFs can handle large-scale problems in an both computationally efficient and numerically stable manner. 

\end{abstract}

\textit{Key words}: stochastic kriging; Markovian covariance function; sparsity; Green's function

\section{Introduction}

Simulation is used extensively to facilitate decision-making processes related to complex systems. The popularity stems from its flexibility, allowing users to incorporate arbitrarily fine details of the system and estimate virtually any performance measure of interest. However, simulation models are often computationally expensive to execute, severely restricting the usefulness of simulation in settings such as real-time decision making and system optimization. In order to alleviate this computational inefficiency, metamodeling has been developed actively in the simulation community \citep{BartonMeckesheimer06}. The basic idea is that the user only executes the simulation model at some carefully selected design points. A metamodel, which runs much faster than the simulation model in general, is then built to approximate the true response surface -- the performance measure of the simulation model -- as a function of the design variables, by interpolating the simulation outputs properly. The responses at other locations are predicted by the metamodel without running additional simulation, thereby reducing the computational cost substantially. 

Stochastic kriging (SK), proposed by \cite{AnkenmanNelsonStaum10}, is a particularly popular metamodel, thanks to its analytical tractability, ease of use, and capability of providing good global fit. It has been used successfully for quantifying input uncertainty in stochastic simulation \citep{BartonNelsonXie14,XieNelsonBarton14} and for optimizing expensive functions with noisy observations \citep{sun2014}.  SK represents the response surface as a Gaussian process, which is fully characterized by its covariance function, and leverages the spatial correlations between the responses to provide prediction. However, one often encounters two numerical issues when implementing SK in practice, both of which are related to matrix inversion. Indeed, the inverse of the covariance matrix of the simulation outputs is essential for computing various quantities in SK, including the optimal predictor, the mean squared error of prediction, and the likelihood function.

An immediate issue regarding the inversion of a $n\times n$ matrix is that it typically requires $\CalO(n^3)$ computational time, which is prohibitive for large $n$, where $n$ is the number of the design points. For instance, it is reported in \cite{huang2006} that a major limitation of SK-based methods for simulation optimization is the high computational cost of fitting the SK metamodel, which,  as the number of samples increases,  eventually becomes even more expensive than running the original simulation model. 
 
A second issue is that the covariance matrix involved in SK may become ill-conditioned (i.e., nearly singular), in which case the inversion is numerically unstable, resulting in inaccurate parameter estimation or prediction. This often occurs when $n$ is large, because then there are fairly likely two design points that are spatially close to each other, and thus the two corresponding columns in the covariance matrix are ``close to'' being linearly dependent. 

These two numerical issues preclude the use of SK at a large scale, especially for problems with a high-dimensional design space. In geostatistics literature, inverting large covariance matrices that arise from Gaussian processes is a well-known numerical challenge and is sometimes referred to  as ``the big $n$ problem'' informally. Typical solutions to this problem are based on approximations, that is, use another matrix that is easier to invert to approximate the covariance matrix; see \S \ref{sec:lit} for more details. This paper presents a new perspective. \emph{Instead of seeking good approximations for covariance matrices induced by an arbitrary covariance function, we will construct a specific class of covariance functions that induce computationally tractable covariance matrices.} In particular, the computational tractability stems from the following two properties of the covariance matrices induced by this class of covariance functions: (i) they can be inverted \emph{analytically}, and (ii) the inverse matrices are \emph{sparse}. Our novel approach will effectively reduce the computational complexity of SK to $\CalO(n^2)$, without resorting to approximation schemes. In situations where the simulation errors are negligible, our approach obviates the need of numerical inversion and further reduces the complexity to $\CalO(n)$.

We refer to this specific class of covariance functions as \emph{Markovian covariance functions} (MCFs), because the Gaussian processes equipped with them exhibit certain Markovian structure.  Albeit seemingly restrictive, MCFs actually represent a broad class of covariance functions and can be constructed in a flexible, convenient fashion.

\subsection{Main Contributions}
First and foremost, we identify a simple but general functional form with which the covariance function of a 1-dimensional Gaussian process yields \emph{tridiagonal} precision matrices (i.e., the inverse of the covariance matrices), which are obviously sparse. In addition, the nonzero entries of the precision matrices can be expressed in terms of the covariance function in closed-form. To the best of our knowledge, there is no prior result establishing this kind of explicit connection between the form of covariance functions and sparsity in the corresponding precision matrices.

Second, we link MCFs to Sturm-Liouville (S-L) differential equations. Specifically, we show that the Green's function of an S-L equation has exactly the same form as MCFs. Not only does this connection provide a convenient tool to construct MCFs, but also implies that the number of MCFs having an analytical expression is potentially enormous, since any second-order linear ordinary differential equation can be recast in the form of an S-L equation.

Third, we extend MCFs to multidimensional design spaces in a ``composite'' manner, namely, defining the multidimensional covariance to be the product of 1-dimensional covariances along each dimension. This way of construction allows use of tensor algebra to preserve the sparsity in the precision matrices, provided that the design points form a regular lattice.


Last but not least, we demonstrate through extensive numerical experiments that MCFs can significantly outperform those that are commonly used such as the squared exponential covariance function in terms of accuracy in prediction of response surfaces. The improved accuracy can be attributed to two reasons: (i) the numerical stability of matrix inversion is enhanced greatly; (ii) the reduced computational complexity allows us to use more data.

\subsection{Related Literature}\label{sec:lit}

A great variety of techniques have been proposed to address the big $n$ problem in both geostatistics and machine learning literature, where Gaussian processes are widely used. Most of them focus on developing approximations of the covariance matrix that are computationally cheaper. Representative approximation schemes include reduced-rank approximation and sparse approximation. The former approximates the covariance matrix by a matrix having a much lower rank. The latter can be achieved by a method called covariance tapering. It forces the covariance to zero if the two design points involved are sufficiently far away from each other. The covariance matrix is then approximated by a sparse matrix. Both reduced-rank matrices and sparse matrices entail fast inversion algorithms. From a modeling perspective, these two approximation schemes emphasize long-scale and short-scale dependences respectively, but meanwhile fail to capture the other end of the spectrum \citep{SangHuang12}.  We refer to \citet[Chapter 12]{BanerjeeCarlinGelfand14} and \citet[Chapter 8]{RasmussenWilliams06} for reviews with a focus on geostatistics and machine learning, respectively. Moreover, approximation schemes usually result in spurious quantification of the uncertainty about the prediction; see, e.g., \cite{ShahriariSwerskyWangAdamsdeFreitas16} and references therein.

Another popular approach to the big $n$ problem is to use Gaussian Markov random fields (GMRFs), which discard the concept of covariance function and model the precision matrix, i.e., the inverse of the covariance matrix,  directly; see \citet{RueHeld05} for a thorough exposition on the subject and \cite{SalemiSongNelsonStaum17} for its application in large-scale simulation optimization. To construct a GMRF one first stipulates a graph, with nodes denoting locations of interest in the design space. The edges in the graph characterize the ``neighborhood'' of each node, and define a Markovian structure. In particular, given all its neighbors, each node is conditionally independent of its non-neighbors. A crucial property of GMRF is that entry $(i,j)$ of the precision matrix is nonzero if, and only if, node $i$ and node $j$ are neighbors. Hence, the precision matrix is sparse if each node has a small neighborhood. The sparsity is then taken advantage of to reduce the inversion-related computational time. 

Despite its computational efficiency, GMRFs have clear disadvantages. First and foremost, they do not model association directly, and thus one cannot specify desired correlation behavior. Indeed, the relationship between entries in the precision matrix and the covariance matrix is very complex. This is because the joint distribution of the responses at two locations depends on the joint distribution of the responses at all the other locations. Second, GMRFs are built on graphs, and the discrete nature forbids predicting responses at locations that are not included in the graph, which is problematic for continuous design spaces. 

The methodology developed in the present paper is closely related to GMRFs. Our work can be viewed as one way to extend GMRFs from discrete domains to continuous domains. But it is by no means  a trivial extension, because we establish an explicit relationship between the form of a covariance function and the sparsity in the corresponding precision matrices. This allows us to combine the best of two worlds -- modeling association directly while preserving the computational tractability of GMRFs.


The rest of the paper is organized as follows. In \S\ref{sec:model}, we introduce the SK metamodel and motivate our approach to the big $n$ problem. In \S\ref{sec:MCF}, we introduce MCFs and characterize their essential structure, which effectively bridges the gap between Gaussian processes and GMRFs. In \S \ref{sec:SL}, we link MCFs with S-L differential equations. In \S\ref{sec:MLE}, we discuss maximum likelihood estimation of the unknown parameters, with an emphasis on the numerical stability as a result of the use of MCFs. We conduct extensive numerical experiments in \S\ref{sec:numerical} to demonstrate the scalability of SK in the presence of MCFs, and conclude in \S\ref{sec:conclusions}. The Appendices collect some technical proofs. 

\section{Stochastic Kriging and the Big $n$ Problem}\label{sec:model}

Let $\BFx\in \mathscr{X}\subseteq\Real^D$ denote the design variable of a computationally expensive simulation model, with $\mathscr{X}$ being the design space. Let $\SFZ(\BFx)$ denote the unknown response surface of that model. Suppose that the simulation model is run at design point $\BFx_i$ with $r_i$ independent replications, producing outputs $z_j(\BFx_i)$, $j=1,\ldots,r_i$, $i=1,\ldots,n$. Metamodeling is concerned with fitting  $\SFZ(\BFx)$ based on the simulation outputs. The SK metamodel casts $\SFZ(\BFx)$ into a \textit{realization} of a Gaussian process,
\begin{equation}
\label{eq:uni_nriging}
\SFZ(\BFx) = \BFbeta^\intercal \BFf(\BFx)+ \SFM(\BFx),
\end{equation}
where $\BFf(\BFx)$ is a vector of known functions (e.g., polynomial basis functions) and $\BFbeta$ is a vector of unknown parameters of compatible dimension, and $\SFM$ is a mean zero Gaussian process that is randomly sampled from a space of functions mapping $\Real^D \mapsto\Real $. A particular feature of the SK metamodel \eqref{eq:uni_nriging} is the spatial correlation, i.e., $\SFM(\BFx)$ and $\SFM(\BFy)$ tend to be similar (resp., different) if $\BFx$ and $\BFy$ are close to (resp., distant from) each other in space. Let $k(\BFx,\BFy)\coloneqq\Cov(\SFM(\BFx), \SFM(\BFy))$ denote the covariance function of $\SFM$. It is crucial to specify $k(\BFx,\BFy)$ properly in order that SK provide a good fit globally over the design space $\mathscr{X}$. 

The simulation outputs become
\begin{equation*}\label{eq:SK}
z_j(\BFx_i) =  \SFZ(\BFx) + \varepsilon_j(\BFx_i)=\BFbeta^\intercal \BFf(\BFx_i) + \SFM(\BFx_i) + \varepsilon_j(\BFx_i),
\end{equation*}
where $\varepsilon_1(\cdot),\varepsilon_2(\cdot),\ldots$ are normally distributed simulation errors. Define $\bar z(\BFx_i)\coloneqq  r_i^{-1}\sum_{j=1}^{r_i}z_j(\BFx_i)$, $\bar \varepsilon(\BFx_i)\coloneqq  r_i^{-1}\sum_{j=1}^{r_i}\varepsilon_j(\BFx_i)$, and $\bar \BFz \coloneqq (\bar z(\BFx_1),\cdots,\bar z(\BFx_n))^\intercal$. Let $\BFSigma_\SFM$ denote the $n\times n$ covariance matrix of $(\SFM(\BFx_1),\ldots,\SFM(\BFx_n))$, i.e., entry $(i,j)$ of $\BFSigma_\SFM$ is $\Cov(\SFM(\BFx_i),\SFM(\BFx_j))=k(\BFx_i,\BFx_j)$. Likewise, let $\BFSigma_\varepsilon$ denote the covariance matrix of $(\bar \varepsilon(\BFx_1),\ldots,\bar \varepsilon(\BFx_n))$. 

We assume that the simulation errors are mutually independent and are independent of $\SFM$. This assumption effectively  rules out the use of common random numbers (CRN) because it will break the sparsity that our methodology critically hinges on. Nevertheless, this does not impose much practical restriction, since it is shown in \cite{ChenAnkenmanNelson12} that the use of CRN generally is detrimental to the prediction accuracy of SK.

Let $\BFx_0$ denote an arbitrary point in $\mathscr{X}$. SK is concerned with predicting $\SFZ(\BFx_0)$ based on $\bar \BFz$. The SK predictor that minimizes the mean squared error (MSE) of prediction is
\begin{equation}\label{eq:BLUP}
\widehat\SFZ(\BFx_0) = \BFbeta^\intercal \BFf(\BFx_0) + \BFgamma^\intercal  (\BFx_0) [\BFSigma_\SFM + \BFSigma_\varepsilon]^{-1}(\bar\BFz-\BFF\BFbeta),
\end{equation}
with optimal MSE 
\begin{equation}\label{eq:MSE}
\MSE^*(\BFx_0) = k(\BFx_0,\BFx_0) - \BFgamma^\intercal(\BFx_0)[\BFSigma_\SFM + \BFSigma_\varepsilon]^{-1}\BFgamma(\BFx_0),
\end{equation}
where $\BFgamma\coloneqq (k(\BFx_0, \BFx_1),\ldots,k(\BFx_0, \BFx_n))^\intercal$ and $\BFF\coloneqq (\BFf(\BFx_1),\ldots,\BFf(\BFx_n))^\intercal$, provided that $\BFbeta$, $\BFgamma(\BFx_0)$, $\BFSigma_\SFM$, and $\BFSigma_\varepsilon$ are known. Clearly, they need to be estimated from the simulation outputs in practice. A typical method for estimating the unknown parameters is the maximum likelihood estimation (MLE), which maximizes the following log-likelihood function 
\begin{equation}\label{eq:loglikelihood}
l(\BFbeta,\BFtheta) = -\frac{n}{2}\ln(2\pi) - \frac{1}{2}\ln|\BFSigma_\SFM + \BFSigma_\varepsilon| -\frac{1}{2}(\bar\BFz-\BFF\BFbeta)^\intercal [\BFSigma_\SFM + \BFSigma_\varepsilon]^{-1} (\bar\BFz-\BFF\BFbeta),
\end{equation}
where $|\cdot|$ denotes the determinant of a matrix and $\BFtheta$ denotes the unknown parameter involved for specifying the covariance function $k$; see \S\ref{sec:MLE} for more discussion. 

Obviously, computing \eqref{eq:BLUP}, \eqref{eq:MSE}, and \eqref{eq:loglikelihood} all requires inverting $\BFSigma_\SFM+\BFSigma_\varepsilon$, which comes with two numerical challenges and is referred to as the big $n$ problem in geostatistics literature \citep{BanerjeeCarlinGelfand14}. First, although $\BFSigma_\varepsilon$ is diagonal due to the independence assumption, $\BFSigma_\SFM+\BFSigma_\varepsilon$ is a dense matrix in general and inverting it typically takes $\CalO(n^3)$ computational time, which becomes prohibitive for large $n$ (e.g., $n>10^3$). Second, this matrix often becomes ill-conditioned, and thus inverting it is prone to numerical instability. This may happen either if there are two design points spatially close to each other (so that the two corresponding columns of $\BFSigma_\SFM$ are almost linearly dependent), or during the process of searching the parameter space for an estimate of $\BFtheta$ for maximizing \eqref{eq:loglikelihood}. Moreover, both of the issues will be amplified by the dimensionality of the design space. 

Existing solutions to the big $n$ problem heavily rely on approximation schemes, striving to approximate $\BFSigma_\SFM$ by another matrix that can be inverted much faster. However, the reduction in computational time comes at the cost of inaccurate prediction of the responses and even invalid characterization their variances; see, e.g., \cite{Quinonero-CandelaRasmussen05}, \cite{SangHuang12}, and references therein. 

By contrast, we propose in this paper a novel approach to the big $n$ problem. Instead of allowing any arbitrary covariance function and then seeking approximations of the associated covariance matrices, we will devise judiciously a specific but broad class of covariance functions having the following two properties: (i) $\BFSigma_\SFM$ can be inverted analytically, and (ii) $ \BFSigma_\SFM^{-1}$ is sparse.

These two properties make the computation of $[\BFSigma_\SFM + \BFSigma_\varepsilon]^{-1}$ substantially easier. To see this, notice that by the Woodbury matrix identity \cite[\S 0.7.4]{HornJohnson12}, 
\begin{equation}\label{eq:woodbury}
[\BFSigma_\SFM + \BFSigma_\varepsilon]^{-1} = \BFSigma_\SFM^{-1} -  \BFSigma_\SFM^{-1}[ \BFSigma_\SFM^{-1} +  \BFSigma_\varepsilon^{-1}]^{-1} \BFSigma_\SFM^{-1}. 
\end{equation}
Since $\BFSigma_\SFM^{-1}$ has a known analytical expression and $\BFSigma_\SFM^{-1} +  \BFSigma_\varepsilon^{-1}$ is sparse, $[\BFSigma_\SFM^{-1} +  \BFSigma_\varepsilon^{-1}]^{-1}$ can be computed in $\CalO(n^2)$ time by leveraging a particular sparse structure that will become clear in \S\ref{sec:MCF}. The matrix multiplications in \eqref{eq:woodbury} require $\mathcal O(n^2)$ time also due to the sparsity of $\BFSigma_\SFM$, as opposed to $\CalO(n^3)$ for multiplications of dense matrices. Therefore, computing \eqref{eq:woodbury} requires $\CalO(n^2)$ time, reducing one order of magnitude without resorting to any matrix approximation at all. Further, if the simulation errors are negligible, i.e., $\BFSigma_\varepsilon\approx\BFzero$, then $[\BFSigma_\SFM + \BFSigma_\varepsilon]^{-1}\approx\BFSigma_\SFM^{-1}$, which can be inverted analytically, then numerical inversion would become unnecessary. This implies that the computation of the SK predictor \eqref{eq:BLUP}, which is reduced to multiplications of vectors and sparse matrices, can be completed in $\CalO(n)$ time. The same goes for the computation of the optimal MSE \eqref{eq:MSE}. 

Two central questions follow immediately. What structure needs to be imposed on the covariance function $k(\BFx,\BFy)$ so that the covariance matrix $\BFSigma_\SFM$ has the two desirable properties? How broad is this specific class of covariance functions? This paper provides comprehensive answers.

\section{Markovian Covariance Functions} \label{sec:MCF}

In order to motivate the structure that we impose on the covariance function, we first introduce Gaussian Markov random fields (GMRFs) briefly and refer to \cite{RueHeld05} for a comprehensive treatment on the subject. Consider a graph consisting of $n$ nodes, each of which is labeled with $\BFx_i$ and has a random value $\SFM(\BFx_i)$, $i=1,\ldots,n$. Let $\BFX$ denote all the nodes and $\mathscr{N}(\BFx_i)$ denote the neighbors of $\BFx_i$, for each $i=1,\ldots,n$.  Suppose that the joint distribution of $(\SFM(\BFx_1),\ldots,\SFM(\BFx_n))$ is multivariate normal. Then, $(\SFM(\BFx_1),\ldots,\SFM(\BFx_n))$ is called a GMRF if it has the  Markovian structure (i.e., conditional independence structure) as follows. Given $\{\SFM(\BFx):\BFx\in\mathscr{N}(\BFx_i)\}$, the values of the neighbors of node $\BFx_i$, $\SFM(\BFx_i)$ is conditionally independent of the values of its non-neighbors, $\{\SFM(\BFx):\BFx\in E \setminus \mathscr{N}(\BFx_i)\}$. A critical property of GMRFs is that entry $(i,j)$ of the precision matrix $\BFSigma_\SFM^{-1}$ is nonzero if, and only if, $\BFx_i$ and $\BFx_j$ are neighbors. Hence, $\BFSigma_\SFM^{-1}$ is sparse  if each node has a small neighborhood in the graph. 

The fundamental cause for the sparsity of $\BFSigma_\SFM^{-1}$ in GMRFs is obviously the Markovian structure. This inspires us to consider Gaussian processes that are Markovian. In particular, we consider three 1-dimensional examples -- Brownian motion, Brownian bridge, and the Ornstein-Uhlenbeck (O-U) process -- and calculate their associated precision matrices, respectively. 


\begin{example}[Brownian Motion]\label{example:BM}
The covariance function of the standard 1-dimensional Brownian motion is $k_{\mathrm{BM}}(s,t)=\min(x,y)$, $x,y\geq 0$. Suppose that the design points $\{x_1,\ldots,x_n\}$ are equally spaced, i.e., $x_i=ih$ for some $h>0$. Then, it can be shown that $\BFSigma_\SFM^{-1}$ is a tridiagonal matrix:
\[\BFSigma_\SFM^{-1} = \frac{1}{h}
\begin{pmatrix}
2 & -1 &   &  &  \\
-1 & 2 & -1 &     & \\
\cdots &  & \cdots &  &  \cdots\\
 & &   -1 & 2 & -1 \\
 & &  & -1 & 1
\end{pmatrix}. 
\]
\end{example}

\begin{example}[Brownian Bridge]\label{example:BB} 
The covariance function of the Brownian bridge defined on $[0,1]$ is $k_{\mathrm{BB}}(x,y) = \min(x,y)-xy$,  $x,y\in[0,1]$. Suppose that the design points are $x_i=\frac{i}{n+1}$, $i=1,\ldots,n$. Then, it can be shown that $\BFSigma_\SFM^{-1}$ is a tridiagonal matrix:
\[\BFSigma_\SFM^{-1} = (n+1)
\begin{pmatrix}
2 & -1 &   &  &  \\
-1 & 2 & -1 &     & \\
\cdots &  & \cdots &  &  \cdots\\
 & &   -1 & 2 & -1 \\
 & &  & -1 & 2
\end{pmatrix}. 
\]
\end{example}

\begin{example}[O-U Process] \label{example:OU}
The O-U process is defined via the  stochastic differential equation 
\[\dif X(t) = (\mu-\theta X(t))\,\dif t + \sigma\,\dif B(t), \quad t\geq 0\]
where $\mu$, $\theta>0$, and $\sigma>0$ are parameters, and $B(t)$ is the standard 1-dimensional Brownian motion. Then, the covariance function under the stationary distribution is $k_{\mathrm{OU}}(x,y) =\frac{\sigma^2}{2\theta} e^{-\theta|x-y|}$,  $x,y\geq 0$. Using the same design points as  Example \ref{example:BM}, it can be shown that $\BFSigma_\SFM^{-1}$ is a tridiagonal matrix:\footnote{The discovery of the precision matrices associated with the O-U process being tridiagonal was initially made through several numerical trials. Together with Examples \ref{example:BM} and \ref{example:BB}, the tridiagonal pattern was already enough to motivate us to consider the functional form \eqref{eq:kernel}. The analytical expression of the precision matrix in Example \ref{example:OU} was calculated as a corollary of Theorem \ref{theo:tridiag} after we proved it.}
\[\BFSigma_\SFM^{-1} = \frac{\theta }{\sigma^2 \sinh(\theta h)}
\begin{pmatrix}
\exp(\theta h) & -1 & \phantom{\exp(\theta h)}  &  &   \\
-1 & 2\cosh(\theta h) & -1 &    &   \\
\cdots &  &  \cdots &  &  \cdots \\
 & &   -1 & 2\cosh(\theta h) & -1 \\
 & & & -1 & \exp(\theta h)
\end{pmatrix}.
\]
\end{example}

Now that all the three examples have tridiagonal precision matrices, we naturally try to find the common feature in their covariance functions.

\subsection{Symmetric Tridiagonal Structure}

The key observation here is that the covariance functions in Examples \ref{example:BM}--\ref{example:OU} share the same form: 
\begin{equation}\label{eq:kernel}
k(x,y)=p(x)q(y)\ind_{\{x\leq y\}} + p(y)q(x)\ind_{\{x>y\}},
\end{equation}
for some functions $p$ and $q$, where $\ind_{\{\cdot\}}$ is the indicator function. Specifically, 
\begin{align*}
k_{\mathrm{BM}}(x,y) = & \min(x,y) =  x \ind_{\{x\leq y\}} + y \ind_{\{x>y\}},\\
k_{\mathrm{BB}}(x,y) = & \min(x,y)-xy = x(1-y) \ind_{\{x\leq y\}} + y(1-x) \ind_{\{x>y\}}, \\ 
k_{\mathrm{OU}}(x,y) = & \frac{\sigma^2}{2\theta} e^{-\theta|x-y|} = \frac{\sigma^2}{2\theta}\left[e^{\theta x}e^{-\theta y}\ind_{\{x\leq y\}} + e^{\theta y}e^{-\theta x}\ind_{\{x> y\}}\right].
\end{align*}

Therefore, we conjecture that for Gaussian processes with a 1-dimensional domain, a covariance function of form \eqref{eq:kernel} would yield tridiagonal precision matrices. This turns out to be true in general under mild conditions and the design points do not need to be equally spaced. We present the result below as Theorem \ref{theo:tridiag}. The proof is done by induction on $n$ and is based on explicit calculations. We will use the Laplace expansion for the determinant of a square matrix. This is a classic result in linear algebra; see \citet[\S0.3.1]{HornJohnson12}.

\begin{lemma}[Laplace Expansion]\label{lemma:laplace}
Let $\BFK=(k_{i,j})$ be a $n\times n$ matrix and $M_{i,j}$ be its $(i,j)$ minor, i.e., the determinant of the submatrix formed by deleting the $i^{\mathrm{th}}$ row and $j^{\mathrm{th}}$ column of $\BFK$. Then, 
\[|\BFK| = \sum_{\ell=1}^n (-1)^{i+\ell} k_{i,\ell} M_{i,\ell} = \sum_{\ell=1}^n (-1)^{\ell+j} k_{\ell,j} M_{\ell,j}.\]
\end{lemma}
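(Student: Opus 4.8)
The plan is to derive the row expansion from the Leibniz (permutation) formula for the determinant and then obtain the column expansion by transposition. Recall that $|\BFK| = \sum_{\sigma\in S_n}\mathrm{sgn}(\sigma)\prod_{m=1}^n k_{m,\sigma(m)}$, where $S_n$ is the symmetric group on $\{1,\ldots,n\}$. Fix a row index $i$. The first step is to partition $S_n$ according to the value $\ell \coloneqq \sigma(i)$: for each $\ell\in\{1,\ldots,n\}$, the permutations with $\sigma(i)=\ell$ are in natural bijection with the bijections from $\{1,\ldots,n\}\setminus\{i\}$ onto $\{1,\ldots,n\}\setminus\{\ell\}$. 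Collecting the terms with $\sigma(i)=\ell$, the common factor $k_{i,\ell}$ factors out, and what remains is, up to a global sign depending only on $i$ and $\ell$, the Leibniz expansion of the determinant of the submatrix obtained by deleting row $i$ and column $\ell$ — that is, $M_{i,\ell}$.

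The only step that needs care is the sign. I would verify that $\mathrm{sgn}(\sigma) = (-1)^{i+\ell}\,\mathrm{sgn}(\tilde\sigma)$, where $\tilde\sigma\in S_{n-1}$ is the permutation induced on the complementary index sets after an order-preserving relabeling onto $\{1,\ldots,n-1\}$. The cleanest way to see this is via multilinearity of the determinant in the $i$-th row: writing that row as $\sum_{\ell} k_{i,\ell}\,\BFe_\ell^\intercal$ gives $|\BFK| = \sum_{\ell} k_{i,\ell}\,|\BFK_{i\to\ell}|$, where $\BFK_{i\to\ell}$ denotes $\BFK$ with its $i$-th row replaced by the unit row vector $\BFe_\ell^\intercal$. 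One then evaluates $|\BFK_{i\to\ell}|$ by performing $i-1$ adjacent row interchanges and $\ell-1$ adjacent column interchanges to move the sole nonzero entry of that row into position $(1,1)$; this multiplies the determinant by $(-1)^{(i-1)+(\ell-1)}=(-1)^{i+\ell}$ and leaves a matrix whose first row is $(1,0,\ldots,0)$ and whose lower-right $(n-1)\times(n-1)$ block is exactly the submatrix defining $M_{i,\ell}$ (the relative order of the surviving rows and columns being unchanged). Expanding this last determinant along its first row, or invoking block-triangularity, yields $|\BFK_{i\to\ell}| = (-1)^{i+\ell}M_{i,\ell}$, and summing over $\ell$ gives the first identity.

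For the column expansion along a column $j$, I would simply apply the row-expansion identity already proved to $\BFK^\intercal$ along its $j$-th row, and then use $|\BFK^\intercal|=|\BFK|$ together with the observation that the $(j,\ell)$ minor of $\BFK^\intercal$ equals the $(\ell,j)$ minor of $\BFK$ (a submatrix of $\BFK^\intercal$ being the transpose of the corresponding submatrix of $\BFK$). Since the $(j,\ell)$ entry of $\BFK^\intercal$ is $k_{\ell,j}$, this yields $|\BFK| = \sum_{\ell=1}^n (-1)^{\ell+j}k_{\ell,j}M_{\ell,j}$, as claimed. The main obstacle, such as it is, lies entirely in the sign bookkeeping of the reduction step; the rest follows directly from the Leibniz formula (equivalently, from multilinearity and the alternating property of the determinant), so no new idea is required.
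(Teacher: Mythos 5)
Your proof is correct: the multilinearity argument in the $i$-th row, the sign count $(-1)^{(i-1)+(\ell-1)}=(-1)^{i+\ell}$ from the adjacent interchanges (which preserve the relative order of the surviving rows and columns, so the residual block is exactly the one defining $M_{i,\ell}$), and the passage to the column expansion via $\BFK^\intercal$ are all sound; the only point worth making explicit is that the final reduction $|\BFK_{i\to\ell}|=(-1)^{i+\ell}M_{i,\ell}$ rests on the Leibniz formula (only permutations fixing the first index contribute when the first row is $(1,0,\ldots,0)$) rather than on the expansion being proved, so there is no circularity. The paper itself offers no proof of this lemma --- it is stated as a classical fact with a citation to Horn and Johnson --- so there is nothing to compare against; your argument is the standard textbook derivation and fills that gap correctly.
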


To facilitate the presentation, we define several notations. Let $\CalX = \{x_1,\ldots,x_n\}$ denote a set of distinct points in $\Real$, with $x_1<\cdots<x_n$. Fixing a function $k(x,y)$ of the form \eqref{eq:kernel}, let $\BFK=\BFK(\CalX,\CalX)$ be the $n\times n$ matrix whose entry $(i,j)$ is $k(x_i,x_j)$. For two subsets $\CalX',\CalX''\subseteq \CalX$, we use $\BFK(\CalX',\CalX'')$ to denote the submatrix of $\BFK$ formed by keeping the rows and columns that correspond to $\CalX'$ and $\CalX''$, respectively. Finally, let $p_i=p(x_i)$ and $q_i=q(x_i)$, $i=1,\ldots,n$.

\begin{theorem}\label{theo:tridiag} 
Let $n\geq 3$. If $\BFK$ is nonsingular, then $\BFK^{-1}$ is a symmetric tridiagonal matrix. 
\end{theorem}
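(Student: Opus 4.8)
The plan is to show directly that the $(i,j)$ entry of $\BFK^{-1}$ vanishes whenever $|i-j|\geq 2$, and that $\BFK^{-1}$ is symmetric. Symmetry is immediate: $k(x,y)$ in \eqref{eq:kernel} is symmetric in its arguments, so $\BFK$ is symmetric, hence so is $\BFK^{-1}$. It therefore suffices to prove that $(\BFK^{-1})_{i,j}=0$ for $i<j$ with $j\geq i+2$. By the cofactor (adjugate) formula, $(\BFK^{-1})_{i,j} = (-1)^{i+j}M_{j,i}/|\BFK|$, where $M_{j,i}$ is the $(j,i)$ minor of $\BFK$, i.e. the determinant of the submatrix $\BFK(\CalX\setminus\{x_j\},\CalX\setminus\{x_i\})$. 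Since $\BFK$ is assumed nonsingular, the claim reduces to showing that this minor is zero whenever $j\geq i+2$.

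To see why the minor vanishes, I would exploit the separable structure of \eqref{eq:kernel}. Fix $i<j$ with $j-i\geq 2$, and consider the submatrix $\BFK(\CalX\setminus\{x_j\},\CalX\setminus\{x_i\})$. Look at its rows indexed by $x_a$ with $a\geq j$ (there is at least one such row since $x_n$ survives when $j<n$; the boundary case $j=n$ needs separate but easy handling) and its columns indexed by $x_b$ with $b\leq i$. For such $a$ and $b$ we have $x_a > x_b$, so by \eqref{eq:kernel} the entry is $p(x_b)q(x_a)$ — a rank-one pattern. More usefully: for a row index $x_a$ with $a>i$ and a column index $x_b$ with $b<j$, whenever $a>b$ the entry is $p_b q_a$, and the point is that the columns to the left of the deleted column $x_i$ and the rows below the deleted row $x_j$ interact only through this product form. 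Concretely, I would argue that deleting column $i$ and row $j$ leaves a ``gap'' that makes two blocks of the matrix proportional: the rows $x_{i+1},\ldots,x_{j-1}$ (which are present, since only rows $\neq x_j$ survive and $i+1,\ldots,j-1$ are all $\neq j$... wait, row $x_i$ is present too) — more carefully, the rows with index in $\{i,i+1,\ldots,j-1\}$ all see columns $x_b$ with $b\leq i$ via the single formula $p_b q_a$ (since for these $a\geq i\geq b$, with the $a=b=i$ diagonal entry also equal to $p_i q_i$), so the restriction of the minor matrix to those rows and the columns $x_1,\ldots,x_i$ has rank one. Combined with an analogous rank-one block on the other side, a Laplace/Schur expansion splits the determinant into a product containing a factor that is a determinant of a matrix with two proportional rows, hence zero.

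Following the paper's stated strategy, I would actually organize this as an induction on $n$ using Lemma \ref{lemma:laplace}: expand $|\BFK|$ and the relevant minors along the last row or column, where the structure of \eqref{eq:kernel} makes $k(x_n,x_j)=p(x_j)q(x_n)$ for all $j\leq n$, so the last row is $q(x_n)\cdot(p_1,\ldots,p_n)$ up to the diagonal term — this immediately introduces proportionality relations that kill most cofactors. One reduces the $n\times n$ case to the $(n-1)\times(n-1)$ case (the leading principal submatrix is again of the form \eqref{eq:kernel} with the same $p,q$ restricted to $\{x_1,\ldots,x_{n-1}\}$), and tracks the two new entries $(\BFK^{-1})_{n-1,n}$ and $(\BFK^{-1})_{n,n}$ explicitly. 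The base case $n=3$ is a direct $3\times 3$ computation.

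The main obstacle I anticipate is bookkeeping rather than conceptual depth: one must be careful that the leading principal submatrices stay nonsingular (so the inductive hypothesis applies), handle the boundary rows/columns ($i=1$, $j=n$) where the rank-one block degenerates, and verify that the proportionality argument genuinely produces a repeated row/column in the cofactor matrices for \emph{all} pairs with $|i-j|\geq 2$ simultaneously, not just for a convenient few. Getting the signs and index ranges right in the Laplace expansion, and ensuring the induction passes the nonsingularity hypothesis cleanly, is where the real work lies.
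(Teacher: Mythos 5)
Your setup (symmetry from $k(x,y)=k(y,x)$, reduction via the adjugate formula to showing the minor $M_{j,i}$ vanishes for $j\ge i+2$) matches the paper's, but neither of the two mechanisms you offer for killing that minor works as described, and this is where the substance of the theorem lies. Write $B=\BFK(\CalX\setminus\{x_j\},\CalX\setminus\{x_i\})$, an $(n-1)\times(n-1)$ matrix. The principle you are implicitly invoking is: an $N\times N$ matrix containing an $s\times t$ block of rank at most $1$ is singular provided $s+t\ge N+2$. Every block you name falls short of this threshold. Rows $a\ge j$ against columns $b\le i$ gives $s+t=(n-j)+(i-1)\le n-3$; rows $\{i,\dots,j-1\}$ against columns $\{1,\dots,i-1\}$ gives $s+t=j-1\le n-1$; even the full lower-left product block (rows $\{i,\dots,n\}\setminus\{x_j\}$, columns $\{1,\dots,i-1\}$) gives $s+t=n-1=N$, two short of $N+2$. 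And the claim that ``a Laplace/Schur expansion splits the determinant into a product'' with a singular factor is not a valid step: Laplace expansion along a set of rows yields a \emph{sum} of products of complementary minors, and you give no reason why every term vanishes. The block that does work is the one straddling an index strictly between $i$ and $j$: since $j\ge i+2$, the rows $\{x_1,\dots,x_{i+1}\}$ all survive the deletion of $x_j$, the columns $\{x_{i+1},\dots,x_n\}$ all survive the deletion of $x_i$, and for $a\le i+1\le b$ every entry is $p_aq_b$; so $B$ contains a rank-one block of size $(i+1)\times(n-i)$ with $(i+1)+(n-i)=n+1=N+2$, and $|B|=0$ in one stroke. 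The existence of such a straddling index is exactly what $j-i\ge 2$ buys, and it is not the argument you wrote down.

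Your fallback induction (peel off $x_n$, use that the last row is $q_n(p_1,\dots,p_n)$, and pass to the leading principal submatrix) has the gap you yourself flag, and it is a real one: that route needs the inverse of the leading principal $(n-1)\times(n-1)$ submatrix, and nonsingularity of $\BFK$ does not imply nonsingularity of that submatrix --- by Proposition \ref{prop:determinant} the order-$\ell$ leading principal minor equals $p_1q_\ell\prod_{i=2}^{\ell}(p_iq_{i-1}-p_{i-1}q_i)$, which can vanish (when $q_\ell=0$) even though $|\BFK|\neq 0$. The paper avoids this entirely by never inverting a submatrix: it proves $M_{i,j}=0$ for $j\ge i+2$ as an \emph{identity} in the $p$'s and $q$'s, by induction on $n$, expanding the minor first along the row of $x_j$ (the induction hypothesis kills all but the terms $\ell=i-1,i,i+1$) and then along the first row (where proportionality of two columns lets one factor out a scalar and reduce to a smaller minor of the same shape). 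To repair your version you would either have to restrict to configurations where all leading principal submatrices are nonsingular and then remove the restriction by a continuity argument, or switch to proving the minor identities directly, as the paper does, or adopt the rank-one-block argument above.
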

\begin{proof}  Since $k(x,y)=k(y,x)$, the symmetry of $\BFK$ is straightforward, and thus $\BFK^{-1}$ is symmetric. 

To prove that $\BFK^{-1}$ is tridiagonal, i.e., $(\BFK^{-1})_{i,j}=0$ if $|j-i|\geq 2$, we use the relationship between the inverse and the minors of a square matrix \citep[\S0.8.2]{HornJohnson12}, 
\begin{equation}\label{eq:adjugate}
(\BFK^{-1})_{i,j} = \frac{1}{|\BFK|} (-1)^{i+j} M_{j,i},
\end{equation}
where $M_{j,i}$ is the $(j,i)$ minor of $\BFK$. Hence, it suffices to show that $M_{i,j}=0$ if $|j-i|\geq 2$, or equivalently,
\begin{equation}\label{eq:minor}
|\BFK(\CalX\setminus\{x_i\}, \CalX\setminus\{x_j\})|=0, \quad \mbox{if }j-i\geq 2,
\end{equation}
because of the symmetry of $\BFK$.  We prove \eqref{eq:minor} by induction on $n$. For $n=3$, 
\begin{equation*}
\BFK(\CalX,\CalX)=\begin{pmatrix} 
     p_1q_1 & p_1q_2 & p_1q_3\\
     p_1q_2 & p_2q_2 & p_2q_3\\
     p_1q_3 & p_2q_3 & p_3q_3\\
    \end{pmatrix}.
\end{equation*}
Then, 
\begin{equation*}
M_{1,3}=\begin{vmatrix} 
     p_1q_2 & p_2q_2 \\
     p_1q_3 & p_2q_3 \\
    \end{vmatrix}=0.
\end{equation*}

Now we suppose that \eqref{eq:minor} holds for any $n\leq N-1$. Then, for $n=N$ and $j\geq i+2$,
\begin{align}
M_{i,j} =& |\BFK(\CalX\setminus\{x_i\}, \CalX\setminus\{x_j\})| \nonumber\\
=& \sum_{\ell<j}  (-1)^{(j-1)+\ell} k(x_j,x_\ell) |\BFK(\CalX\setminus\{x_i,x_j\}, \CalX\setminus\{x_j,x_\ell\})| \label{eq:minor_expan_1} \\  
+ &\sum_{\ell>j}  (-1)^{(j-1)+(\ell-1)} k(x_j,x_\ell) |\BFK(\CalX\setminus\{x_i,x_j\}, \CalX\setminus\{x_j,x_\ell\})| \nonumber  
\end{align}
where the second equality follows from the Laplace expansion along the row of the submatrix $\BFK(\CalX\setminus\{x_i\}, \CalX\setminus\{x_j\})$ that corresponds to $x_j$. Here, $(j-1)$ and $(\ell-1)$ in the exponents reflect the necessary changes in the indices of the rows and columns of submatrix $\BFK(\CalX\setminus\{x_i\}, \CalX\setminus\{x_j\})$.

Let $\CalX'=\CalX\setminus\{x_j\}$. Then, the submatrix that appears in the Laplace expansion in  \eqref{eq:minor_expan_1} can be rewritten as $\BFK(\CalX\setminus\{x_i,x_j\}, \CalX\setminus\{x_j,x_\ell\})=\BFK(\CalX'\setminus\{x_i\}, \CalX'\setminus\{x_\ell\})$. Hence, its determinant is the $(i,\ell)$ minor of $\BFK(\CalX', \CalX')$ if $\ell<j$, or the $(i, \ell-1)$ minor if $\ell>j$. It follows that $\BFK(\CalX'\setminus\{x_i\}, \CalX'\setminus\{x_\ell\})=0$ if $|\ell-i|\geq 2$ and $\ell<j$, or if $|\ell-1-i|\geq 2$ and $\ell>j$, by the induction assumption. Therefore, \eqref{eq:minor_expan_1} can be simplified to 
\begin{equation}\label{eq:minor3}
M_{i,j} = \sum_{\ell=i-1}^{i+1} (-1)^{(j-1)+\ell} k(x_j,x_\ell) |\BFK(\CalX'\setminus\{x_i\}, \CalX'\setminus\{x_\ell\})|, 
\end{equation}
since $j\geq i+2$. Clearly, it suffices to show
\begin{equation}\label{eq:minor2}
\sum_{\ell=i-1}^{i+1} (-1)^{\ell} k(x_j,x_\ell) |\BFK(\CalX'\setminus\{x_i\}, \CalX'\setminus\{x_\ell\})| = 0,
\end{equation}
in order to prove \eqref{eq:minor}. To that end, we further apply the Laplace expansion. 

We now assume that $i\geq 4$. The cases $i=1,2,3$ can be proved in a similar fashion. For $\ell=i-1$, $\BFK(\CalX'\setminus\{x_i\}, \CalX'\setminus\{x_\ell\})$ is 
\setcounter{MaxMatrixCols}{20}
\begin{equation}\label{eq:bigmatrix}
\kbordermatrix{
 & x_1  & x_2 & \cdots &  x_{i-2} & x_{i} & x_{i+1} & \cdots  & x_{j-1} & x_{j+1}&\cdots & x_N \\
x_1 & p_1q_1  & p_1q_2 & \cdots &  p_1q_{i-2} & p_1q_{i} & p_1q_{i+1} & \cdots  & p_1q_{j-1} & p_1q_{j+1}&\cdots & p_1q_N \\
x_2 & p_1q_2 & p_2q_2 & \cdots & p_2q_{i-2} & p_2q_{i} & p_2q_{i+1} & \cdots & p_2q_{j-1} & p_2q_{j+1} &\cdots & p_2q_N \\
\vdots& &  & \vdots &  &   &   & \vdots & & & \vdots &\\
x_{i-1} &p_1q_{i-1} & p_2q_{i-1} & \cdots & p_{i-2}q_{i-1} & p_{i-1}q_{i} & p_{i-1}q_{i+1} & \cdots & p_{i-1}q_{j-1} & p_{i-1}q_{j+1} &\cdots & p_{i-1}q_N \\
x_{i+1} &p_1q_{i+1} & p_2q_{i+1} & \cdots & p_{i-2}q_{i+1} & p_{i}q_{i+1} & p_{i+1}q_{i+1} & \cdots & p_{i+1}q_{j-1} & p_{i+1}q_{j+1} &\cdots & p_{i+1}q_N \\
\vdots & &  & \vdots &  &   &   & \vdots & & & \vdots &\\
x_{j-1} &p_1q_{j-1} & p_2q_{j-1} & \cdots & p_{i-2}q_{j-1} & p_{i}q_{j-1} & p_{i+1}q_{j-1} & \cdots & p_{j-1}q_{j-1} & p_{j-1}q_{j+1} &\cdots & p_{j-1}q_N \\
x_{j+1} &p_1q_{j+1} & p_2q_{j+1} & \cdots & p_{i-2}q_{j+1} & p_{i}q_{j+1} & p_{i+1}q_{j+1} & \cdots & p_{j-1}q_{j+1} & p_{j+1}q_{j+1} &\cdots & p_{j+1}q_N \\
\vdots & &  & \vdots &  &   &   & \vdots & & & \vdots &\\
x_N & p_1q_N & p_2q_N & \cdots & p_{i-2}q_N & p_{i}q_N & p_{i+1}q_N & \cdots & p_{j-1}q_N & p_{j+1}q_N &\cdots & p_Nq_N
}.
\end{equation}

With $i\geq 4$, the transpose of the submatrix of $\BFK(\CalX'\setminus\{x_i\}, \CalX'\setminus\{x_{i-1}\})$ formed by deleting the first row and keeping the first two columns in \eqref{eq:bigmatrix} is  
\[
\BFK(\CalX'\setminus\{x_1,x_i\}, \{x_1,x_2\})^\intercal = 
\kbordermatrix{ 
& x_2 & \cdots & x_{i-1} & x_{i+1} & \cdots & x_{j-1} & x_{j+1} &\cdots & x_N \\
x_1 & p_1q_2 & \cdots & p_1q_{i-1} & p_1q_{i+1} & \cdots & p_1q_{j-1} & p_1q_{j+1} & \cdots & p_1q_N\\
x_2 & p_2q_2 & \cdots & p_2q_{i-1} & p_2q_{i+1} & \cdots & p_2q_{j-1} & p_2q_{j+1} & \cdots & p_2q_N
},
\]
whose rows are linear dependent, obviously. Hence, if we apply the Laplace expansion to \eqref{eq:bigmatrix} along the first row, then only the first two terms in the expansion are nonzero. This is because the minors in the other terms all involve two linearly dependent columns, thereby being zero. Hence,  
\begin{equation}\label{eq:expansion2}
\begin{aligned}
& |\BFK(\CalX'\setminus\{x_i\}, \CalX'\setminus\{x_{i-1}\})| \\ =&   p_1q_1 |\BFK(\CalX'\setminus\{x_i,x_1\},\CalX'\setminus\{x_{i-1},x_1\})| - p_1q_2|\BFK(\CalX'\setminus\{x_i,x_1\},\CalX'\setminus\{x_{i-1},x_2\})|
\end{aligned}
\end{equation}

We next consider two cases, $p_2\neq 0$ and $p_2=0$, separately. 

\textbf{Case 1 ($p_2\neq 0$).} Notice that $\BFK(\CalX'\setminus\{x_i,x_1\},\CalX'\setminus\{x_{i-1},x_1\})$ and $\BFK(\CalX'\setminus\{x_i,x_1\},\CalX'\setminus\{x_{i-1},x_2\})$ differ by only their first columns, and that the first column of the latter is a multiple of that of the former. In particular,
\begin{equation}\label{eq:multiple}
|\BFK(\CalX'\setminus\{x_i,x_1\},\CalX'\setminus\{x_{i-1},x_2\})| = \frac{p_1}{p_2} |\BFK(\CalX'\setminus\{x_i,x_1\},\CalX'\setminus\{x_{i-1},x_1\})|,
\end{equation}
and thus \eqref{eq:expansion2} becomes, for $\ell=i-1$,
\begin{equation}\label{eq:expansion3}
\begin{aligned}
|\BFK(\CalX'\setminus\{x_i\}, \CalX'\setminus\{x_\ell\})|  =    \left(p_1q_1 - \frac{p_1^2q_2}{p_2} \right) |\BFK(\CalX'\setminus\{x_i,x_1\},\CalX'\setminus\{x_\ell,x_1\})|.
\end{aligned}
\end{equation}

One can check easily that \eqref{eq:expansion3} holds for  $\ell=i,i+1$ as well. Then, the left-hand-side of \eqref{eq:minor2} becomes 
\begin{align}
& \sum_{\ell=i-1}^{i+1} (-1)^{\ell} k(x_j,x_\ell) |\BFK(\CalX'\setminus\{x_i\}, \CalX'\setminus\{x_\ell\})| \nonumber \\
=&  
 \left(p_1q_1 - \frac{p_1^2q_2}{p_2} \right) \sum_{\ell=i-1}^{i+1}(-1)^\ell k(x_j,x_\ell) 
 |\BFK(\CalX'\setminus\{x_i,x_1\},\CalX'\setminus\{x_\ell,x_1\})|. \label{eq:sum} 
\end{align}

Let $\CalX''=\CalX'\setminus\{x_1\}$. Then, for the summation in \eqref{eq:sum}, 
\[\sum_{\ell=i-1}^{i+1}(-1)^\ell k(x_j,x_\ell) 
 |\BFK(\CalX'\setminus\{x_i,x_1\},\CalX'\setminus\{x_\ell,x_1\})| = \sum_{\ell=i-1}^{i+1}(-1)^\ell k(x_j,x_\ell) 
 |\BFK(\CalX''\setminus\{x_i\},\CalX''\setminus\{x_\ell\})|,\]
which equals the $(i,j)$ minor of $\BFK(\CalX'', \CalX'')$ multiplied by $(-1)^{j-1}$, following the argument leading to \eqref{eq:minor3}. But the $(i,j)$ minor of $\BFK(\CalX'', \CalX'')$ is 0 by the induction assumption, since $j\geq i+2$. Therefore, \eqref{eq:sum} equals 0, which proves \eqref{eq:minor2}.

\textbf{Case 2 ($p_2= 0$).} It is easy to see that $\BFK(\CalX'\setminus\{x_i,x_1\},\CalX'\setminus\{x_{i-1},x_1\})$ is singular, since its first column is all zeros. Hence, \eqref{eq:expansion2} becomes
\[
|\BFK(\CalX'\setminus\{x_i\}, \CalX'\setminus\{x_{i-1}\})|  = - p_1q_2|\BFK(\CalX'\setminus\{x_i,x_1\},\CalX'\setminus\{x_{i-1},x_2\})|.
\]
Since the first row of $\BFK(\CalX'\setminus\{x_i,x_1\},\CalX'\setminus\{x_{i-1},x_2\})$ is now $(p_1q_2,0,\ldots,0)$, we apply the Laplace expansion to this row to obtain
\begin{equation}\label{eq:expansion4}
|\BFK(\CalX'\setminus\{x_i\}, \CalX'\setminus\{x_\ell\})| = - p_1^2q_2^2|\BFK(\CalX'\setminus\{x_i,x_1,x_2\},\CalX'\setminus\{x_\ell,x_2,x_1\})|,
\end{equation}
for $\ell=i-1$. Likewise, we can show that  \eqref{eq:expansion4} holds for  $\ell=i,i+1$ as well. Then, the left-hand-side of \eqref{eq:minor2} becomes, letting $\CalX'''=\CalX\setminus\{x_1,x_2\}$, 
\[\sum_{\ell=i-1}^{i+1} (-1)^{\ell} k(x_j,x_\ell) |\BFK(\CalX'\setminus\{x_i\}, \CalX'\setminus\{x_\ell\})|
= -p_1^2q_2^2\sum_{\ell=i-1}^{i+1} (-1)^{\ell} k(x_j,x_\ell) |\BFK(\CalX'''\setminus\{x_i\}, \CalX'''\setminus\{x_\ell\})|.
\]
Then, we can prove \eqref{eq:minor2} using the same argument as the last paragraph of Case 1.   
\end{proof}

Provided that $\BFK$ is nonsingular,  not only can we show that $\BFK^{-1}$ is symmetric and tridiagonal, but also we can calculate the nonzero entries of $\BFK^{-1}$ analytically. The fact that $\BFK^{-1}$ is analytically invertible makes $\BFK$ highly computationally tractable. Before  presenting the analytical expressions of the nonzero entries of $\BFK^{-1}$, we first calculate the determinant of $\BFK$. 

\begin{proposition}\label{prop:determinant}
For $n\geq 2$, 
\begin{equation}\label{eq:G_det}
|\BFK(\CalX,\CalX)|=p_1q_n\prod_{i=2}^n(p_iq_{i-1}-p_{i-1}q_i).
\end{equation}
\end{proposition}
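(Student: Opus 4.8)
The plan is to establish \eqref{eq:G_det} by induction on $n$, peeling off the rightmost design point $x_n$ at each step. It is convenient to first observe that each entry of $\BFK(\CalX,\CalX)$ is the monomial $p_{\min(i,j)}q_{\max(i,j)}$ in the $2n$ variables $p_1,\dots,p_n,q_1,\dots,q_n$, so that both sides of \eqref{eq:G_det} are polynomials in these variables; hence it is enough to prove the identity on the dense set $\{(p,q):q_1\cdots q_n\neq 0\}$, and I will assume $q_i\neq 0$ for all $i$ from now on.

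For the base case $n=2$ one computes directly $|\BFK(\CalX,\CalX)|=p_1q_1p_2q_2-p_1^2q_2^2=p_1q_2(p_2q_1-p_1q_2)$, which is \eqref{eq:G_det}. For the inductive step, assume \eqref{eq:G_det} holds for any set of $n-1$ points. The key structural observation is that, because $x_1<\cdots<x_n$, the last row of $\BFK(\CalX,\CalX)$ is $q_n(p_1,\dots,p_{n-1},p_n)$, while row $n-1$ restricted to its first $n-1$ entries is $q_{n-1}(p_1,\dots,p_{n-1})$ --- the two truncated rows are parallel. Therefore, subtracting $q_n/q_{n-1}$ times row $n-1$ from row $n$ leaves the determinant unchanged but turns the last row into $\bigl(0,\dots,0,\,(q_n/q_{n-1})(p_nq_{n-1}-p_{n-1}q_n)\bigr)$. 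Expanding along that row via Lemma \ref{lemma:laplace} gives
\[
|\BFK(\CalX,\CalX)|=\frac{q_n}{q_{n-1}}\,(p_nq_{n-1}-p_{n-1}q_n)\,\bigl|\BFK(\CalX\setminus\{x_n\},\CalX\setminus\{x_n\})\bigr|,
\]
since the cofactor of the $(n,n)$ entry is exactly $\BFK$ for the point set $\{x_1,\dots,x_{n-1}\}$. Substituting the induction hypothesis, $\bigl|\BFK(\CalX\setminus\{x_n\},\CalX\setminus\{x_n\})\bigr|=p_1q_{n-1}\prod_{i=2}^{n-1}(p_iq_{i-1}-p_{i-1}q_i)$, and simplifying yields \eqref{eq:G_det}.

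I do not anticipate a serious obstacle: once the entries are read off correctly, the inductive step is a one-line cofactor expansion, and the only point requiring a little care is the degenerate case in which some $q_i$ vanishes, which is disposed of by the polynomial-identity (equivalently, continuity) argument above. If one prefers to avoid division altogether, the same computation carried out with the row replacement $\mathrm{R}_n\mapsto q_{n-1}\mathrm{R}_n-q_n\mathrm{R}_{n-1}$ gives the clean recursion $q_{n-1}|\BFK(\CalX,\CalX)|=q_n(p_nq_{n-1}-p_{n-1}q_n)\,\bigl|\BFK(\CalX\setminus\{x_n\},\CalX\setminus\{x_n\})\bigr|$ as an identity in $\Real[p_1,\dots,q_n]$; since this ring is an integral domain and $q_{n-1}\neq 0$ in it, the claimed formula for $n$ follows from that for $n-1$. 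A third option, closer in spirit to the proof of Theorem \ref{theo:tridiag}, is to Laplace-expand $|\BFK(\CalX,\CalX)|$ along its last row from the outset: by \eqref{eq:minor} every minor $M_{n,\ell}$ with $\ell\leq n-2$ vanishes, so only the terms $\ell=n-1,n$ remain, and identifying those two minors (one equals $\bigl|\BFK(\CalX\setminus\{x_n\},\CalX\setminus\{x_n\})\bigr|$, the other is $q_n/q_{n-1}$ times it by column proportionality) reproduces the same recursion.
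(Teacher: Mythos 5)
Your proof is correct, and it takes a genuinely cleaner route than the paper's. The paper also inducts on $n$, but it peels off $x_1$ by Laplace-expanding along the \emph{first} row; to reduce that expansion to two terms it must invoke the minor-vanishing identity \eqref{eq:minor} established inside the proof of Theorem \ref{theo:tridiag}, and it then needs a column-proportionality step together with a separate case analysis on $p_2\neq 0$ versus $p_2=0$. Your main argument instead peels off $x_n$ with the single row operation $\mathrm{R}_n\mapsto \mathrm{R}_n-(q_n/q_{n-1})\mathrm{R}_{n-1}$, which annihilates all but the $(n,n)$ entry of the last row and yields a one-term recursion directly; it is self-contained (no appeal to Theorem \ref{theo:tridiag}), and the only degeneracy, $q_{n-1}=0$, is disposed of once and for all by the polynomial-identity/density observation (or by your division-free variant in the integral domain $\Real[p_1,\ldots,q_n]$). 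What the paper's approach buys is uniformity: the same first-row expansion plus \eqref{eq:minor} machinery is reused verbatim in Proposition \ref{prop:minor} and implicitly in Theorem \ref{theo:inverse}, so the case split on $p_2$ is a one-time cost amortized over several proofs. What your approach buys is independence and brevity --- Proposition \ref{prop:determinant} no longer depends on Theorem \ref{theo:tridiag}, which is conceptually tidier since the determinant formula is the more elementary fact. Your third variant (last-row Laplace expansion plus \eqref{eq:minor}) is essentially the paper's argument mirrored, and correctly identified as such.
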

\begin{proof} 
We prove \eqref{eq:G_det} by induction on $n$. The base case $n=2$ is straightforward: 
\[|\BFK(\CalX,\CalX)| =     
\begin{vmatrix}
p_1q_1 & p_1q_2 \\
p_1q_2 & p_2q_2
\end{vmatrix}
= p_1q_1p_2q_2- p_1^2q_2^2 = p_1q_2(p_2q_1-p_1q_2).
\]

Now we suppose that \eqref{eq:G_det} holds for any $n\leq N-1$. Then, for $n=N$, applying the Laplace expansion to the first row of $\BFK(\CalX,\CalX)$,
\begin{align}
|\BFK(\CalX,\CalX)|&=\sum_{\ell=1}^N(-1)^{1+\ell} p_1q_\ell |\BFK(\CalX\setminus\{x_1\},\CalX\setminus\{x_\ell\})| \nonumber \\
&= p_1q_1|\BFK(\CalX\setminus\{x_1\},\CalX\setminus\{x_1\})|-p_1q_2|\BFK(\CalX\setminus\{x_1\},\CalX\setminus\{x_2\})|, \label{eq:expansion5}
\end{align}
where the second equality follows from \eqref{eq:minor}. From the induction assumption, 
\begin{equation}\label{eq:G_det_induction}
|\BFK(\CalX\setminus\{x_1\},\CalX\setminus\{x_1\})|=p_2q_N\prod_{i=3}^N(p_iq_{i-1}-p_{i-1}q_i).
\end{equation}

Notice that 
\begin{equation*}
\BFK(\CalX\setminus\{x_1\},\CalX\setminus\{x_1\})=
\kbordermatrix{
 & x_2 & x_3 & \cdots & x_N \\
x_2 & p_2q_2 & p_2q_3 & \dots & p_2q_N\\
x_3 & p_2q_3 & p_3q_3 & \dots & p_3q_N\\
\vdots& & & \vdots & \\
x_N & p_2q_N & p_3q_N & \dots & p_Nq_N\\
},
\end{equation*}
and 
\begin{equation*}
\BFK(\CalX\setminus\{x_1\},\CalX\setminus\{x_2\})=
\kbordermatrix{
 & x_1 & x_3 & \cdots & x_N \\
x_2 & p_1q_2 & p_2q_3 & \dots & p_2q_N\\
x_3 & p_1q_3 & p_3q_3 & \dots & p_3q_N\\
\vdots& & & \vdots & \\
x_N & p_1q_N & p_3q_N & \dots & p_Nq_N\\
}.
\end{equation*}
Clearly, the above two matrices differ by only their first columns,  and the first column of one matrix is a multiple of the other. Hence, if $p_2\neq 0$, then $|\BFK(\CalX\setminus\{x_1\},\CalX\setminus\{x_2\})| = \frac{p_1}{p_2}|\BFK(\CalX\setminus\{x_1\},\CalX\setminus\{x_1\})|$. Thus, by \eqref{eq:expansion5} and \eqref{eq:G_det_induction},
\begin{align*}
|\BFK(\CalX,\CalX)|=&\left(p_1q_1-\frac{p_1^2q_2}{p_2}\right) |\BFK(\CalX\setminus\{x_1\},\CalX\setminus\{x_1\})| \\
=& \left(p_1q_1-\frac{p_1^2q_2}{p_2}\right) p_2q_N\prod_{i=3}^N(p_iq_{i-1}-p_{i-1}q_i)  
= p_1q_N \prod_{i=2}^N(p_iq_{i-1}-p_{i-1}q_i).
\end{align*}

On the other hand, if $p_2=0$, then by \eqref{eq:expansion5} and \eqref{eq:G_det_induction},
\begin{align*}
|\BFK(\CalX,\CalX)|= & -p_1q_2|\BFK(\CalX\setminus\{x_1\},\CalX\setminus\{x_2\})|\\ 
=& -p_1q_2 \cdot p_1q_2|\BFK(\CalX\setminus\{x_1,x_2\},\CalX\setminus\{x_2,x_1\})|\\ 
=& -p_1^2q_2^2  p_3q_N\prod_{i=4}^N(p_iq_{i-1}-p_{i-1}q_i),
\end{align*}
where the second equality follows from the Laplace expansion along the first row of $\BFK(\CalX\setminus\{x_1\},\CalX\setminus\{x_2\})$, whereas the last equality from the induction assumption. At last, notice that with $p_2=0$,
\begin{align*}
p_1q_N\prod_{i=2}^N(p_iq_{i-1}-p_{i-1}q_i) =& p_1q_N (p_2q_1-p_1q_2)(p_3q_2-p_2q_3)\prod_{i=4}^N(p_iq_{i-1}-p_{i-1}q_i)\\
=& -p_1^2q_2^2  p_3q_N\prod_{i=4}^N(p_iq_{i-1}-p_{i-1}q_i).
\end{align*}
Therefore, (\ref{eq:G_det}) holds for $n=N$.  \hfill$\Box$ 
\end{proof}

By using the Laplace expansion and mathematical induction in a similar fashion, we can also prove the following result but defer the proof to Appendix \ref{app:A}.
\begin{proposition}\label{prop:minor}
For $n\geq 2$ and $2\leq i\leq n$,
\[|\BFK(\CalX\setminus\{x_{i-1}\}, \CalX\setminus\{x_i\})| =
p_1q_n\prod_{j=2,j\neq i}^n (p_j q_{j-1}-p_{j-1}q_j).
 \]
\end{proposition}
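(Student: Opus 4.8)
The plan is to avoid a fresh induction and instead reduce the claim to Proposition~\ref{prop:determinant} by comparing $\BFT\coloneqq\BFK(\CalX\setminus\{x_{i-1}\},\CalX\setminus\{x_i\})$ with the \emph{principal} submatrix $\BFP\coloneqq\BFK(\CalX\setminus\{x_{i-1}\},\CalX\setminus\{x_{i-1}\})$. The point is that $\BFP$ is itself a kernel matrix of the form \eqref{eq:kernel}: it equals $\BFK(\CalX',\CalX')$ for the ordered set $\CalX'=\{x_1,\dots,x_{i-2},x_i,\dots,x_n\}$ with the \emph{same} $p,q$, so Proposition~\ref{prop:determinant} applies to it verbatim and yields
\[
|\BFP|=p_1q_n\,(p_iq_{i-2}-p_{i-2}q_i)\prod_{j=2}^{i-2}(p_jq_{j-1}-p_{j-1}q_j)\prod_{j=i+1}^{n}(p_jq_{j-1}-p_{j-1}q_j),
\]
the ``bridging'' factor $p_iq_{i-2}-p_{i-2}q_i$ appearing precisely because $x_{i-2}$ and $x_i$ are consecutive in $\CalX'$.

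The key observation is that $\BFT$ and $\BFP$ agree in all $n-1$ columns except in position $i-1$, where $\BFP$ carries $(k(x_a,x_i))_{a\ne i-1}$ while $\BFT$ carries $(k(x_a,x_{i-1}))_{a\ne i-1}$. I would then exploit the two-piece structure of a kernel column: for a fixed node $x_m$, the entry $k(x_a,x_m)$ equals $p_aq_m$ when $a\le m$ and $p_mq_a$ when $a\ge m$. Applying this for $m\in\{i-2,i-1,i\}$ and solving the resulting $2\times2$ linear system gives the identity, valid for every $a\in\{1,\dots,n\}\setminus\{i-1\}$,
\[
(p_iq_{i-2}-p_{i-2}q_i)\,k(x_a,x_{i-1})=(p_iq_{i-1}-p_{i-1}q_i)\,k(x_a,x_{i-2})+(p_{i-1}q_{i-2}-p_{i-2}q_{i-1})\,k(x_a,x_i),
\]
i.e.\ the differing column of $(p_iq_{i-2}-p_{i-2}q_i)\BFT$ is an explicit linear combination of the $x_{i-2}$-column and the $x_i$-column of $\BFP$. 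Checking this identity just means verifying it on the two ranges $a\le i-2$ and $a\ge i$ separately, where both sides collapse to one-line algebraic identities; crucially the row $a=i-1$ is never invoked, which is exactly what legitimizes the reduction.

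To finish, expand $|\BFT|$ by multilinearity in its $(i-1)$-th column after scaling that column by $p_iq_{i-2}-p_{i-2}q_i$: of the two resulting determinants, one has the $x_{i-2}$-column of $\BFP$ sitting in both positions $i-2$ and $i-1$ and hence vanishes, while the other is exactly $|\BFP|$. This gives $(p_iq_{i-2}-p_{i-2}q_i)\,|\BFT|=(p_{i-1}q_{i-2}-p_{i-2}q_{i-1})\,|\BFP|$; substituting the formula for $|\BFP|$ and cancelling the common factor $p_iq_{i-2}-p_{i-2}q_i$ yields $|\BFT|=p_1q_n\prod_{j=2,\,j\ne i}^{n}(p_jq_{j-1}-p_{j-1}q_j)$, since $p_{i-1}q_{i-2}-p_{i-2}q_{i-1}$ supplies the missing $j=i-1$ factor. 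The edge cases are light: $n=2$ forces $i=2$ and is a $1\times1$ computation; for $i=2$ one replaces the column identity by the simpler $p_2\cdot(\text{first column of }\BFT)=p_1\cdot(\text{first column of }\BFP)$ and applies Proposition~\ref{prop:determinant} to $\{x_2,\dots,x_n\}$; and $i=n$ is already covered generically (the trailing product is empty).

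The one genuinely delicate point, and the step I expect to be the main obstacle, is the cancellation of $p_iq_{i-2}-p_{i-2}q_i$ when this quantity happens to vanish (the proposition assumes no nonsingularity, so this cannot be excluded). I would handle it as the paper does elsewhere: either split into the cases $p_iq_{i-2}-p_{i-2}q_i\neq0$ and $p_iq_{i-2}-p_{i-2}q_i=0$, re-deriving the column relation degenerately in the second case as in Case~2 of the proof of Theorem~\ref{theo:tridiag}; or, more cleanly, observe that $|\BFT|$, $|\BFP|$ and the claimed product are all polynomials in $p_1,\dots,p_n,q_1,\dots,q_n$ (each entry $k(x_a,x_b)$ is the monomial $p_{\min(a,b)}q_{\max(a,b)}$), so the derived relation says $(p_iq_{i-2}-p_{i-2}q_i)\big(|\BFT|-p_1q_n\prod_{j\ne i}(p_jq_{j-1}-p_{j-1}q_j)\big)=0$ in the integral domain $\Real[p_1,\dots,p_n,q_1,\dots,q_n]$, where the nonzero factor $p_iq_{i-2}-p_{i-2}q_i$ may be cancelled outright. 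One could alternatively mimic the paper's template directly---induct on $n$ and Laplace-expand along the first row---but the column-reduction route above seems shorter and sidesteps re-running the full Case~1 / Case~2 machinery.
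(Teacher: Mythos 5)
Your proof is correct, and it takes a genuinely different route from the paper's. The paper proves Proposition~\ref{prop:minor} by a fresh induction on $n$, Laplace-expanding along the first row and splitting into the cases $p_2\neq 0$ and $p_2=0$ (with $i=2$ treated separately). You instead reduce the off-diagonal minor $\BFT=\BFK(\CalX\setminus\{x_{i-1}\},\CalX\setminus\{x_i\})$ to the principal minor $\BFP=\BFK(\CalX\setminus\{x_{i-1}\},\CalX\setminus\{x_{i-1}\})$, which is itself a kernel matrix of the form \eqref{eq:kernel} on the ordered set $\CalX'=\{x_1,\dots,x_{i-2},x_i,\dots,x_n\}$ and hence falls directly under Proposition~\ref{prop:determinant}. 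I checked your column identity on both ranges $a\le i-2$ and $a\ge i$ (both sides reduce to $p_aq_{i-1}(p_iq_{i-2}-p_{i-2}q_i)$ and $q_ap_{i-1}(p_iq_{i-2}-p_{i-2}q_i)$ respectively), and your observation that the row $a=i-1$ is excluded is essential --- the identity genuinely fails there. The multilinearity step, the vanishing of the determinant with a repeated column, and the bookkeeping of which factor $(p_{i-1}q_{i-2}-p_{i-2}q_{i-1})$ replaces the cancelled bridging factor are all correct, as are the $i=2$ and $n=2$ edge cases. Your handling of the possible vanishing of $p_iq_{i-2}-p_{i-2}q_i$ via the polynomial-ring argument is legitimate: every entry is the monomial $p_{\min(a,b)}q_{\max(a,b)}$, Proposition~\ref{prop:determinant} is a polynomial identity since it holds for all real evaluations, and cancellation of a nonzero factor in the integral domain $\Real[p_1,\dots,q_n]$ is valid. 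What your approach buys is economy and structure: it exhibits Proposition~\ref{prop:minor} as a one-step formal consequence of Proposition~\ref{prop:determinant} and replaces the paper's repeated $p_2=0$ case analysis with a single uniform degeneracy argument. What the paper's approach buys is that it stays entirely within elementary row/column expansions and mirrors the machinery already set up for Theorem~\ref{theo:tridiag}, so no appeal to polynomial identities is needed.
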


With Propositions \ref{prop:determinant} and \ref{prop:minor}, the nonzero entries of $\BFK^{-1}$ can be readily calculated. 

\begin{theorem}\label{theo:inverse}
For $n\geq 3$, if $\BFK$ is nonsingular, then the nonzero entries of $\BFK^{-1}$ are given as follows,
\[(\BFK^{-1})_{i,i} = 
\left\{
\begin{array}{ll}
\displaystyle\frac{p_2}{p_1(p_2q_1-p_1q_2)},&  \quad \mbox{if }i=1,\\[2.5ex]
\displaystyle\frac{p_{i+1}q_{i-1}-p_{i-1}q_{i+1}}{(p_iq_{i-1}-p_{i-1}q_i)(p_{i+1}q_i-p_iq_{i+1})},&  \quad \mbox{if }2\leq i\leq n-1,\\[2.5ex]
\displaystyle\frac{q_{n-1}}{q_n(p_nq_{n-1}-p_{n-1}q_n)},&  \quad \mbox{if }i=n,
\end{array}
\right.
\]
and 
\[(\BFK^{-1})_{i-1,i} = (\BFK^{-1})_{i,i-1} = \frac{-1}{p_iq_{i-1}-p_{i-1}q_i}, \quad i=2,\ldots,n. \]
\end{theorem}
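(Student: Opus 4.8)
The plan is to compute $(\BFK^{-1})_{i,j}$ directly from the adjugate formula \eqref{eq:adjugate}, namely $(\BFK^{-1})_{i,j} = (-1)^{i+j} M_{j,i}/|\BFK|$, and to plug in the closed-form expressions for the determinant $|\BFK|$ from Proposition \ref{prop:determinant} and for the relevant minors. Theorem \ref{theo:tridiag} already tells us that only the entries with $|i-j|\leq 1$ can be nonzero, so there are essentially three cases to handle: the two diagonal corners $i=j\in\{1,n\}$, the interior diagonal $i=j$ with $2\leq i\leq n-1$, and the off-diagonal $|i-j|=1$.

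For the off-diagonal entries, by symmetry it suffices to treat $(\BFK^{-1})_{i-1,i}$ for $2\leq i\leq n$. Here \eqref{eq:adjugate} gives $(\BFK^{-1})_{i-1,i} = (-1)^{2i-1} M_{i,i-1}/|\BFK| = -|\BFK(\CalX\setminus\{x_i\},\CalX\setminus\{x_{i-1}\})|/|\BFK|$, and Proposition \ref{prop:minor} (after relabeling, since it is stated for $\BFK(\CalX\setminus\{x_{i-1}\},\CalX\setminus\{x_i\})$, whose transpose equals $\BFK(\CalX\setminus\{x_i\},\CalX\setminus\{x_{i-1}\})$ up to the same value since determinants are transpose-invariant) gives the numerator as $p_1q_n\prod_{j=2,j\neq i}^n(p_jq_{j-1}-p_{j-1}q_j)$. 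Dividing by $|\BFK|=p_1q_n\prod_{j=2}^n(p_jq_{j-1}-p_{j-1}q_j)$, everything cancels except the missing factor, leaving $-1/(p_iq_{i-1}-p_{i-1}q_i)$, exactly as claimed. For the interior diagonal entries $(\BFK^{-1})_{i,i}= M_{i,i}/|\BFK|$, the minor $M_{i,i}=|\BFK(\CalX\setminus\{x_i\},\CalX\setminus\{x_i\})|$ is itself a matrix of the same form \eqref{eq:kernel} built on the point set $\CalX\setminus\{x_i\}$, so Proposition \ref{prop:determinant} applies to it directly: it factors as $p_1q_n$ times the product over consecutive pairs in $\CalX\setminus\{x_i\}$, which is $\prod_{j=2,j\neq i,i+1}^n(p_jq_{j-1}-p_{j-1}q_j)$ times the ``bridging'' factor $(p_{i+1}q_{i-1}-p_{i-1}q_{i+1})$ coming from the now-adjacent pair $(x_{i-1},x_{i+1})$. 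Dividing by $|\BFK|$ cancels all common factors and leaves $(p_{i+1}q_{i-1}-p_{i-1}q_{i+1})/[(p_iq_{i-1}-p_{i-1}q_i)(p_{i+1}q_i-p_iq_{i+1})]$.

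The corner cases $i=j=1$ and $i=j=n$ are handled the same way but the minor $M_{1,1}$ (resp. $M_{n,n}$) is built on $\{x_2,\ldots,x_n\}$ (resp. $\{x_1,\ldots,x_{n-1}\}$), so Proposition \ref{prop:determinant} gives $p_2q_n\prod_{j=3}^n(p_jq_{j-1}-p_{j-1}q_j)$ (resp. $p_1q_{n-1}\prod_{j=2}^{n-1}(\cdots)$); dividing by $|\BFK|$ leaves $p_2/[p_1(p_2q_1-p_1q_2)]$ (resp. $q_{n-1}/[q_n(p_nq_{n-1}-p_{n-1}q_n)]$). I would organize the proof by first noting the reduction to these cases via Theorem \ref{theo:tridiag}, then doing the off-diagonal computation, then the interior diagonal, then the two corners, each as a short paragraph of cancellation.

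The main obstacle is purely bookkeeping rather than conceptual: one must be careful that when a point is deleted from $\CalX$, the induced matrix is \emph{still} of the form \eqref{eq:kernel} on the smaller ordered point set, so that Proposition \ref{prop:determinant} may be reapplied — and one must correctly track which consecutive-pair factors survive and which new ``bridging'' factor appears when two previously non-adjacent points become adjacent. A secondary subtlety is the sign in \eqref{eq:adjugate}: for the off-diagonal entry the exponent $(-1)^{(i-1)+i}=-1$ supplies exactly the minus sign in the stated formula, and one should double-check that $M_{j,i}$ versus $M_{i,j}$ (and the corresponding transpose of Proposition \ref{prop:minor}) are reconciled using transpose-invariance of the determinant. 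No nonsingularity issue arises beyond the hypothesis already assumed, since $|\BFK|\neq 0$ is exactly what lets us divide.
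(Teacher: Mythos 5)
Your proposal is correct and follows exactly the paper's route: apply the adjugate identity \eqref{eq:adjugate}, reduce each nonzero entry to a ratio of determinants, and evaluate those via Proposition \ref{prop:determinant} (for the diagonal minors, which are again matrices of the form \eqref{eq:kernel} on the reduced point set) and Proposition \ref{prop:minor} (for the off-diagonal minors). The paper compresses this into the phrase ``straightforward calculation,'' and your write-up — including the bridging factor $p_{i+1}q_{i-1}-p_{i-1}q_{i+1}$, the sign from $(-1)^{(i-1)+i}$, and the transpose reconciliation — is precisely that calculation carried out.
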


\begin{proof} 
It follows from the identity \eqref{eq:adjugate} that 
\[
(\BFK^{-1})_{i,i} =  \frac{1}{|\BFK|}|\BFK(\CalX\setminus\{x_i\},\CalX\setminus\{x_i\})|\quad\mbox{and}\quad (\BFK^{-1})_{i-1,i} =  \frac{-1}{|\BFK|}|\BFK(\CalX\setminus\{x_{i-1}\},\CalX\setminus\{x_i\})|.
\]
The results can then be shown by a straightforward calculation using Propositions \ref{prop:determinant} and \ref{prop:minor}. \hfill$\Box$ \end{proof}

\begin{remark}
There are two significant implications of Theorems \ref{theo:tridiag} and \ref{theo:inverse}. First,   $\BFK^{-1}$ can be computed in $\CalO(n)$ time, since it is tridiagonal, having only $3n-2$ nonzero entries. Second, the numerical stability regarding the computation of $\BFK^{-1}$ is improved substantially, since its nonzero entries have simple analytical expressions and numerical algorithms for matrix inversion are no longer needed. 
\end{remark}

\subsection{Positive Definiteness}

Theorem \ref{theo:tridiag} characterizes the essential structure of the covariance function of Gaussian processes with a 1-dimensional domain that yields sparse precision matrices. However, in order that a function of the form \eqref{eq:kernel} is a covariance function, the matrix $\BFK(\CalX,\CalX)$ must be positive semidefinite for any $\CalX=\{x_1,\ldots,x_n\}$. We further require $\BFK(\CalX,\CalX)$ to be positive definite so that it is invertible. The following conditions on $p$ and $q$ that constitute the function \eqref{eq:kernel} turn out to be both sufficient and necessary for the positive definiteness of $\BFK(\CalX,\CalX)$, provided that $p$ and $q$ are continuous. 

\begin{assumption}\label{assump:MCF}
Let $(L,U)$ be an open interval in $\Real$, where $L$ and $U$ are allowed to be $-\infty$ and $\infty$, respectively. For all $x,y\in(L,U)$, 
\begin{enumerate}[label=(\roman*)]
\item
$p(x)q(y)-p(y)q(x)<0$ if $x<y$, and
\item 
$p(x)q(y)>0$.
\end{enumerate}
\end{assumption}

\begin{remark}
It is straightforward to check that the covariance functions in Examples \ref{example:BM}--\ref{example:OU} all satisfy Assumption \ref{assump:MCF}. 
\end{remark}

\begin{theorem}\label{theo:PD}
Suppose that $p:(L,U)\mapsto\Real$ and $q:(L,U)\mapsto\Real$ are both continuous. Then, $\BFK(\CalX,\CalX)$ is positive definite for any $\CalX\subset (L,U)$ with $|\CalX|=n\geq 2$ if and only if $p$ and $q$ satisfy Assumption \ref{assump:MCF}. 

\end{theorem}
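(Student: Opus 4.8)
The plan is to prove both directions using the explicit determinant formula from Proposition~\ref{prop:determinant}, together with Sylvester's criterion (a symmetric matrix is positive definite iff all its leading principal minors are positive). The leading principal minor of order $m$ of $\BFK(\CalX,\CalX)$ is exactly $|\BFK(\CalX_m,\CalX_m)|$ where $\CalX_m=\{x_1,\ldots,x_m\}$, so by Proposition~\ref{prop:determinant} it equals $p_1 q_m \prod_{i=2}^m (p_i q_{i-1}-p_{i-1}q_i)$. Thus all the relevant positivity reduces to sign conditions on the quantities $p_1 q_m$ and $p_i q_{i-1}-p_{i-1}q_i$, and the whole theorem becomes a statement about when these have the right signs simultaneously for every choice of finite ordered $\CalX\subset(L,U)$.

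For the ``if'' direction, assume Assumption~\ref{assump:MCF}. For any ordered tuple $x_1<\cdots<x_m$ in $(L,U)$, part (i) gives $p_i q_{i-1}-p_{i-1}q_i>0$ for each $i=2,\ldots,m$ (this is $-[p(x_{i-1})q(x_i)-p(x_i)q(x_{i-1})]>0$ with $x_{i-1}<x_i$), and part (ii) gives $p_1 q_m = p(x_1)q(x_m)>0$. Hence every leading principal minor is a product of positive numbers, so it is positive, and Sylvester's criterion yields that $\BFK(\CalX,\CalX)$ is positive definite. Since $\CalX$ was an arbitrary finite subset of $(L,U)$ with at least two points, this direction is complete.

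For the ``only if'' direction, suppose $\BFK(\CalX,\CalX)$ is positive definite for every finite $\CalX\subset(L,U)$ with $|\CalX|\ge 2$. Taking $\CalX=\{x,y\}$ with $x<y$, positive definiteness forces both the $1\times 1$ minor $p(x)q(x)>0$ and the $2\times 2$ determinant $p(x)q(y)\bigl(p(y)q(x)-p(x)q(y)\bigr)>0$; combined these give $p(x)q(y)>0$ for all $x<y$ and $p(y)q(x)-p(x)q(y)>0$, i.e., $p(x)q(y)-p(y)q(x)<0$, which is exactly (i). To get (ii) in full (including the diagonal case $x=y$ and the case $x>y$), I would note $p(x)q(x)>0$ is the $1\times 1$ minor, and that $p(x)q(y)>0$ for $x<y$ together with $p(y)q(x)=p(y)q(x)$: from $p(x)q(x)>0$ and $p(y)q(y)>0$ and $p(x)q(y)>0$ one deduces $p(y)q(x)=\frac{[p(y)q(y)][p(x)q(x)]\cdot\text{(something)}}{\cdots}$ — more cleanly, $p(x)$ and $p(y)$ have the same sign (else $p(x)q(x)>0$ and $p(y)q(y)>0$ would be inconsistent with... actually one argues directly): since $p(x)q(x)>0$, $p(x)$ and $q(x)$ are both nonzero of the same sign, so $\operatorname{sgn} p$ is constant on $(L,U)$ iff $\operatorname{sgn} q$ is, and $p(x)q(y)>0$ for $x<y$ pins down that $\operatorname{sgn} p\equiv\operatorname{sgn} q$ is globally constant, giving $p(x)q(y)>0$ for all $x,y$. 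This sign-chasing is the only mildly delicate part.

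The main obstacle I anticipate is the continuity hypothesis: it is invoked precisely so that the finitely-many-points conditions (i) and (ii) can be stated as pointwise conditions holding for \emph{all} $x,y\in(L,U)$ rather than merely on finite subsets. In the ``only if'' direction, continuity is needed to upgrade ``for all finite $\CalX$'' to ``for all $x,y$'': the $1\times 1$ and $2\times 2$ minor conditions above already hold for every pair $x<y$ drawn from $(L,U)$ since any such pair forms a valid $\CalX$, so in fact continuity may not even be strictly necessary here — but it is the clean hypothesis under which no pathologies (such as $p$ vanishing on a dense-but-not-all set) can arise, and it guarantees the sign of $p$ is genuinely constant. I would therefore present the argument so that continuity enters only to justify that ``$p(x)q(y)>0$ for all $x<y$, plus $p(x)q(x)>0$'' implies the unconstrained (ii); everything else is a direct application of Sylvester's criterion to the product formula of Proposition~\ref{prop:determinant}.
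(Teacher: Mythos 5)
Your proof is correct, and the ``only if'' direction takes a genuinely different route from the paper. The ``if'' direction is identical: Sylvester's criterion applied to the product formula of Proposition~\ref{prop:determinant}. For ``only if,'' the paper argues by contradiction and, in the delicate subcase where $p(r)q(t)-p(t)q(r)>0$ and $p(r)q(t)>0$, must insert an intermediate point $s\in(r,t)$ and run an intermediate-value argument on $h(s)=p(r)q(s)-p(s)q(r)$ to produce a $3\times 3$ principal minor that is negative. You avoid the $3\times 3$ case entirely: the diagonal entries of any $2\times 2$ principal submatrix give $p(x)q(x)>0$ for every $x$, so $p$ and $q$ never vanish; continuity then forces each to have constant sign on the interval $(L,U)$, and pointwise agreement of signs makes the two constant signs equal, which is condition (ii) in full; condition (i) then drops out of the positivity of the $2\times 2$ determinant $p(x)q(y)\bigl[p(y)q(x)-p(x)q(y)\bigr]$ divided by the now-known-positive factor $p(x)q(y)$. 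This is arguably cleaner and isolates exactly where continuity is used (constant sign of a nonvanishing continuous function), whereas the paper uses continuity inside the intermediate-value step. One presentational caveat: as written, you assert ``combined these give $p(x)q(y)>0$'' \emph{before} you have justified it --- the determinant inequality $AB>0$ alone does not let you conclude $A>0$; you must reorder so that the constant-sign argument establishing (ii) comes first, and only then divide the $2\times 2$ determinant by $p(x)q(y)$ to obtain (i). With that reordering the argument is complete.
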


\begin{proof}
We first prove the ``if'' part. Fix an arbitrary $\CalX=\{x_1,\ldots,x_n\}$ with $x_1<\cdots<x_n$.  The symmetry of $\BFK(\CalX,\CalX)$ is obvious. Then, the first leading principal minor of of $\BFK(\CalX,\CalX)$ is $p_1q_1=p(x_1)q(x_1)>0$. Moreover, for any $\ell=2,\ldots,n$, it follows from Proposition \ref{prop:determinant} that  the $\ell^{\mathrm{th}}$ leading principal minor of $\BFK(\CalX,\CalX)$ is
\begin{align*}
|\BFK(\{x_1,\ldots,x_\ell\},\{x_1,\ldots,x_\ell\})| = & p_1q_\ell\prod_{i=2}^\ell (p_iq_{i-1}-p_{i-1}q_i) \\
=& p(x_1)q(x_\ell) \prod_{i=2}^\ell [p(x_i)q(x_{i-1})-p(x_{i-1})q(x_i)] > 0.
\end{align*}
Hence, $\BFK(\CalX,\CalX)$ is positive definite by Sylvester's criterion. 

Now, we suppose that $\BFK(\CalX,\CalX)$ is positive definite for any $\CalX$, and prove the ``only if'' part by contradiction. Specifically, we show that if condition (i) or (ii) is false, then we can construct a matrix $\BFK(\CalX,\CalX)$ that violates Sylvester's criterion. 

Assume that condition (i) is false, i.e., there exists $r<t$ for which $p(r)q(t)-p(t)q(r)\geq 0$. If $p(r)q(t)-p(t)q(r)= 0$, or if $p(r)q(t)-p(t)q(r)> 0$ and $p(r)q(t)\geq 0$, then 
\[|\BFK(\{r,t\},\{r,t\})|=p(r)q(t)[p(t)q(r)-p(r)q(t)] \leq 0.\]

If $p(r)q(t)-p(t)q(r)> 0$ and $p(r)q(t)>0$, then we show that $h(s)\coloneqq p(r)q(s)-p(s)q(r)>0$ for any $s\in(r,t)$. To see this, notice that $h(r)=0$ and $h(t)>0$. It then follows from the continuity of $h(s)$ that $h(s)>0$, since $h(s)$ would has a zero $s_0\in(r,t)$ otherwise, which would imply that $|\BFK(\{r,s_0)\},\{r,s_0)\}|=0$. Likewise, we can show that $p(s)q(t)-p(t)q(s)>0$ for any $s\in(r,t)$. Hence, 
\[|\BFK(\{r,s,t\},\{r,s,t\})|=p(r)q(t)[p(s)q(r)-p(r)q(s)][p(t)q(s)-p(s)q(t)]< 0.\]
Thus, we conclude that condition (i) must be true. 

Assume that condition (ii) is false, i.e., there exist $r$ and $s$ such that $p(r)q(s)\leq 0$. If $r=s$, then for any $t>s$, the first leading principal minor of $\BFK(\{r,t\},\{r,t\})$ is $p(r)q(r)\leq 0$. If $r\neq s$,  assuming $r<s$ without loss of generality, then $p(s)q(r)-p(r)q(s)>0$ since we have shown condition (i) must be true, and thus
\[|\BFK(\{r,s\},\{r,s\})| = p(r)q(s)[p(s)q(r)-p(r)q(s)] \leq 0,\]
which completes the proof. 
\hfill$\Box$  \end{proof}

Through Theorems \ref{theo:tridiag}--\ref{theo:PD}, we have effectively characterized a class of computationally tractable covariance functions for Gaussian processes with a 1-dimensional domain.  We call covariance functions of the form \eqref{eq:kernel} that satisfy Assumption \ref{assump:MCF} \emph{(1-dimensional) Markovian covariance functions} (MCFs). 

\begin{remark}
MCFs establish an explicit connection between Gaussian processes and GMRFs. Let $\SFM(x)$ be a Gaussian process equipped with an MCF. Then, for any $\CalX=\{x_1,\ldots,x_n\}$, $\{\SFM(x):x\in\CalX\}$ forms a GMRF. Assuming that $x_1<\cdots<x_n$, the neighborhood structure of this GMRF is defined as follows: $x_i$ and $x_j$ are neighbors if and only if $|i-j|=1$, which is implied by the tridiagonal structure of the precision matrix $\BFSigma_\SFM^{-1}$. 
\end{remark}

\begin{corollary}\label{cor:change_var}
Let $\CalT:(L,U)\mapsto \Real$ be a strictly increasing function and $\CalT^{-1}$ denotes is inverse. If $k(x,y)$ is an MCF for $x,y\in(L,U)$, then $k(\CalT(x),\CalT(y))$ is an MCF for $x,y\in(\CalT^{-1}(L), \CalT^{-1}(U))$.
\end{corollary}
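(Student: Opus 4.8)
The plan is to verify directly that the composed function $\tilde k(x,y) \coloneqq k(\CalT(x),\CalT(y))$ has both the required functional form \eqref{eq:kernel} and satisfies Assumption \ref{assump:MCF} on the transformed interval. First I would unpack the hypothesis: since $k$ is an MCF on $(L,U)$, there are functions $p,q$ on $(L,U)$ with $k(u,v)=p(u)q(v)\ind_{\{u\leq v\}}+p(v)q(u)\ind_{\{u> v\}}$ satisfying Assumption \ref{assump:MCF}. Define $\tilde p \coloneqq p\circ\CalT$ and $\tilde q \coloneqq q\circ\CalT$ on the interval $(\CalT^{-1}(L),\CalT^{-1}(U))$. (One should note that because $\CalT$ is strictly increasing, its range is some subinterval of $\Real$; the statement implicitly reads $\CalT^{-1}(L)$ and $\CalT^{-1}(U)$ as the endpoints of the domain interval, with the obvious conventions at $\pm\infty$, and $\CalT^{-1}$ is well-defined and strictly increasing there.)

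The crucial point — and the one genuinely requiring the monotonicity of $\CalT$ — is the compatibility of the indicators: for $x,y$ in the transformed interval, $x\leq y$ if and only if $\CalT(x)\leq\CalT(y)$, precisely because $\CalT$ is strictly increasing. Hence
\[
\tilde k(x,y) = p(\CalT(x))q(\CalT(y))\ind_{\{\CalT(x)\leq\CalT(y)\}}+p(\CalT(y))q(\CalT(x))\ind_{\{\CalT(x)>\CalT(y)\}} = \tilde p(x)\tilde q(y)\ind_{\{x\leq y\}}+\tilde p(y)\tilde q(x)\ind_{\{x>y\}},
\]
so $\tilde k$ is of the form \eqref{eq:kernel} with the pair $(\tilde p,\tilde q)$. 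If $\CalT$ were not monotone this step would fail, which is why I expect it to be the main (though mild) obstacle; with strict monotonicity it is immediate.

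It then remains to check Assumption \ref{assump:MCF} for $(\tilde p,\tilde q)$. For condition (i): take $x<y$ in the transformed interval; then $\CalT(x)<\CalT(y)$ lie in $(L,U)$, so by Assumption \ref{assump:MCF}(i) applied to $(p,q)$ we get $p(\CalT(x))q(\CalT(y))-p(\CalT(y))q(\CalT(x))<0$, i.e. $\tilde p(x)\tilde q(y)-\tilde p(y)\tilde q(x)<0$. For condition (ii): for any $x,y$ in the transformed interval, $\CalT(x),\CalT(y)\in(L,U)$, so $\tilde p(x)\tilde q(y)=p(\CalT(x))q(\CalT(y))>0$ by Assumption \ref{assump:MCF}(ii). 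Thus $\tilde k$ is an MCF on $(\CalT^{-1}(L),\CalT^{-1}(U))$, completing the proof. (If one wants to invoke Theorem \ref{theo:PD} for the positive-definiteness characterization, continuity of $\tilde p,\tilde q$ would additionally require continuity of $\CalT$, but the corollary as stated only asks that $\tilde k$ be an MCF in the sense of form \eqref{eq:kernel} plus Assumption \ref{assump:MCF}, so no continuity hypothesis on $\CalT$ is needed.)
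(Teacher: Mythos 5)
Your proposal is correct and follows essentially the same route as the paper's proof: compose $p$ and $q$ with $\CalT$, use strict monotonicity to rewrite the indicators, and observe that Assumption \ref{assump:MCF} transfers to $\tilde p,\tilde q$ (the paper leaves this last verification as "easy to see," which you spell out). Your remarks on the domain conventions and on continuity are sensible clarifications of points the paper glosses over, but they do not change the argument.
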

\begin{proof} 
Suppose that $k(x,y)=p(x)q(y)\ind_{\{x\leq y\}}+p(y)q(x)\ind_{\{x>y\}}$ with $p(x)$ and $q(x)$ satisfying Assumption \ref{assump:MCF}. Then, 
\begin{align*}
k(h(x), h(y)) = & p(h(x))q(h(y))\ind_{\{h(x)\leq h(y)\}}+ p(h(y))q(h(x))\ind_{\{h(x)>h(y)\}}\\
= &\tilde p(x)\tilde q(y)\ind_{\{x\leq y\}}+\tilde p(y)\tilde q(x)\ind_{\{x>y\}}
\end{align*}
where $\tilde p(x)=p(h(x))$ and $\tilde q(x)=q(h(x))$. Here, the second equality follows from the strict increasing monotonicity of $h$. Moreover,  it is easy to see that $\tilde p(x)$ and $\tilde q(x)$ satisfy Assumption \ref{assump:MCF}. \hfill$\Box$ \end{proof}

We will provide in \S\ref{sec:SL} a convenient approach  to constructing MCFs based on ordinary differential equations (ODEs), provided that the ODE involved is analytically tractable. Corollary \ref{cor:change_var} provides an additional tool to construct new MCFs by modifying known ones. 

\subsection{Multidimensional Extension}\label{sec:multidim}


So far, we have been focusing on Gaussian processes with a 1-dimensional domain. Unfortunately, there is no multidimensional analog to the S-L theory that we can take advantage of. We circumvent this difficulty by defining a $D$-dimensional MCF in the following ``composite'' manner: $k(\BFx,\BFy) = \prod_{i=1}^D k_i(x^{(i)},y^{(i)})$, where $\BFx=(x^{(1)},\ldots,x^{(D)})$, $\BFy=(y^{(1)},\ldots,y^{(D)})$, and  $k_i(\cdot,\cdot)$ is a 1-dimensional MCF defined along the $i^{\mathrm{th}}$ dimension, $i=1,\ldots,D$. We remark that these 1-dimensional MCFs do not need to be the same and can be chosen to capture different correlation behaviors in each dimension.

The composite structure preserves the sparsity of the precision matrix, but it comes at the cost of restriction in selecting the design points $\CalX=\{\BFx_1,\ldots,\BFx_n\} $. In particular, we assume that $\CalX$ forms a regular lattice, that is, it can be expressed as a Cartesian product. But the coordinates along each dimensional do not need to be equally spaced.

\begin{assumption}\label{assump:lattice}
$\CalX= \bigtimes_{i=1}^D \{x^{(i)}_1,x^{(i)}_2,\ldots,x^{(i)}_{n_i}\}$ and $n=\prod_{i=1}^Dn_i$, where $n_i$ is the number of points along the $i^{\mathrm{th}}$ dimension and $x^{(i)}_1<x^{(i)}_2<\ldots<x^{(i)}_{n_i}$,  $i=1,\ldots,D$.
\end{assumption}

It follows that the covariance matrix associated with $k(\cdot,\cdot)$, the $D$-dimensional MCF, can be written as $\BFK = \bigotimes_{i=1}^D \BFK_i$, where $\BFK_i$ is the covariance matrix corresponding to $k_i(\cdot,\cdot)$ and $\{x^{(i)}_1,\ldots,x^{(i)}_{n_i}\}$,  and $\bigotimes$ denotes the tensor product of matrices. We refer to \citet[Chapter 13]{Laub05} for introduction of basic properties of tensor product. Then, the precision matrix can also be written as a tensor product: $\BFK^{-1} =\bigotimes_{i=1}^D \BFK_i^{-1} $. Hence, $\BFK^{-1}$ is also a sparse matrix since each $\BFK_i^{-1}$ is a tridiagonal matrix. The reduction in computational complexity suggested by \eqref{eq:woodbury} remains valid.

\section{Green's Function}\label{sec:SL}

The conditions in Assumption \ref{assump:MCF} can be trivially met by choosing a positive, strictly increasing function $p(x)$ and setting $q(x)\equiv 1$. The covariance function of a Brownian motion in Example \ref{example:BM} is indeed the case. However, this would mean that $k(x,y)=p(\min(x,y))$ is independent of $x$ for any $x>y$, which is not a reasonable feature in general. Despite the formal simplicity of the conditions in Assumption \ref{assump:MCF}, it is not immediately clear how to construct a wide spectrum of nontrivial functions $p(x)$ and $q(x)$ in a convenient way. We develop in this section a flexible, principled approach to constructing 1-dimensional MCFs. The key is to recognize that the function form \eqref{eq:kernel} resembles the Green's function of a Sturm-Liouville (S-L) differential equation. Since all second-order linear ODEs can be recast in the form of an S-L equation, the number of Green's functions that can be calculated analytically is potentially large; see \citet[Chapter 2.1]{ODEHandbook}.  

The relation between Green's functions and covariances was also identified in \cite{DolphWoodbury52}. There are three critical differences between their work and ours. First, they work on higher-order Markov processes \cite[Appendix B]{RasmussenWilliams06} whereas we focus on the Markovian processes in the conventional sense, which is of order one. Second, this kind of generality instead restricts their analysis to the setting where  the boundary condition of the S-L equation involved is imposed at infinity; further, their result which is similar to ours (Theorem \ref{theo:Green_MCF})  holds only for the case that the S-L equation has constant coefficients, which corresponds to the stationary O-U process. By contrast, in our analysis the boundary condition can be defined either on a finite interval or at infinity, and the coefficients of the S-L equation can be variable. Third, as a result of the last difference, the covariance functions constructed in their work are stationary, whereas our approach permits nonstationary covariance functions. In particular, we will construct an MCF that is nonstationary and even more computationally tractable than $k_{\mathrm{OU}}$, which is a stationary MCF; see the discussion in \S\ref{sec:MLE}. However, we do not discuss the nonstationarity from a modeling perspective in the present paper but refer interested readers to  \cite{Sampson10}. 

\subsection{Sturm-Liouville Equation}

Consider the following S-L equation defined on a finite interval $[L, U]$, 
\begin{equation}\label{eq:SL}
\mathscr{L}f(x)\coloneqq \frac{1}{w(x)}\left[\frac{\dif }{\dif x}\left(-u(x)\frac{\dif f(x)}{\dif x}\right) + v(x)f(x)\right] = 0, 
\end{equation}
with the boundary condition (BC)
\begin{equation}\label{eq:boundarycondition}
\left\{
\begin{aligned}
&\alpha_Lf(L)+\beta_Lf'(L)=0, \\
&\alpha_Uf(U)+\beta_Uf'(U)=0,
\end{aligned}
\right.
\end{equation}
where for some functions $\{u(x), v(x), w(x)\}$ and some constants $\{\alpha_L,\beta_L,\alpha_U,\beta_U\}$. We will consider three common BCs as follows.
\begin{itemize}
\item 
Dirichlet BC: $\alpha_L=\alpha_U=1$ and $\beta_L=\beta_U=0$, i.e., $f(L)=f(U)=0$;
\item 
Cauchy BC: $\alpha_L=\beta_U=1$ and $\alpha_U=\beta_L=0$, i.e.,  $f(L)=f'(U)=0$;
\item 
Neumann BC: $\beta_L=\beta_U=1$ and $\alpha_L=\alpha_U=0$, i.e., $f'(L)=f'(U)=0$.
\end{itemize}

The Green's function $g(x,y)$ of the above S-L equation is the solution to $\mathscr{L}g(x,y)=\delta(x-y)$ with the same BC, where $\delta(\cdot)$ is the Dirac delta function. It is a classical result in S-L theory that the Green's function has the following form
\begin{equation}\label{eq:Green_MCF}
g(x,y) = Cf_1(x)f_2(y)\ind_{\{x\leq y\}} + Cf_1(y)f_2(x)\ind_{\{x>y\}},
\end{equation}
where $f_1$ and $f_2$ satisfy
\begin{equation}\label{eq:Green_components}
\left\{
\begin{array}{l}
\mathscr{L}f_1(x)=0,\mbox{ }x\in[L,U] \\
\alpha_Lf(L)+\beta_Lf'(L)=0
\end{array}
\right.
\quad
\mbox{ and } 
\quad
\left\{
\begin{array}{l}
\mathscr{L}f_2(x)=0,\mbox{ }x\in[L,U] \\
\alpha_Uf(U)+\beta_Uf'(U)=0
\end{array}
\right. 
.
\end{equation}
Here, the constant $C$ is determined in such a way that 
\[\lim_{\epsilon\downarrow0} \bigg[\frac{\dif g(x,y)}{\dif x}\Big|_{x=y+\epsilon} - \frac{\dif g(x,y)}{\dif x}\Big|_{x=y-\epsilon}\bigg] = \frac{-1}{u(y)};\]
see \citet[Chapter 5.4]{Teschl12}. Consequently, the Green's function $g(x,y)$ has exactly the form \eqref{eq:kernel}. 

Clearly, not every S-L equation has a Green's function that satisfies Assumption \ref{assump:MCF}. Proper conditions need to be imposed on the functions $\{u(x), v(x), w(x)\}$ in the S-L equation \eqref{eq:SL} as well as on the BC \eqref{eq:boundarycondition}, in order that the Green's function be positive definite. 

\subsection{A General Result}

We show now that the Green's functions associated with a wide class of S-L equations are indeed MCFs. We assume that the S-L equation \eqref{eq:SL} is \emph{regular}, i.e., $u(x)$ is continuously differentiable, $v(x)$ and $w(x)$ are continuous, and $u(x)>0$ and $w(x)>0$ for $x\in[L,U]$; see \citet[Chapter 5.3]{Teschl12}. This is because the Green's function of a regular S-L equation enjoys an eigen-decomposition, which implies that the Green's function is positive semidefinite if the eigenvalues of the differential operator $\mathscr{L}$ are all positive.

\begin{theorem}\label{theo:Green_MCF}
Suppose that the S-L equation \eqref{eq:SL} is regular with $v(x)>0$ for $x\in[L,U]$ and the Dirichlet BC. Then, its Green's function is an MCF. 
\end{theorem}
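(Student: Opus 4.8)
The plan is to verify that the Green's function $g(x,y)$ of the stated S-L equation has the form \eqref{eq:kernel} with $p=Cf_1$ or appropriate scalar rearrangement, and that the pair $(p,q)$ satisfies Assumption \ref{assump:MCF}. Since we already know (from the S-L theory recalled above \eqref{eq:Green_MCF}) that $g(x,y) = Cf_1(x)f_2(y)\ind_{\{x\leq y\}} + Cf_1(y)f_2(x)\ind_{\{x>y\}}$ where $f_1$ solves $\mathscr{L}f_1=0$ with the left boundary condition $f_1(L)=0$ and $f_2$ solves $\mathscr{L}f_2=0$ with the right boundary condition $f_2(U)=0$, it remains to identify $p$ and $q$ from these and check the two conditions of Assumption \ref{assump:MCF}. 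The natural choice is $p(x) = Cf_1(x)$ (or $f_1$ with the constant absorbed) and $q(x)=f_2(x)$, up to folding $C$ into one factor; I would first fix the sign convention by showing $C>0$.

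First I would establish the sign of $C$. The jump condition determines $C$ via $C[f_1(y)f_2'(y) - f_1'(y)f_2(y)] = -1/u(y)$, and the bracket is (a constant multiple of) the Wronskian $W(y) = f_1(y)f_2'(y)-f_1'(y)f_2(y)$. Because the equation is in self-adjoint form, $u(x)W(x)$ is constant in $x$; so I need its sign. I would use the Dirichlet conditions $f_1(L)=0$, $f_2(U)=0$ together with the positivity of $v$: the hypothesis $v(x)>0$ on $[L,U]$ forces the eigenvalues of $\mathscr L$ (with Dirichlet BC) to be strictly positive — indeed, the Rayleigh quotient gives $\lambda = \int(u(f')^2 + vf^2)\,dx / \int w f^2\,dx > 0$ for any admissible $f\not\equiv 0$ — so $0$ is not an eigenvalue and the Green's function exists and is positive semidefinite (in fact positive definite) as an integral operator. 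From positive-definiteness of the operator I would deduce $g(x,x)>0$ for all $x$, i.e. $Cf_1(x)f_2(x)>0$; combined with a Sturm-type argument that $f_1$ has no zero in $(L,U]$ and $f_2$ has no zero in $[L,U)$ (otherwise an interior zero would produce a smaller positive eigenvalue or force $g$ degenerate), this pins down that $f_1>0$ on $(L,U]$, $f_2>0$ on $[L,U)$, and $C>0$. That gives condition (ii): $p(x)q(y) = C f_1(x)f_2(y) > 0$ for all $x,y\in(L,U)$.

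Next, condition (i): for $x<y$ I need $p(x)q(y)-p(y)q(x) = C[f_1(x)f_2(y)-f_1(y)f_2(x)] < 0$. Define $\phi(x) = C[f_1(x)f_2(y)-f_1(y)f_2(x)]$ for fixed $y$; then $\phi(y)=0$ and $\phi'(x) = C[f_1'(x)f_2(y)-f_1(y)f_2'(x)]$, so $\phi'(y) = C f_1(y)\big(f_1'(y)f_2(y)/f_1(y) - f_2'(y)\big)\cdot$(sign bookkeeping) — more cleanly, $\phi'(y) = -Cf_1(y)f_2(y)\cdot W(y)/(f_1(y)f_2(y))$; I would instead argue directly that $\phi$ can vanish only at $x=y$ on $(L,U)$: if $\phi(x_0)=0$ for some $x_0<y$, then $f_1(x_0)f_2(y)=f_1(y)f_2(x_0)$, which (since all four values are positive) means $f_1(x_0)/f_2(x_0)=f_1(y)/f_2(y)$, forcing the ratio $f_1/f_2$ — a solution-ratio whose derivative is $-W/f_2^2$ of constant sign — to take the same value twice, a contradiction. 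Hence $\phi$ has constant sign on $(L,y)$, and evaluating near $x=L$ where $f_1(L)=0$ gives $\phi(L^+) = Cf_1(L)f_2(y) - Cf_1(y)f_2(L) = -Cf_1(y)f_2(L) < 0$. Therefore $\phi(x)<0$ throughout $(L,y)$, i.e. condition (i) holds. Then Theorem \ref{theo:PD} applies and $g$ is an MCF.

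\textbf{Main obstacle.} The delicate part is making the positivity/sign arguments rigorous: specifically, showing $C>0$ and that $f_1,f_2$ are strictly positive in the interior. This hinges on the fact that $v>0$ (plus regularity and Dirichlet BC) makes $\mathscr L$ a positive operator, so $0$ is below the spectrum and the Green's function is a genuine positive-definite kernel; I expect the cleanest route is to invoke the eigenfunction expansion $g(x,y)=\sum_n \lambda_n^{-1}\psi_n(x)\psi_n(y)$ valid for regular S-L problems, note all $\lambda_n>0$ by the Rayleigh-quotient bound, conclude positive semidefiniteness, and then upgrade to the strict sign statements via the explicit product form and a Sturm separation/oscillation argument for $f_1$ and $f_2$. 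Handling the possibility that $f_1$ or $f_2$ vanishes at an interior point, and the edge behavior at $L$ and $U$, is where the care is needed; everything else is bookkeeping with the Wronskian.
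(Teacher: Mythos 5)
Your proposal is correct in substance and shares the paper's core ingredients --- the eigenfunction expansion with strictly positive Dirichlet eigenvalues, and a Sturm-type argument that $f_1$ and $f_2$ have no interior zeros --- but the final packaging differs. The paper never checks the sign conditions of Assumption \ref{assump:MCF} directly: it shows each matrix $\BFG(\CalX,\CalX)$ is positive semidefinite (via the eigen-decomposition) and nonsingular (via Proposition \ref{prop:determinant} together with the non-vanishing claims \eqref{eq:PD_condition_1} and \eqref{eq:PD_condition_2}), and then invokes the converse direction of Theorem \ref{theo:PD} to conclude that Assumption \ref{assump:MCF} holds. You instead verify conditions (i) and (ii) of Assumption \ref{assump:MCF} directly, which forces you to track the signs of $C$, $f_1$, and $f_2$ explicitly; this is more bookkeeping but needs only the forward direction of Theorem \ref{theo:PD}. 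For the key inequality $f_1(x)f_2(y)\neq f_1(y)f_2(x)$ for $x<y$, the two arguments also diverge: the paper rescales $f_1$ so the common ratio equals $1$ and derives a contradiction from $f_1-f_2$ having two zeros (Sturm's comparison again), whereas you use strict monotonicity of $f_1/f_2$ from $(f_1/f_2)'=-W/f_2^2$ with $uW$ a nonzero constant --- arguably cleaner, and it also hands you the sign of $\phi$ near $x=L$ for free. One step to tighten: positive semidefiniteness of the kernel only yields $g(x,x)\geq 0$, not $g(x,x)>0$, so you cannot start the sign analysis there without circularity. Instead, obtain the strict signs from the Sturm argument (each of $f_1,f_2$ vanishes only at its own endpoint), normalize both to be positive on the interior, and then read off $C=-1/(u(L)W(L))>0$ from $W(L)=-f_1'(L)f_2(L)<0$; with that fix the argument goes through.
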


\begin{proof}

Fix a set of distinct points $\CalX=\{x_1,\ldots,x_n\}\subset (L,U)$. Let $\BFG(\CalX,\CalX)$ denote the matrix whose entry $(i,j)$ is $g(x_i,x_j)$. Given the fact that the Green's function has the form \eqref{eq:Green_MCF}, by Theorems \ref{theo:tridiag} and \ref{theo:PD} it suffices to show that $\BFG(\CalX,\CalX)$  is positive definite.

Consider the eigenvalue problem associated with the S-L equation \eqref{eq:SL} (i.e., the so-called S-L problem): $\mathscr{L}\phi(x) = \lambda \phi(x)$, with $\phi(x)$ satisfying the BC \eqref{eq:boundarycondition}. It is well known in ODE theory that if the S-L equation is regular and satisfies the BC \eqref{eq:boundarycondition}, then the S-L problem has a countable number of eigenvalues $\{\lambda_\ell:\ell= 1,2,\ldots\}$, and the normalized eigenfunctions $\{\phi_\ell(x):\ell=1,2,\ldots\}$ can be chosen real-valued and form an orthonormal basis in the space of functions 
\[\mathsf{L}^2([L,U], w(x), \dif x)\coloneqq \left\{h:[L,U]\mapsto\Real\Big|\int_L^U h^2(x)w(x)\dif x<\infty \right\},\]
endowed with the inner product $\langle h_1, h_2  \rangle\coloneqq \int_L^U h_1(x)h_2(x)w(x)\dif x$. In particular, $\langle \phi_i,\phi_j\rangle$ equals 1 if $i=j$ and 0 otherwise. Moreover, the eigenvalues are all positive if $v(x)$ is positive on $[L, U]$ and the BC \eqref{eq:boundarycondition} is of the Dirichlet type. We refer to \citet[\S0.2.5]{ODEHandbook} for a discussion on the S-L problem and its properties. 

Then, the Green's function can be expressed as the following eigen-decomposition  
\[g(x,y) = \sum_{\ell=1}^\infty \lambda_\ell^{-1}\phi_\ell(x)\phi_\ell(y),\]
since $\lambda_\ell >0$ for each $\ell=1,2,\ldots$; see \citet[Chapter 10.1]{ArfkenWeberHarris12} for a proof. Notice that 
\[\int_L^U\int_L^Uh(x)h(y)g(x,y)w(x)w(y)\dif x\dif y = \sum_{\ell=1}^\infty \lambda_\ell^{-1}\langle h,\phi_\ell\rangle^2 \geq 0,\]
for any  $h\in \mathsf{L}^2([L,U], w(x), \dif x)$. Hence,  $g(x,y)$ is positive semidefinite, which implies that $\BFG(\CalX,\CalX)$ is positive semidefinite; see, e.g., \citet[Chapter 4.1]{RasmussenWilliams06}.

What remains is to prove $|\BFG(\CalX,\CalX)|\neq 0$. It follows from Sturm's comparison theorem \citep[Theorem 5.20]{Teschl12} that if $v(x)>0$, then any function that satisfies $\mathscr{L}f(x)=0$  has at most one zero in $[L, U]$. In particular, consider the functions $f_1$ and $f_2$ that constitute the Green's function in the expression \eqref{eq:Green_MCF}. Due to the Dirichlet BC, we know from \eqref{eq:Green_components} that $f_1(L)=f_2(U)=0$. Therefore, $f_1$ and $f_2$ have no other zeros in $(L,U)$, and thus 
\begin{equation}\label{eq:PD_condition_1}
f_1(x)f_2(y)\neq 0,\quad \mbox{ for all $x,y\in(L,U)$}.
\end{equation}

Next, we show by contradiction that 
\begin{equation}\label{eq:PD_condition_2}
f_1(x)f_2(y)-f_1(y)f_2(x)\neq 0,\quad \mbox{ for all $x,y\in(L,U)$ if $x>y$}. 
\end{equation}
Assume that \eqref{eq:PD_condition_2}  is false, i.e., there exist $s>t$ in $(L,U)$ such that $f_1(s)f_2(t)=f_1(t)f_2(s)$, or equivalently, $f_1(s)/f_2(s)=f_1(t)/f_2(t)$, since we have shown that $f_2(x)\neq 0$ for all $x\in(L,U)$. 

Notice that for any $c\neq 0$, if we replace $f_1(x)$ by $cf_1(x)$ and adjust the constant $C$ to $C/c$ in the expression \eqref{eq:Green_components}, then we retain the functional form of an MCF. Hence, we can assume, without loss of generality,  that $f_1$ is properly scaled so that $f_1(s)/f_2(s)=f_1(t)/f_2(t)=1$. This implies that $f_1(s)-f_2(s) = f_1(t)-f_2(t)=0$, i.e., $f_1(x)-f_2(x)$ has two zeros in $(L, U)$. However, since $f_1(x)-f_2(x)$ is a solution to $\mathscr{L}f(x)=0$, this contradicts the implication of Sturm's comparison theorem, namely, any solution to $\mathscr{L}f(x)=0$ has at most one zero in $[L, U]$ if $v(x)>0$ for $x\in[L,U]$.

At last, it follows from \eqref{eq:PD_condition_1}, \eqref{eq:PD_condition_2}, and Proposition \ref{prop:determinant} that $|\BFG(\CalX,\CalX)|\neq 0$.  \hfill$\Box$
\end{proof}

\begin{remark}
It can be seen from the proof of Theorem \ref{theo:Green_MCF} that for a regular S-L equation, it suffices to assume $v(x)>0$ in order that its Green's function be a covariance function on the finite interval $[L,U]$. But the covariance matrix may be singular for BCs that are not of the Dirichlet type. Nevertheless, this does not mean that the Green's function cannot be a positive definite covariance function when $v(x)$ is not a positive function, or when other types of BCs are imposed. In general, if the Green's function of an S-L equation can be solved analytically in the form of \eqref{eq:Green_MCF}, then we can check whether it is an MCF by simply verifying verify Assumption \ref{assump:MCF}.

\end{remark}

\subsection{Some Examples}

We now use the Green's-function approach to construct several MCFs  which turn out to have excellent performance when applied in SK for predicting response surfaces in the numerical experiments in \S\ref{sec:numerical}.  

We assume that the domain of the S-L equation is $[L,U]=[0,1]$; otherwise, we use the change-of-variable technique to make it so. Consider the following ODE with constant coefficients
\begin{equation}\label{eq:Ising}
- f''(x)+\nu f(x)=0,
\end{equation}
by setting $u(x)\equiv 1$, $v(x)\equiv \nu$, and $w(x)\equiv 1$ in \eqref{eq:SL}. The Green's function has a different form, depending on the sign of $\nu$ and the BC. Theorem \ref{theo:Green_MCF} stipulates that the Green's function is an MCF if $\nu>0$ and the Dirichlet BC is imposed. For the other cases, we can easily verify that Assumption \ref{assump:MCF} is indeed satisfied if $\nu$ is above a (negative) threshold. Since it is a routine exercise to solve \eqref{eq:Ising} for the Green's function with a BC of the Dirichlet, Cauchy, or Neumann type, we omit the details and only present the results.

\begin{theorem}\label{theo:MCF_examples}
The Green's function of equation (\ref{eq:Ising}) is $g(x,y)=\eta^2 [p(x)q(y)\ind_{\{x\leq y\}} + p(y)q(x)\ind_{\{x>y\}}]$, where $\eta^2$, $p(x)$, and $q(x)$ are given in Table \ref{tab:Green}. Moreover, $g(x,y)$ is an MCF if any of the following three conditions is satisfied: (i) the Dirichlet BC is imposed and $\nu>-\pi^2$; (ii) the Cauchy BC is imposed and $\nu>-\frac{\pi^2}{4}$; (iii) the Neumann condition is imposed and $\nu>0$. 
\end{theorem}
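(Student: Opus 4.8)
The plan is to first solve the constant-coefficient ODE \eqref{eq:Ising} under each of the three boundary conditions and in each sign regime $\nu>0$, $\nu=0$, $\nu<0$, which produces the entries of Table \ref{tab:Green}: $f_1$ is the solution (unique up to scaling) satisfying the left-endpoint condition, $f_2$ the one satisfying the right-endpoint condition, and the prefactor $\eta^2$ is determined by the jump condition on $\dif g/\dif x$ at $x=y$, namely $\eta^2 = -1/(pq'-p'q)$ with $pq'-p'q$ a nonzero constant (the Wronskian of two solutions of a constant-coefficient second-order equation is constant). Given the table, since the Green's function already has the form \eqref{eq:kernel}, by the definition of an MCF together with Theorems \ref{theo:tridiag} and \ref{theo:PD} it suffices to show that $p=f_1$ and $q=f_2$ satisfy Assumption \ref{assump:MCF} on $(0,1)$ exactly for the stated ranges of $\nu$. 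The Dirichlet case with $\nu>0$ is already covered by Theorem \ref{theo:Green_MCF}, so the substance is the remaining ranges and boundary conditions, handled by direct verification; and once condition (i) is in hand, $\eta^2>0$ is automatic, because the negativity of the constant $pq'-p'q$ is the infinitesimal form of condition (i).

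Condition (ii), $p(x)q(y)>0$ on $(0,1)$, reduces to requiring $p$ and $q$ to be nonvanishing — hence, after rescaling, positive — on $(0,1)$. For $\nu\ge0$ the solutions are built from $\sinh$, $\cosh$, or the polynomials $x$ and $1$, all positive on $(0,1)$, so (ii) is automatic. For $\nu=-\omega^2<0$ the solutions are $\sin(\omega x)$, $\cos(\omega x)$ and their images under $x\mapsto 1-x$; requiring none of them to vanish on $(0,1)$ forces $\omega<\pi$ for the Dirichlet BC and $\omega<\pi/2$ for the Cauchy and Neumann BCs. For the Neumann BC, moreover, $\nu=0$ yields $f_1\equiv f_2\equiv\text{const}$, hence a constant kernel, which is singular for $n\ge2$; this is why the Neumann threshold is the strict $\nu>0$. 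These constraints are exactly the thresholds $-\pi^2$, $-\pi^2/4$, and $0$ in the statement.

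Condition (i), $p(x)q(y)-p(y)q(x)<0$ for $x<y$, follows from a single product-to-sum step. Writing $d=y-x\in(0,1)$, the appropriate addition formula collapses the left-hand side to $\tfrac12\bigl[\phi(\omega(1-d))-\phi(\omega(1+d))\bigr]$ with $\phi\in\{\sinh,\cosh\}$ when $\nu>0$ (negative, since $\sinh$ and $\cosh$ are increasing on $[0,\infty)$ and $0<1-d<1+d$), to $x-y$ directly in the polynomial case $\nu=0$, and, for the trigonometric cases $\nu<0$, after the further identity $\cos C-\cos D=-2\sin\tfrac{C+D}{2}\sin\tfrac{C-D}{2}$, to $-\sin\omega\,\sin(\omega d)$ (Dirichlet) or $-\cos\omega\,\sin(\omega d)$ (Cauchy), which are negative precisely when $\omega<\pi$, resp.\ $\omega<\pi/2$ — the same ranges found above.

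The difficulty is bookkeeping rather than conceptual depth: for each boundary condition one must correctly assign which solution is $p=f_1$ and which is $q=f_2$, keep the reflections $x\mapsto 1-x$ straight, and confirm that the single product-to-sum reduction yields the correct sign in every case. In particular the Neumann case with $\nu<0$ must be \emph{excluded}, and indeed there the analogous computation produces $+\sin\omega\,\sin(\omega d)>0$, so condition (i) fails — consistent with the theorem permitting only $\nu>0$ for the Neumann BC.
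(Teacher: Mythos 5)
Your proposal is correct and follows exactly the route the paper prescribes: solve the constant-coefficient S-L equation under each boundary condition, fix the prefactor via the jump condition $\eta^2=-1/(pq'-p'q)$, and then verify Assumption \ref{assump:MCF} directly by product-to-sum identities, with the thresholds $-\pi^2$, $-\pi^2/4$, $0$ emerging from the sign of $\sin\gamma$, $\cos\gamma$, and the degeneracy of the constant Neumann kernel. The paper itself omits these computations as routine, and your write-up supplies them accurately, including the correct exclusion of the Neumann case with $\nu\le 0$.
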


\begin{table}[t]
\small
\begin{center}
\caption{The Green's Function of Equation \eqref{eq:Ising}.}  \label{tab:Green}
    \begin{tabular}{ccccc}
    \toprule 
    Boundary & $\nu$ & $\eta^2$ & $p(x)$ & $q(x)$ \\
    \midrule 
    Dirichlet &  $\displaystyle \nu\in(-\pi^2,0)$ & $\displaystyle \frac{1}{\gamma\sin(\gamma)}$ & $\sin(\gamma x)$ & $\sin(\gamma(1-x))$ \\
    \arrayrulecolor{white}
    \midrule
    Dirichlet & $\nu=0$ & $1$ & $x$ & $1-x$ \\
    \midrule 
    Dirichlet &  $\displaystyle \nu>0$ & $\displaystyle \frac{1}{\gamma\sinh(\gamma)}$ & $\sinh(\gamma x)$ & $\sinh(\gamma(1-x))$ \\
    \arrayrulecolor{black}
    \midrule
    Cauchy & $\displaystyle \nu\in(-\frac{\pi^2}{4},0)$ & $\displaystyle \frac{1}{\gamma\cos(\gamma)}$ & $\sin(\gamma x)$ & $\cos(\gamma(1-x))$  \\ 
    \arrayrulecolor{white}
    \midrule
    Cauchy & $\nu=0$ & $1$ & $x$ & $1$ \\
    \midrule 
    Cauchy &  $\displaystyle \nu>0$ & $\displaystyle \frac{1}{\gamma\cosh(\gamma)}$ & $\sinh(\gamma x)$ & $\cosh(\gamma(1-x))$ \\
    \arrayrulecolor{black}
    \midrule
    Neumann & $\displaystyle \nu>0$ & $\displaystyle \frac{1}{\gamma\sinh(\gamma)}$ & $\cosh(\gamma x)$ & $\cosh(\gamma(1-x))$ \\
    \bottomrule
    \end{tabular}
\end{center}
\small{\textit{Note.} {$\gamma=\sqrt{|\nu|}$.}}
\end{table}

It turns out that if the set of points $\CalX=\{x_1,\ldots,x_n\}$ are equally spaced, the  precision matrix associated with the MCFs in Theorem \ref{theo:MCF_examples}  has an even simpler structure than being symmetric tridiagonal. The proof relies on direct calculations suggested by Theorem \ref{theo:inverse} and is deferred to Appendix \ref{app:B}.

\begin{corollary}\label{cor:Dirichlet}
Let $g(x,y)=\eta^2[p(x)q(y)\ind_{\{x\leq y\}} + p(y)q(x)\ind_{\{x>y\}}]$, where $\eta^2>0$ is a free parameter, and $p(x)$ and $q(x)$ are the functions in Table \ref{tab:Green}.  Suppose that $\CalX=\{x_1,\ldots,x_n\}\subset (0,1)$, where $x_i=x_1+(i-1)h$ with $h=\frac{x_n-x_1}{n-1}$, $i=1,\ldots,n$. Then, $\BFG^{-1}(\CalX,\CalX)$ is a symmetric, tridiagonal matrix:
\begin{equation}\label{eq:simple_precision}
\BFG^{-1}(\CalX,\CalX) =
\eta^{-2}a
\begin{pmatrix}
b & -1 &   &  &  \\
-1 & c & -1 &     & \\
\cdots &  & \cdots &  &  \cdots\\
 & &   -1 & c & -1 \\
 & &  & -1 & d
\end{pmatrix},
\end{equation}
where the parameters $(a,b,c,d)$ are given in Table \ref{tab:inverse}.
\end{corollary}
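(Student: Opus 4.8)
The plan is to reduce everything to Theorem~\ref{theo:inverse}. Since $g(x,y)=\eta^2[p(x)q(y)\ind_{\{x\le y\}}+p(y)q(x)\ind_{\{x>y\}}]$, we have $\BFG(\CalX,\CalX)=\eta^2\BFK(\CalX,\CalX)$ with $\BFK$ built from the products $p_iq_j$, hence $\BFG^{-1}(\CalX,\CalX)=\eta^{-2}\BFK^{-1}(\CalX,\CalX)$. By Theorem~\ref{theo:tridiag} the matrix $\BFK^{-1}$ is already symmetric tridiagonal, and Theorem~\ref{theo:inverse} gives all of its nonzero entries explicitly in terms of the brackets $p_iq_{i-1}-p_{i-1}q_i$ and $p_{i+1}q_{i-1}-p_{i-1}q_{i+1}$. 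So it suffices to evaluate these brackets on the equally spaced grid $x_i=x_1+(i-1)h$ for each of the seven $(p,q)$ pairs in Table~\ref{tab:Green}, and to check that the resulting entries are constant along the off-diagonals and along the interior of the main diagonal.

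The observation that makes this collapse is that, in every row of Table~\ref{tab:Green}, both $p$ and $q$ are solutions of the constant-coefficient equation \eqref{eq:Ising}. For any two solutions of such an equation the antisymmetric bracket $p(x)q(y)-p(y)q(x)$ is \emph{translation invariant}: expanding $p$ and $q$ in the basis $\{e^{\gamma x},e^{-\gamma x}\}$ (respectively $\{\cos\gamma x,\sin\gamma x\}$ when $\nu<0$, or $\{1,x\}$ when $\nu=0$), the contributions of the form $e^{\pm\gamma(x+y)}$ cancel and one is left with
\[
p(x)q(y)-p(y)q(x)=\kappa\, s\!\left(\gamma(x-y)\right),
\]
where $s=\sinh$ if $\nu>0$, $s=\sin$ if $\nu<0$, $s(t)=t$ if $\nu=0$, and $\kappa$ is a constant depending only on the boundary type, read off row by row via the product-to-sum identities for $\sinh,\cosh,\sin,\cos$ (e.g.\ $\kappa=\sinh\gamma$ for the three $\nu>0$ rows, $\kappa=\sin\gamma$ for the Dirichlet/$\nu<0$ row, $\kappa=\cos\gamma$ for the Cauchy/$\nu<0$ row, $\kappa=1$ for the $\nu=0$ rows). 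The constraints on $\nu$ in the statement are exactly what force $\kappa$ and $s(\gamma h)$ to be positive, and rescaling $p$ (as in Corollary~\ref{cor:change_var}) leaves this structure intact.

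Substituting into Theorem~\ref{theo:inverse} then finishes the job. On the equally spaced grid, $p_iq_{i-1}-p_{i-1}q_i=\kappa\,s(\gamma h)$ and $p_{i+1}q_{i-1}-p_{i-1}q_{i+1}=\kappa\,s(2\gamma h)$, both independent of $i$; hence every off-diagonal entry of $\BFK^{-1}$ equals $-1/(\kappa s(\gamma h))$, every interior diagonal entry ($2\le i\le n-1$) equals $s(2\gamma h)/(\kappa s(\gamma h)^2)$, and the two corner entries are $p_2/(p_1\kappa s(\gamma h))$ and $q_{n-1}/(q_n\kappa s(\gamma h))$. Factoring out $a=1/(\kappa s(\gamma h))$ and multiplying by $\eta^{-2}$ produces exactly \eqref{eq:simple_precision} with $c=s(2\gamma h)/s(\gamma h)$, $b=p(x_1+h)/p(x_1)$, and $d=q(x_n-h)/q(x_n)$; the single-variable identity $s(2\gamma h)/s(\gamma h)$ (equal to $2\cosh\gamma h$, $2\cos\gamma h$, or $2$ according to the sign of $\nu$) and the explicit $p,q$ then put $(a,b,c,d)$ in the closed forms of Table~\ref{tab:inverse}. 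The only real labor — the ``main obstacle'' in the sense of tedium rather than depth — is the bookkeeping: running the product-to-sum reduction for each of the seven rows to pin down $\kappa$, and treating the two $\nu=0$ rows (where $p$ or $q$ is affine) as the degenerate limit $s(t)=t$. Conceptually the corollary is just ``Theorem~\ref{theo:inverse} plus translation invariance of the Wronskian-type bracket of two solutions of a constant-coefficient ODE.''
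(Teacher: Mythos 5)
Your proposal is correct in substance and follows essentially the same route as the paper's Appendix~B: reduce to Theorem~\ref{theo:inverse}, observe that on the uniform grid the brackets $p_iq_{i-1}-p_{i-1}q_i$ and $p_{i+1}q_{i-1}-p_{i-1}q_{i+1}$ are independent of $i$, and read off $(a,b,c,d)$; your ``translation invariance of the bracket'' framing is the same fact the paper records in its Wronskian remark, just promoted to the organizing principle so that all seven rows are handled uniformly rather than case by case. One caveat on your constants: for the Cauchy rows the product-to-sum reduction gives $\kappa=\cosh\gamma$ when $\nu>0$ (e.g.\ $\sinh(\gamma x)\cosh(\gamma(1-y))-\sinh(\gamma y)\cosh(\gamma(1-x))=\cosh(\gamma)\sinh(\gamma(x-y))$), not $\sinh\gamma$ as you state, while your $\kappa=\cos\gamma$ for Cauchy with $\nu<0$ is right --- so your claim that the result lands exactly on the $a$ column of Table~\ref{tab:inverse} does not hold for the two non-degenerate Cauchy rows (the table itself appears to carry the same $\sinh\gamma$ vs.\ $\cosh\gamma$, $\sin\gamma$ vs.\ $\cos\gamma$ slip in its $a$ entries; the paper only verifies the Dirichlet $\nu<0$ case explicitly). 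This does not affect the structure of \eqref{eq:simple_precision} or the values of $b$, $c$, $d$, only the scalar prefactor $a$ in those rows.
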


\begin{table}[t]
\small
\begin{center}
\caption{Parameters in the Inverse Matrix \eqref{eq:simple_precision}.} \label{tab:inverse}
    \begin{tabular}{cccccc}
    \toprule 
    Boundary & $\nu$ & $a$ & $b$ & $c$ & $d$\\
    \midrule 
    Dirichlet & $\displaystyle \nu\in(-\pi^2,0)$ & $\displaystyle \frac{1}{\sin(\gamma) \sin(\gamma h)}$ & $\displaystyle  \frac{\sin(\gamma (x_1+h))}{\sin(\gamma x_1)}$ & $2\cos(\gamma h)$  & $\displaystyle \frac{\sin(\gamma(1-x_n+h))}{\sin(\gamma(1-x_n))}$    \\
    \arrayrulecolor{white}
    \midrule
    Dirichlet & $\nu=0$ & $\displaystyle \frac{1}{h}$ & $\displaystyle 1+\frac{h}{x_1}$ & $2$  & $\displaystyle 1+\frac{h}{1-x_n}$    \\
    \midrule
    Dirichlet & $\displaystyle \nu>0$ & $\displaystyle \frac{1}{\sinh(\gamma) \sinh(\gamma h)}$ & $\displaystyle  \frac{\sinh(\gamma (x_1+h))}{\sinh(\gamma x_1)}$ & $2\cosh(\gamma h)$  & $\displaystyle \frac{\sinh(\gamma(1-x_n+h))}{\sinh(\gamma(1-x_n))}$    \\
    \arrayrulecolor{black}
    \midrule 
    Cauchy & $\displaystyle \nu\in(-\frac{\pi^2}{4},0)$ & $\displaystyle \frac{1}{\sin(\gamma) \sin(\gamma h)}$ & $\displaystyle  \frac{\sin(\gamma (x_1+h))}{\sin(\gamma x_1)}$ & $2\cos(\gamma h)$  & $\displaystyle \frac{\cos(\gamma(1-x_n+h))}{\cos(\gamma(1-x_n))}$    \\
    \arrayrulecolor{white}
    \midrule
    Cauchy & $\nu=0$ & $\displaystyle \frac{1}{h}$ & $\displaystyle 1+\frac{h}{x_1}$ & $2$  & $\displaystyle 1$    \\
    \midrule
    Cauchy & $\displaystyle \nu>0$ & $\displaystyle \frac{1}{\sinh(\gamma) \sinh(\gamma h)}$ & $\displaystyle  \frac{\sinh(\gamma (x_1+h))}{\sinh(\gamma x_1)}$ & $2\cosh(\gamma h)$  & $\displaystyle \frac{\cosh(\gamma(1-x_n+h))}{\cosh(\gamma(1-x_n))}$    \\
    \arrayrulecolor{black}
    \midrule 
    Neumann & $\displaystyle \nu>0$ & $\displaystyle \frac{1}{\sinh(\gamma) \sinh(\gamma h)}$  & $\displaystyle  \frac{\cosh(\gamma (x_1+h))}{\cosh(\gamma x_1)}$ & $2\cosh(\gamma h)$  & $\displaystyle \frac{\cosh(\gamma(1-x_n+h))}{\cosh(\gamma(1-x_n))}$    \\  
    \bottomrule 
    \end{tabular}
\end{center}
\small{\textit{Note.} {$\gamma=\sqrt{|\nu|}$.}}
\end{table}

Corollary \ref{cor:Dirichlet} has two important implications from the computational perspective. First, by choosing a set of equally spaced design points, the precision matrix associated with the MCFs in Theorem \ref{theo:MCF_examples} can be computed in $O(1)$ time since its nonzero entries can be expressed in terms of only  4 quantities, regardless of the size of the matrix. This is a further reduction in complexity compared to computing the precision matrix of a general MCF, which amounts to $O(n)$. 

Second, the expression \eqref{eq:simple_precision} allows reparameterization of the MCFs in Theorem \ref{theo:MCF_examples}. Instead of estimating the parameters of an MCF, we can express the likelihood function in terms of the parameters in the precision matrix. Under mild conditions, the resulting MLE can be solved without any matrix inversion, thereby improving substantially the computational efficiency and numerical stability. We discuss this matter in details in \S\ref{sec:MLE}.

\begin{table}[t]
\small
\begin{center}
\caption{Computational Complexity.} \label{tab:complex}
    \begin{tabular}{cccc}
    \toprule 
    Covariance Function & $\BFSigma_\SFM^{-1}$  & $[\BFSigma_\SFM+\BFSigma_\varepsilon]^{-1}$ & SK Predictor + MSE  \\
    \midrule
    General & $\CalO(n^3)$ & $\CalO(n^3)$  & $\CalO(n^3)$\\
    MCF & $\CalO(n)$ & $\CalO(n^2)$ & $\CalO(n^2)$, or $\CalO(n)$ if $\BFSigma_\varepsilon=\BFzero$ \\ 
    CF in Table \ref{tab:Green} under Condition & $\CalO(1)$ & $\CalO(n^2)$ & $\CalO(n^2)$, or $\CalO(n)$ if $\BFSigma_\varepsilon=\BFzero$  \\
    \bottomrule
    \end{tabular}
\end{center}
\small{\textit{Note.} {Condition: design points are equally spaced. }}

\end{table}

In order to highlight the computational enhancement of MCFs relative to general covariance functions, we summarize the complexity for computing various quantities using different covariance functions in Table \ref{tab:complex}. First, for computing $\BFSigma_\SFM^{-1}$, MCFs reduce the complexity from $\CalO(n^3)$ to $\CalO(n)$ because of the sparsity of the inverse matrix and the analytical expression of its nonzero entries; the Green's function in Table \ref{tab:Green} further reduce the complexity to $\CalO(1)$ by taking advantage of the experiment design. Second, it can be seen that the existence of the simulation errors increases the computational complexity dramatically and offsets largely the benefit of MCFs. Third, once $[\BFSigma_\SFM+\BFSigma_\varepsilon]^{-1}$ is computed, the bulk of the computation of the SK predictor \eqref{eq:BLUP} and its MSE \eqref{eq:MSE} is to multiply the inverse matrix by a vector, which takes $\CalO(n^2)$ in general but is reduced to $\CalO(n)$ by the sparsity induced by MCFs.

\begin{remark}
The fact that entry $(i-1,i)$ of $\BFG^{-1}$ is independent of $i$ deserves an interpretation. Using the notations in Theorem \ref{theo:inverse}, this means that $p_iq_{i-1}-p_{i-1}q_i$ is a constant, which turns out to be related to the so-called Wronskian determinant $W(x)$ associated with the S-L equation. In particular, for two linearly independent solutions $p(x)$ and $q(x)$ to equation  \eqref{eq:SL}, the Wronskian is defined as 
\[W(x)= 
\begin{vmatrix}
p(x) & q(x) \\
p'(x) & q'(x)
\end{vmatrix}
=p(x)q'(x)-p'(x)q(x).
\]
On the other hand, if we fix $x_{i-1}$, then 
\[\frac{p_iq_{i-1}-p_{i-1}q_i}{h} =  \frac{p_i(q_{i-1}-q_i)-q_i(p_{i-1}-p_i)}{h} \to -p(x_{i-1})q'(x_{i-1})+q(x_{i-1})p'(x_{i-1})=-W(x_{i-1}),\]
as $h\downarrow 0$. Hence, $p_iq_{i-1}-p_{i-1}q_i$ can be viewed as a ``discretized'' Wronskian. It is known in the theory of S-L equations that $u(x)W(x)$ is a constant for $x\in[L,U]$. Since $\mu(x)\equiv \mu$ in equation \eqref{eq:Ising}, $W(x)$ is a constant. Nevertheless, we must emphasize that  in general, a constant Wronskian does not imply that $p_iq_{i-1}-p_{i-1}q_i$ is independent of $i$. 
\end{remark}

\subsection{Illustration}

A particularly important application of SK, besides response surface prediction, is to facilitate the exploration-exploitation trade-off during the random search for solving simulation optimization problems \citep{sun2014}. To that end, the uncertainty about the prediction, which is a result of the interplay between the extrinsic uncertainty imposed by SK to the unknown response surface and the intrinsic uncertainty from the simulation errors, should be characterized meaningfully. 

Given the fact that the squared exponential covariance function $k_{\mathrm{SE}}(x,y) = \eta^2 e^{-\theta(x-y)^2}$ is a standard choice in SK literature, we now compare MCFs with $k_{\mathrm{SE}}$ in terms of the performance in uncertainty quantification in stochastic simulation. Specifically, we consider two distinct MCFs: (i) the exponential covariance function $k_{\mathrm{Exp}}(x,y) = \eta^2 e^{-\theta|x-y|}$, which is the essentially same as the covariance function of the OU process in Example \ref{example:OU}; (ii) the Green's function associated with the Dirichlet BC in Theorem \ref{theo:MCF_examples}
\begin{equation}\label{eq:kernel_Diri}
k_{\mathrm{Dir}}(x,y) \coloneqq  \left\{
\begin{array}{ll}
\eta^2\left[\sin(\gamma x)\sin(\gamma(1-y))\ind_{\{x\leq y\}}+\sin(\gamma y)\sin(\gamma(1-x))\ind_{\{x> y\}}\right],&\mbox{if }  \nu <0, \\[1ex]
\eta^2\left[x(1-y)\ind_{\{x\leq y\}}+y(1-x)\ind_{\{x> y\}}\right],&\mbox{if }\nu =0, \\[1ex]
\eta^2\left[\sinh(\gamma x)\sinh(\gamma(1-y))\ind_{\{x\leq y\}}+\sinh(\gamma y)\sinh(\gamma(1-x))\ind_{\{x> y\}}\right], &\mbox{if } \nu>0, 
\end{array}
\right.
\end{equation}
for $x,y\in(0,1)$, where $\gamma=\sqrt{|\nu|}$.

We assume that a 1-dimensional continuous surface is observed with errors having variance $\sigma^2$. Given the observations, we first fit the SK metamodel equipped with each of the three covariance functions using MLE which is detailed in \S\ref{sec:MLE}, and then predict the surface using the SK predictor \eqref{eq:BLUP} with the parameter estimates. We also compute the standard deviation (S.D.) of the prediction, i.e., the square root of the prediction MSE \eqref{eq:MSE}, in order to measure the uncertainty about the predicted surface. We consider both $\sigma=0$ and $\sigma=0.1$. The results are shown in Figure \ref{fig:Uncertainty}.

\begin{figure}[t]
\begin{center}
\caption{Uncertainty Quantification of the SK Prediction.} \label{fig:Uncertainty}
$\begin{array}{cc}
\includegraphics[width=0.45\textwidth]{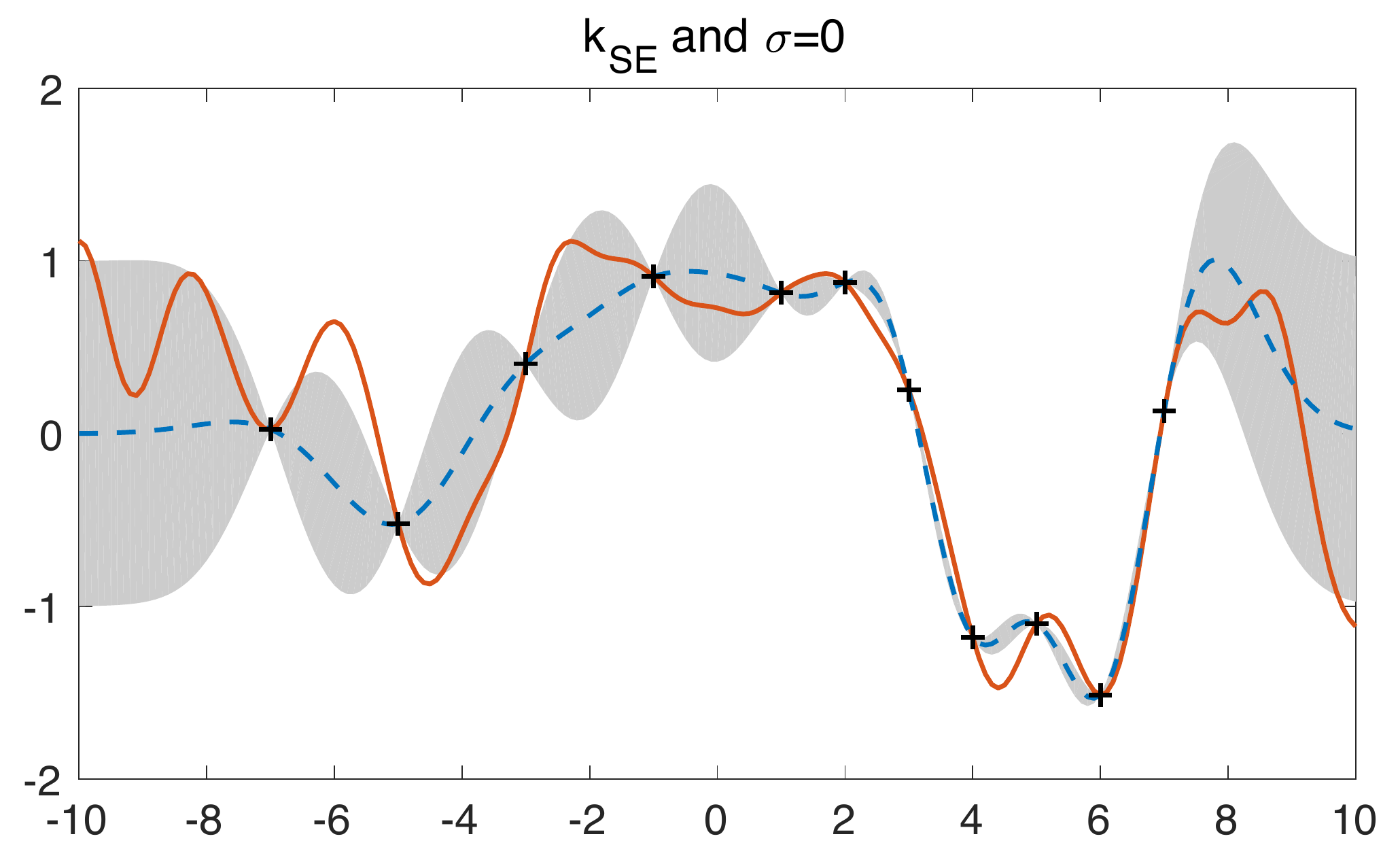}  &
\includegraphics[width=0.45\textwidth]{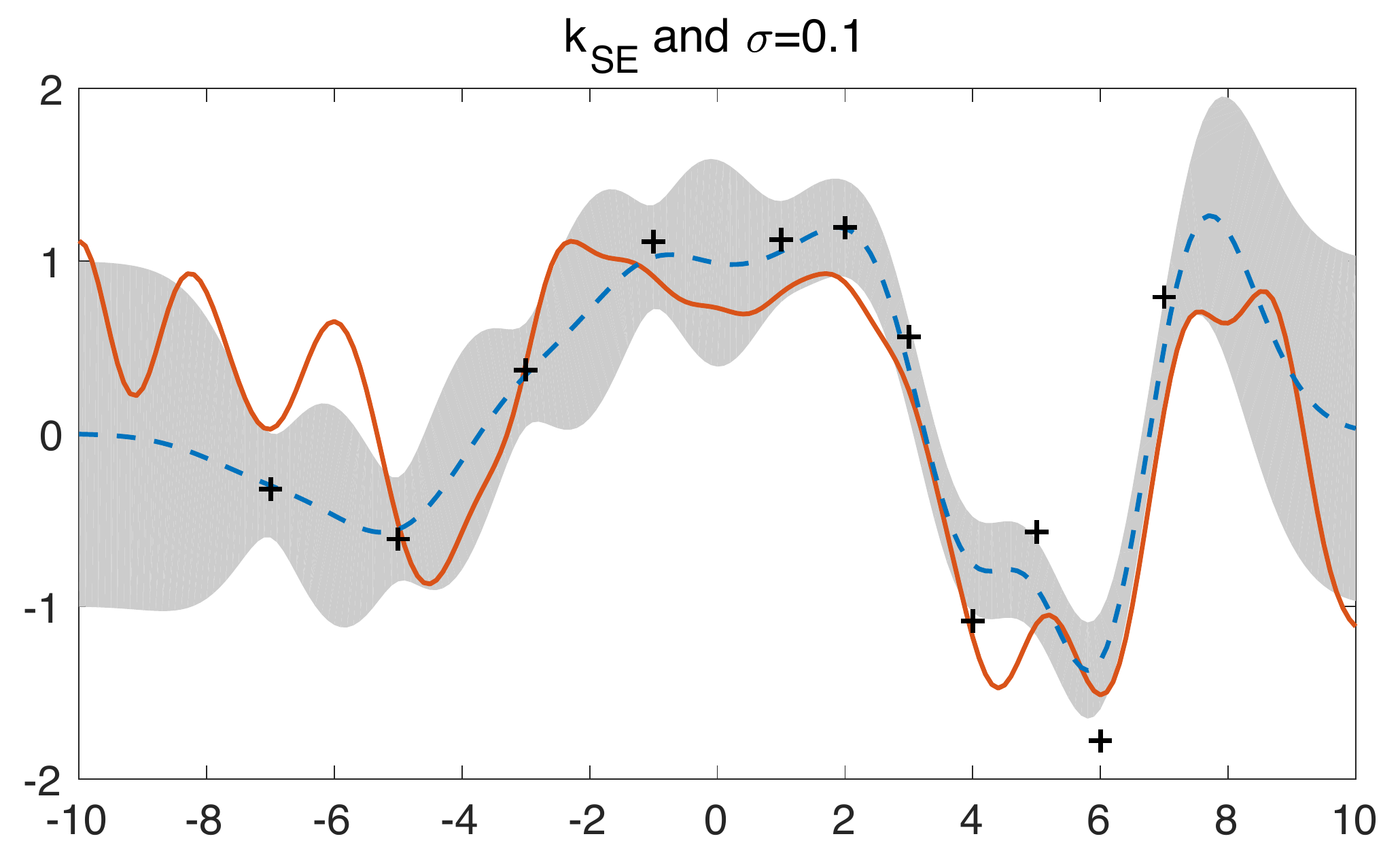} \\ 
\includegraphics[width=0.45\textwidth]{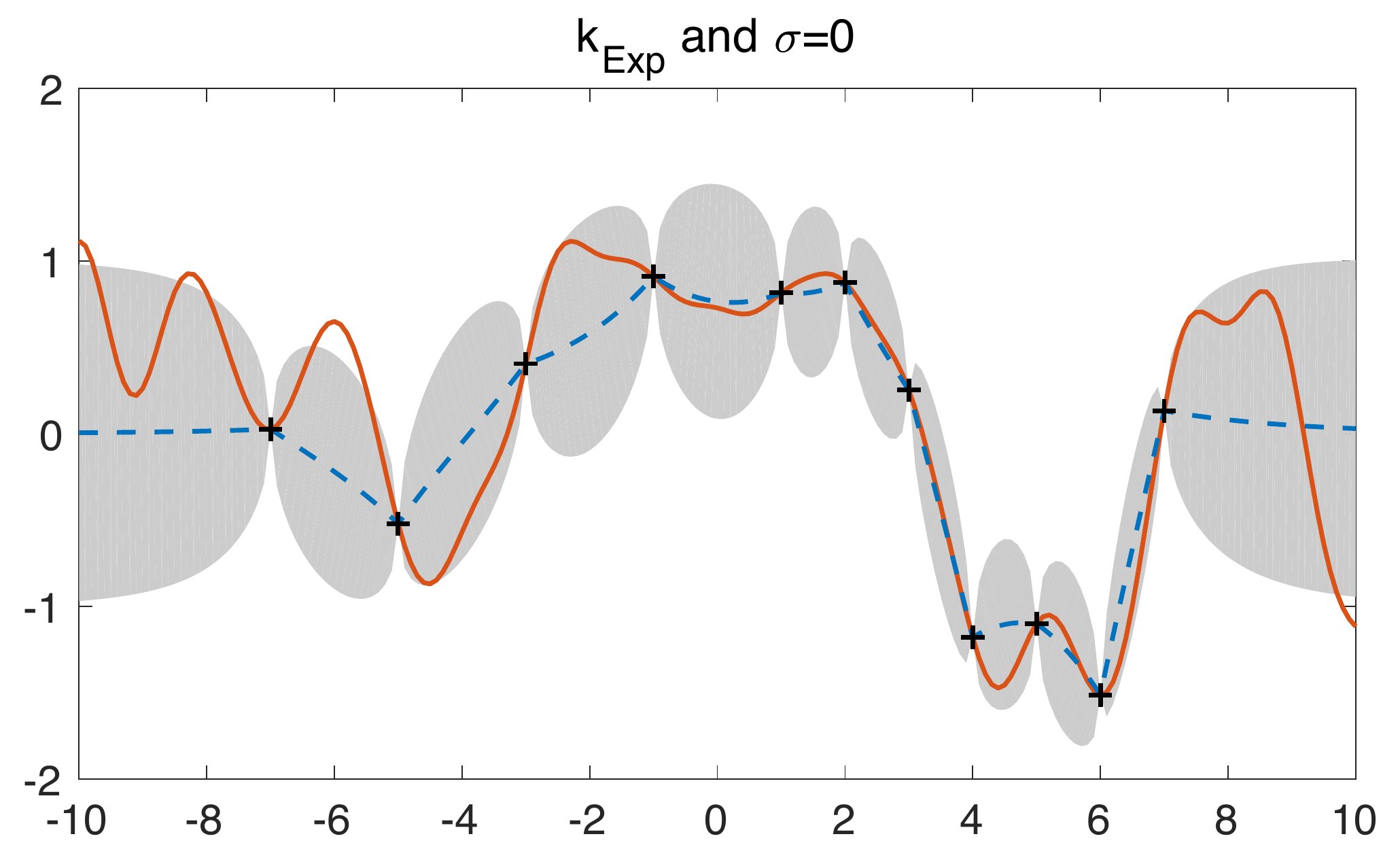} & 
\includegraphics[width=0.45\textwidth]{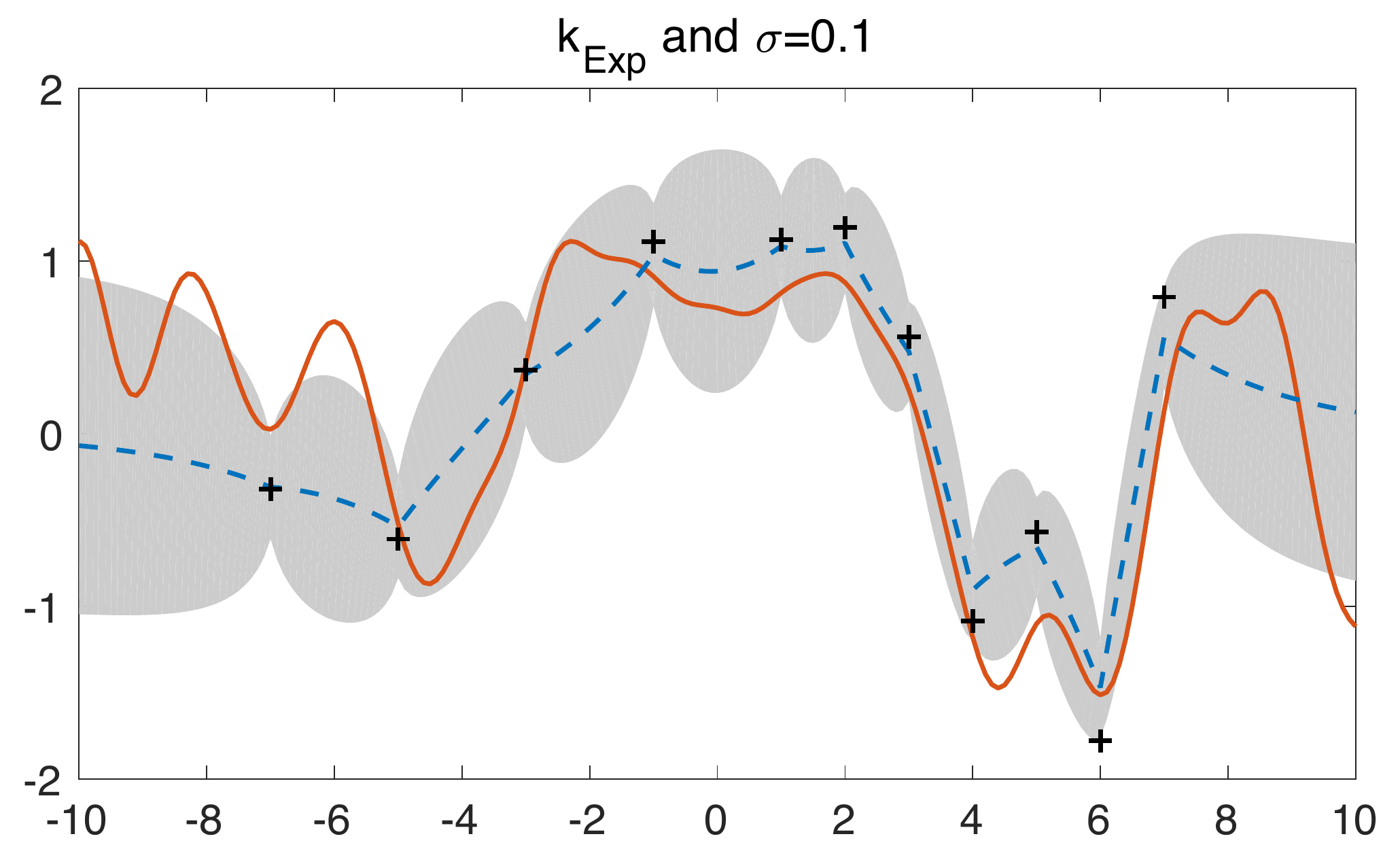} \\ 
\includegraphics[width=0.45\textwidth]{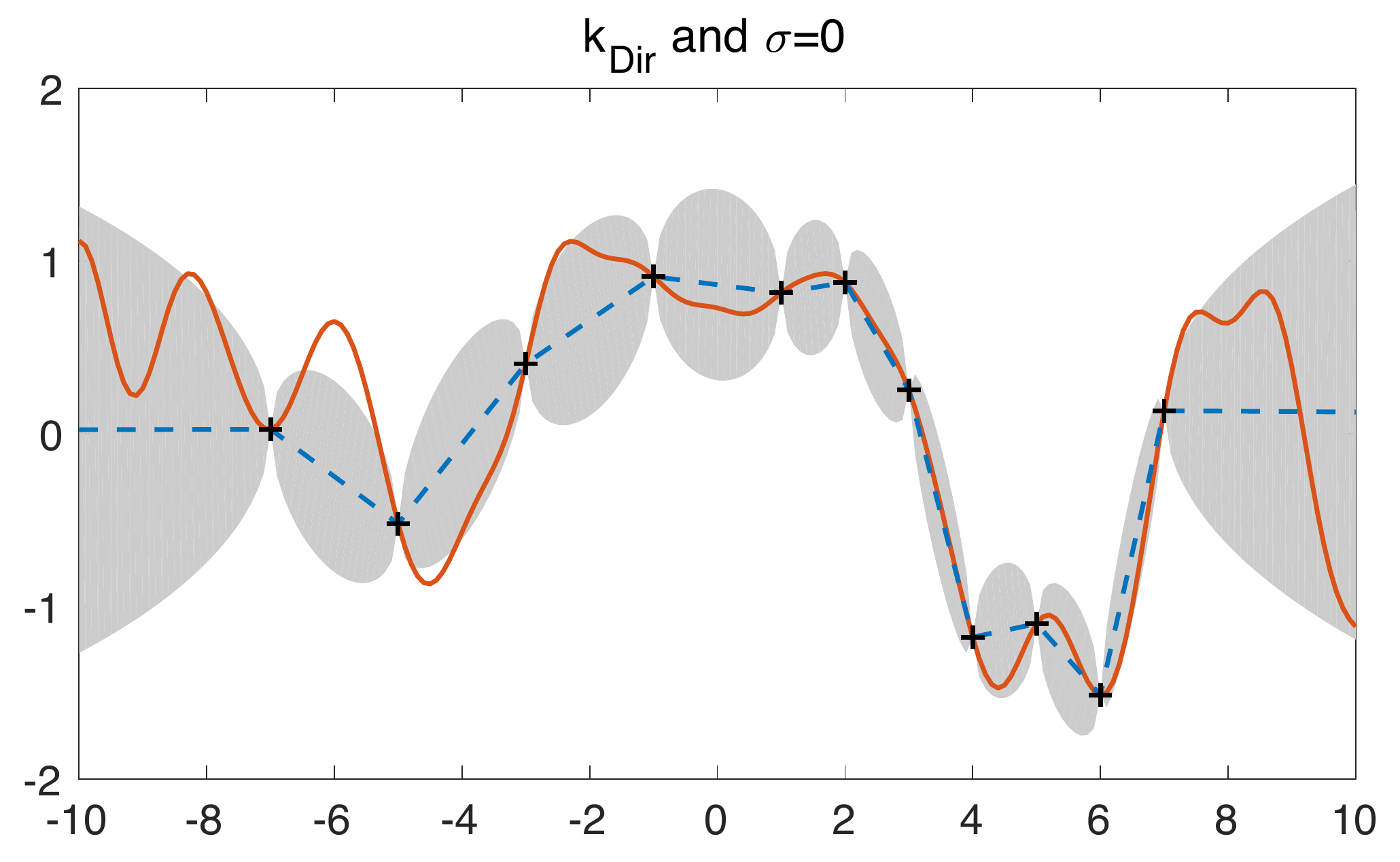}  &
\includegraphics[width=0.45\textwidth]{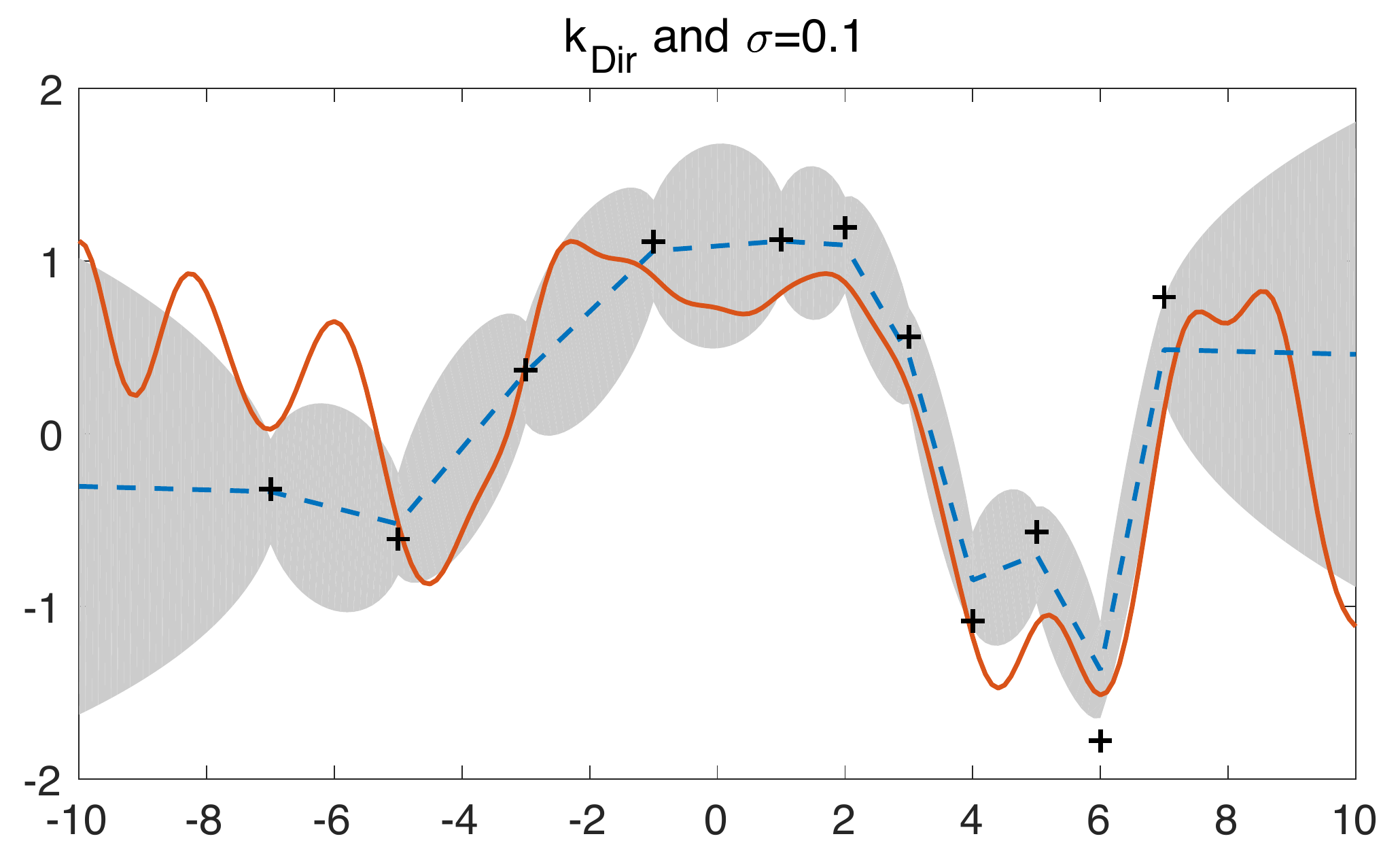}
\end{array}$
\end{center}
\small{\textit{Note.}  True surface (solid line), data ($+$), prediction (dashed line), $\pm$ standard deviation (shaded area).}
\end{figure}

Overall, all the three covariance functions can deliver meaningful uncertainty quantification of the unknown surface. For each covariance function, the 1-S.D. confidence band  can mostly cover the true surface, and it is inflated by the observation noise. Moreover, the confidence band is wider for regions with fewer observations (e.g., the interval $[-6,0]$) than regions with more (e.g., $[0,6]$), and it is particularly wide for extrapolation (e.g., $|x|\geq 8$). A main difference between $k_{\mathrm{SE}}$ and the two MCFs that is revealed in Figure \ref{fig:Uncertainty} is that both the predicted surface and the confidence band are smoother for the former. But the lack of smoothness in the predicted surface does not appear to cause significant issues as far as the prediction accuracy is concerned, which will be shown in the extensive numerical experiments in \S\ref{sec:numerical}.




\section{Parameter Estimation}\label{sec:MLE}

Let $\BFtheta$ denote the parameters used to specify the covariance function and $\BFK(\BFtheta)$ denote the covariance matrix. We now discuss the estimation of $\BFtheta$ and $\BFbeta$, the parameters that determine the trend of the response surface. We develop a highly efficient and numerically stable MLE scheme for a specific class of MCFs. We assume in this section that $\BFSigma_\varepsilon$, the variances of the simulation outputs, is known. This is a standard treatment regarding SK in simulation literature. In practice, $\BFSigma_\varepsilon$ is replaced by the sample variances $\widehat\BFSigma_\varepsilon$. 

\subsection{Numerically Stable MLE}\label{sec:stable_MLE}

Recall the log-likelihood function \eqref{eq:loglikelihood},
\begin{equation}\label{eq:log-likelihood2}
l(\BFbeta,\BFtheta) = -\frac{n}{2}\ln(2\pi) - \frac{1}{2}\ln|\BFK(\BFtheta) + \BFSigma_\varepsilon| -\frac{1}{2}(\bar\BFz-\BFF\BFbeta)^\intercal [\BFK(\BFtheta) + \BFSigma_\varepsilon]^{-1} (\bar\BFz-\BFF\BFbeta).
\end{equation}
The first order optimality conditions are derived using standard results of matrix calculus in \cite{AnkenmanNelsonStaum10},
\begin{equation}\label{eq:MLE_conditions1}
\left\{
\begin{aligned}
\BFzero  = & \frac{\partial l(\BFbeta,\BFtheta)}{\partial \BFbeta} = \BFF^\intercal \BFV^{-1}(\BFtheta)(\bar \BFz-\BFF\BFbeta), \\
\BFzero = & \frac{\partial l(\BFbeta,\BFtheta)}{\partial \BFtheta} = -\frac{1}{2}\mathrm{trace}\left[\BFV^{-1}(\BFtheta)\frac{\partial \BFV(\BFtheta)}{\partial \BFtheta}\right]+\frac{1}{2}(\bar\BFz-\BFF\BFbeta)^\intercal \left[\BFV^{-1}(\BFtheta)\frac{\partial \BFV(\BFtheta)}{\partial \BFtheta}\BFV^{-1}(\BFtheta)\right] (\bar\BFz-\BFF\BFbeta),
\end{aligned}
\right.
\end{equation}
where $\BFV(\BFtheta) = \BFK(\BFtheta) + \BFSigma_\varepsilon$. The Newton-Raphson algorithm or the Fisher scoring algorithm can be used to solve the above set of equations. 

It is a well-known issue \cite[Chapter 5.4]{fang2006} that $\BFK(\BFtheta)$ often becomes nearly singular when searching over the parameter space of $(\BFbeta,\BFtheta)$, causing serious numerical instability when numerically inverting $\BFV(\BFtheta)$. Admittedly, the presence of $\BFSigma_\varepsilon$ somewhat mitigates the issue, since it is a diagonal matrix whose  diagonal entries are all positive. But unless all the diagonal entries of $\BFSigma_\varepsilon$ are sufficiently large, which is not very likely when the number of design points is large, the numerical instability persists. 

Nevertheless, if $\BFK(\BFtheta)$ is constructed from an MCF, then $\BFK^{-1}(\BFtheta)$ is a sparse matrix having closed-form entries, thanks to Theorem \ref{theo:inverse} and Assumption \ref{assump:lattice}. Instead of using numerical methods such as Gaussian elimination to invert $\BFK(\BFtheta)$, we apply the Woodbury matrix identity \eqref{eq:woodbury},
\[\BFV^{-1}(\BFtheta) = \BFK^{-1}(\BFtheta) +  \BFK^{-1}(\BFtheta)[ \BFK^{-1}(\BFtheta) +  \BFSigma_\varepsilon^{-1}]^{-1} \BFK^{-1}(\BFtheta). 
\]
Hence, numerical inversion is only needed for computing   $[ \BFK^{-1}(\BFtheta) +  \BFSigma_\varepsilon^{-1}]^{-1}$. Notice that the diagonal entries of $\BFSigma_\varepsilon^{-1}$ are  $r_i/\Var[\varepsilon(\BFx_i)]$, $i=1,\ldots,n$, which can be made sufficiently far away from 0 by increasing $r_i$, the number of simulation replications at $\BFx_i$. Therefore, $\BFK^{-1}(\BFtheta) +  \BFSigma_\varepsilon^{-1}$ is not ill-conditioned in general. The numerical stability of MLE can be significantly improved. 

\subsection{Further Enhancement}\label{sec:enhanced_MLE}

If the covariance function $k_{\mathrm{Dir}}(x,y)$ is adopted, we can further improve the computational efficiency and numerical stability of MLE. For notational simplicity, we focus on the 1-dimensional case but the result can be extended to the $D$-dimensional case  without essential difficulty. 

Suppose that the design points are $\CalX=\{x_i=ih:i=1,\ldots,n\}$ with $h=1/(n+1)$. By Corollary \ref{cor:Dirichlet}, the precision matrix associated with $k_{\mathrm{Dir}}(x,y)$ and $\CalX$ is 
\begin{equation}\label{eq:Toeplitz}
\BFK^{-1}=\phi
\begin{pmatrix}
c & -1 &   &  &  \\
-1 & c & -1 &     & \\
\cdots &  & \cdots &  &  \cdots\\
 & &   -1 & c & -1 \\
 & &  & -1 & c
\end{pmatrix},
\end{equation}
where $\phi=\eta^{-2}a$ and
\begin{equation}\label{eq:bijection}
\left\{
\begin{array}{lll}
a = \sin^{-1}(\gamma) \sin^{-1}(\gamma h), & c = 2\cos(\gamma h), & \mbox{ if } \nu<0, \\[1ex]
a = h^{-1}, & c = 2,  & \mbox{ if }  \nu=0, \\[1ex]
a = \sinh^{-1}(\gamma) \sinh^{-1}(\gamma h), & c = 2\cosh(\gamma h), & \mbox{ if } \nu>0.
\end{array}
\right.
\end{equation}
Namely, all the diagonal entries of $\BFK^{-1}$ are made equal by the specific values of $x_1$ and $x_n$ in $\CalX$, and thus $\BFK^{-1}$ becomes a Toeplitz matrix. (However, this property does not hold for the Green's functions that correspond to the Cauchy or Neumann BC in Theorem \ref{theo:MCF_examples}.) 

A symmetric diagonal Toeplitz matrix enjoys a closed-form eigen-decomposition. Let $\{\lambda_i:i=1,\ldots,n\}$  be the eigenvalues of any matrix of the form \eqref{eq:Toeplitz} and $v_i^\intercal = (v_{i,1},\ldots,v_{i,n})$ be the eigenvector associated with $\lambda_i$, $i=1,\ldots,n$. Then, 
\begin{equation}\label{eq:eigenparis}
\lambda_i = \phi\left[c + 2\cos\left(i\pi(n+1)^{-1}\right)\right] \quad\mbox{and}\quad v_{i,j} = \sin\left(ij\pi(n+1)^{-1}\right);
\end{equation}
see   \cite{NoschesePasquiniReichel13}.

Notice that the mapping $(\eta^2, \nu)\mapsto(\phi,c)$ is bijective. Hence, we can reparameterize the MCF \eqref{eq:kernel_Diri} with $(\phi,c)$. Notice also that the eigenvector $v_i$ is independent of $(\phi,c)$, $i=1,\ldots,n$. Let  $\BFP$ be the matrix whose $i^{\mathrm{th}}$ row is $v_i^\intercal$. Then, $\BFP^{-1}=\BFP^\intercal$, since $\BFK^{-1}$ is positive definite. Let $\BFLambda(\phi,c)$ be the diagonal matrix whose $i^{\mathrm{th}}$ diagonal entry is $\lambda_i=\lambda_i(\phi,c)$. Then, $\BFK^{-1}(\phi,c)=\BFP^\intercal \BFLambda(\phi,c)\BFP$.

We now assume that $\BFSigma_\varepsilon$ has equal diagonal entries, i.e., $\Var[\varepsilon(x_i)]/r_i= \delta$, $i=1,\ldots,n$. This appears a reasonable assumption if (i) the simulation outputs have equal variances, i.e., $\Var[\varepsilon(x)]$ is a constant for all $x$, and the simulation budget is equally allocated, i.e., $r_1=\cdots=r_n$; or (ii) $r_i$ is chosen to be roughly proportional to $\Var[\varepsilon(x_i)]$. Under this assumption, $\BFK(\phi,c)+\BFSigma_\varepsilon = \BFP^\intercal [\BFLambda^{-1}(\phi,c) + \delta \BFI]\BFP$, where $\BFI$ denotes the identity matrix. Hence, $|\BFK(\phi,c)+\BFSigma_\varepsilon| = \prod_{i=1}^n (\lambda_i^{-1}(\phi,c)+\delta) $ and 
\[[\BFK(\phi,c)+\BFSigma_\varepsilon]^{-1} = \BFP^\intercal \mathrm{Diag}\left(\frac{1}{\lambda_1^{-1}(\phi,c)+\delta},\ldots,\frac{1}{\lambda_n^{-1}(\phi,c)+\delta}\right)\BFP \coloneqq \BFP^\intercal \BFD(\phi,c)\BFP,\]
where $\BFD(\phi,c)$ is diagonal whose the $i^{\mathrm{th}}$ diagonal entry is $d_i(\phi,c)=1/(\lambda_i^{-1}(\phi,c)+\delta)$. It follows that the log-likelihood function \eqref{eq:log-likelihood2} can be rewritten as 
\[l(\BFbeta,\phi,c) = -\frac{n}{2}\ln(2\pi) + \frac{1}{2}\sum_{i=1}^n\ln(d_i(\phi,c))-\frac{1}{2}(\bar\BFz-\BFF\BFbeta)^\intercal\BFP^\intercal\BFD(\phi,c)\BFP(\bar\BFz-\BFF\BFbeta).
\]
The first order optimality conditions for maximizing $l(\BFbeta,\phi,c)$  are
\begin{equation}\label{eq:MLE_conditions2}
\left\{
\begin{aligned}
\BFzero = & \frac{\partial l(\BFbeta,\phi,c)}{\partial \BFbeta} = \BFF^\intercal\BFP^\intercal \BFD(\phi,c)\BFP(\bar \BFz-\BFF\BFbeta),\\
0 = &  \frac{\partial l(\BFbeta,\phi,c)}{\partial \theta}  = \frac{1}{2}\sum_{i=1}^nd_i^{-1}(\phi,c)\frac{\partial d_i(\phi,c)}{\partial \theta} -\frac{1}{2}(\bar\BFz-\BFF\BFbeta)^\intercal\BFP^\intercal\frac{\partial \BFD(\phi,c)}{\partial \theta}  \BFP(\bar\BFz-\BFF\BFbeta),\quad \theta=\phi, c.
\end{aligned}
\right.
\end{equation}
Notice that $\frac{\partial \BFD(\phi,c)}{\partial \theta}$ , $\theta=\phi,c$,  is diagonal and can be calculated easily given  \eqref{eq:eigenparis}. In particular, the conditions in \eqref{eq:MLE_conditions2} do not involve any matrix inversion, thereby representing a further enhancement of computational efficiency relative to the optimality conditions of MLE for general MCFs. At last, with the maximum likelihood estimates of $(\phi, c)$, we can use \eqref{eq:bijection} to compute the estimates of $(\eta^2, a)$.

We summarize the differences in the use of MLE between MCFs and general covariance functions  in Table \ref{tab:MLE}. It needs to be emphasized, however, that the two parametric families of MCFs in Table \ref{tab:Green} other than $k_{\mathrm{Dir}}$ do not yield the kind of numerical enhancement discussed in this section. This is because the inverse matrix induced by them does not have the Toeplitz structure by Corollary \ref{cor:Dirichlet}.

\begin{table}[t]
\small
\begin{center}
\caption{Comparison on MLE.}\label{tab:MLE}
    \begin{tabular}{cccc}
    \toprule 
    Covariance Function & Inversion Needed? & Optimality Conditions & Stability Enhanced?\\
    \midrule
    General &  Yes & Eq. \eqref{eq:MLE_conditions1} & No \\
    MCF & Yes &  Eq. \eqref{eq:MLE_conditions1} & Yes \\ 
    $k_{\mathrm{Dir}}$ under Conditions & No & Eq. \eqref{eq:MLE_conditions2} & Yes\\
    \bottomrule
    \end{tabular}
\end{center}
\small{\textit{Note.} {Conditions: (i) design points are equally spaced; (ii) $\BFSigma_\varepsilon=\sigma^2\BFI$.}}
\end{table}

\begin{remark}
For a $D$-dimensional MCF $k(\BFx,\BFy)=\prod_{i=1}^Dk_i(x^{(i)},y^{(i)})$, where $k_i(\cdot,\cdot)$ is of the form \eqref{eq:kernel_Diri}, the optimality conditions of MLE can be derived in a similar manner. The key is to use the fact that the eigenvalues (resp., eigenvectors) of the tensor product $\bigotimes_{i=1}^D\BFK_i$ can be expressed as the tensor product of the eigenvalues (resp., eigenvectors) of each $\BFK_i$; see \citet[Theorem 13.12]{Laub05}. 
\end{remark}

\begin{remark}
By applying  Corollary \ref{cor:change_var}, we can relax the requirement on the form of the MCF from \eqref{eq:kernel_Diri} to $k_{\mathrm{Dir}}(\CalT(x),\CalT(y))$ for some  strictly increasing function $\CalT$. However, we need to change the design points accordingly to $\{\CalT^{-1}(ih):i=1,\ldots,n\}$, where $\CalT^{-1}$ is the inverse function of $\CalT$.
\end{remark}

\section{Numerical Experiments}\label{sec:numerical}

The big $n$ problem of SK has two aspects -- computational inefficiency and numerical instability. We have shown rigorously that with use of MCFs, the computational time related to matrix inversion, which is the core of the computation of both MLE and the SK predictor \eqref{eq:BLUP}, can be reduced from $\CalO(n^3)$ to $\CalO(n^2)$; see Table \ref{tab:complex}. The numerical stability issue, on the other hand, is detrimental to the prediction accuracy of SK in a more subtle way. For instance, it may cause numerical optimization of the MLE to fail, returning erroneous estimates of the parameters and further producing unreasonable predictions. In this section, we demonstrate via extensive numerical experiments, with emphasis on the stability aspect, that MCFs represent an elegant solution to the big $n$ problem of SK.

We compare the following three covariance functions. 
\begin{itemize}
\item 
Squared exponential: $k_{\mathrm{SE}}(\BFx,\BFy)=\eta^2\exp\left(-\sum_{i=1}^D\theta_i(x_i-y_i)^2\right)$;
\item 
Exponential: $k_{\mathrm{Exp}}(\BFx,\BFy)=\eta^2\exp\left(-\sum_{i=1}^D\theta_i|x_i-y_i|\right)$;     
\item 
Multidimensional extension of $k_{\mathrm{Dir}}(x,y)$ with distinct parameters in each dimension.
\end{itemize}
As discussed in \S\ref{sec:stable_MLE}, $k_{\mathrm{Exp}}$ can benefit from the tractability of MCFs, making its MLE significantly more stable than $k_{\mathrm{SE}}$. Further, $k_{\mathrm{Dir}}$ enjoys the ``inverse-free'' MLE scheme in \S\ref{sec:enhanced_MLE}, and thus has the highest computational tractability among the three competing alternatives. The computing environment of the following numerical experiments is a desktop PC with an Intel(R) Core(TM) i7-4790 3.60GHz processor and 16 GB of RAM, running Windows 7 Enterprise. The codes are written in Matlab R2015a. In the sequel, we assume that the SK metamodel \eqref{eq:uni_nriging} has a constant trend, i.e., $\SFZ(\BFx) = \beta+ \SFM(\BFx)$.

\subsection{Two-Dimensional Response Surfaces} \label{sec:2d_surface}

Consider three distinct 2-dimensional response surfaces which are defined and illustrated in Table \ref{tab:art}  and Figure \ref{fig:art}, respectively. 

\begin{table}[t]
\small
\begin{center}
\caption{Two-Dimensional Response Surfaces.} \label{tab:art}
    \begin{tabular}{lll}
    \toprule 
    Function Name & Expression & Domain  \\
    \midrule
    Three-Hump Camel & $\SFZ(x,y)=2x^2-1.05x^4+\frac{x^6}{6}+xy+y^2$ & $x,y\in [-2,2]$   \\
    Matyas & $\SFZ(x,y)=0.26(x^2+y^2)-0.48xy$ & $x,y\in[-10, 10]$  \\ 
    Bohachevsky & $\SFZ(x,y)=x^2+2y^2-0.3\cos(3\pi x)-0.4\cos(4\pi y)+0.7$ & $x,y\in[-100,100]$  \\     
    \bottomrule
    \end{tabular}
\end{center}
\end{table}

\begin{figure}[t]
\caption{Response Surfaces of the Functions in Table \ref{tab:art}.} \label{fig:art}
\begin{center}
    \includegraphics[width=0.32\textwidth]{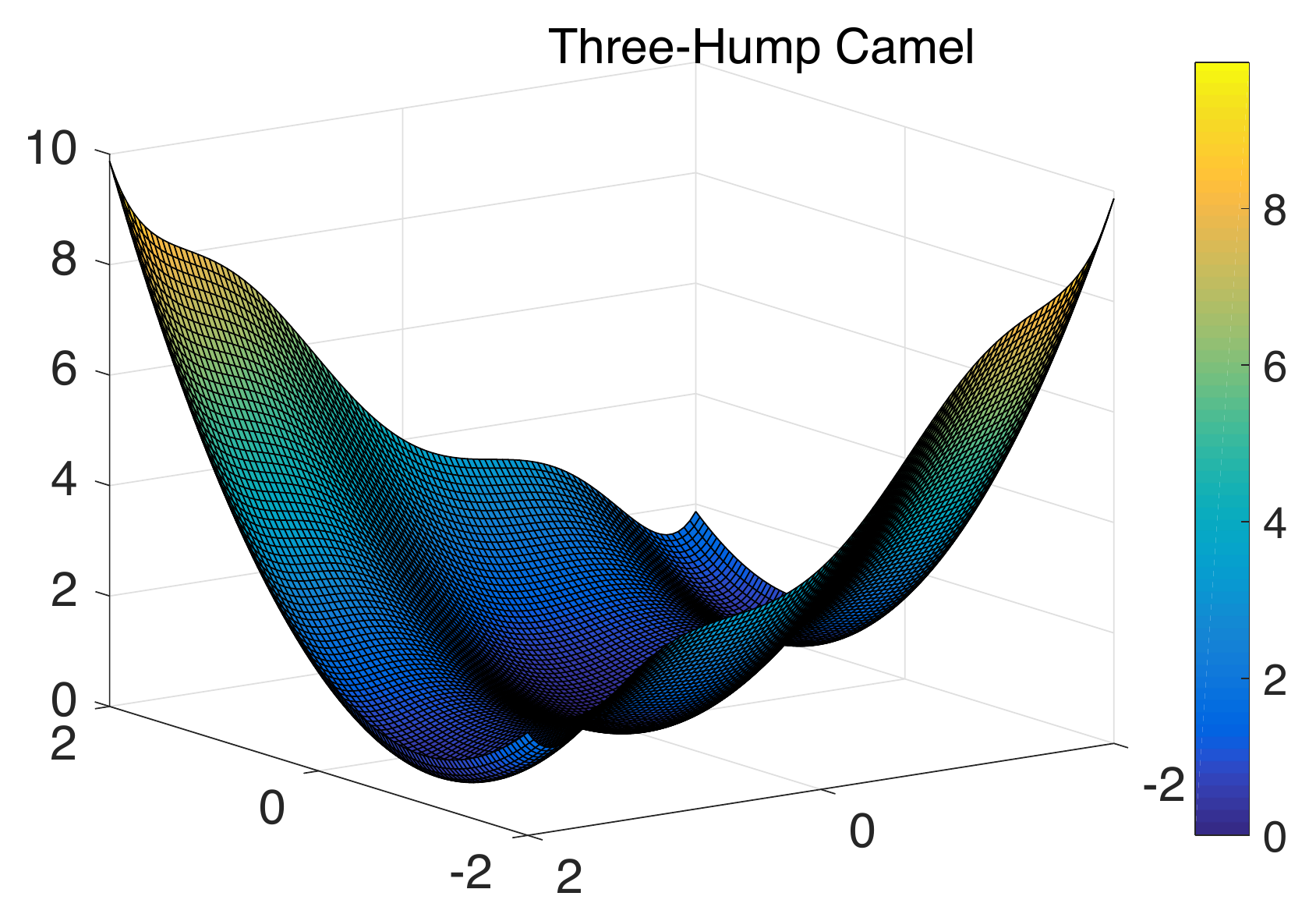} 
    \includegraphics[width=0.32\textwidth]{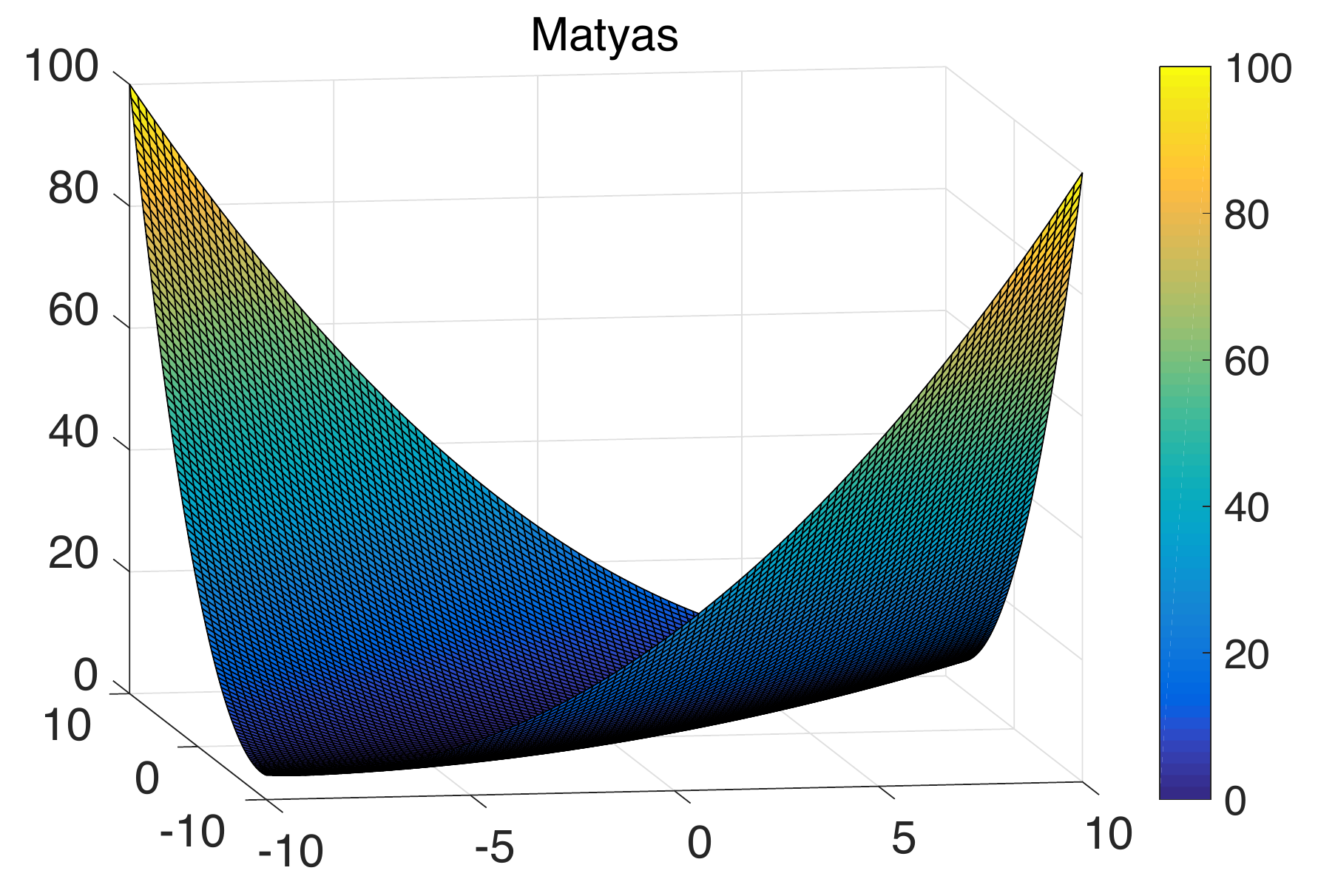} 
    \includegraphics[width=0.32\textwidth]{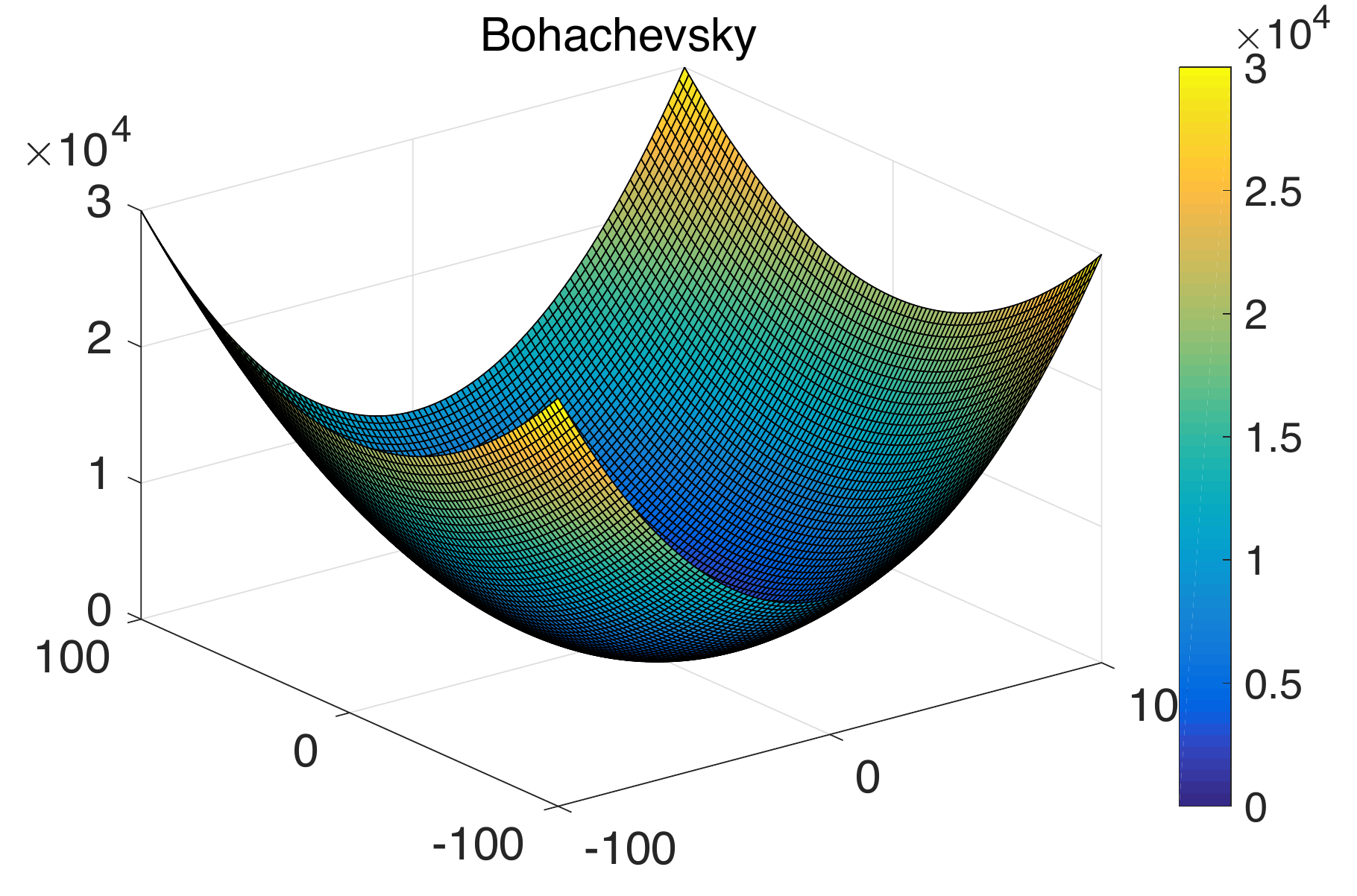}    
\end{center}

\end{figure}

For each surface, we choose $n=m^2$ design points and let them form an equally spaced lattice within the design space, for some integer $m\geq 3$. For instance, for the three-hump camel function whose domain is $[-2,2]^2$, we set the design points to be $\{(x_i,y_j)|x_i=\frac{4i}{m+1}-2, \frac{4j}{m+1}-2, i,j=1,\ldots,m\}$.  We set the number of prediction points  to be $K=100^2$ and place them equally spaced in the same way. For simplicity, we assume that the sampling variance  is $\sigma^2$ for each design point, implying that the covariance matrix of the sampling errors is $\BFSigma_\varepsilon=\sigma^2 \BFI$, where $\BFI$ denotes the $n\times n$ identity matrix. Given a covariance function (i.e., $k_{\mathrm{Dir}}$, $k_{\mathrm{Exp}}$,  or $k_{\mathrm{SE}}$), we first estimate the unknown parameters with MLE as discussed in \S\ref{sec:MLE}, and then compute the SK predictor $\hat\SFZ(\BFx_i)$ for each prediction point $\BFx_i$, $i=1,\ldots,K$ by plugging the parameter estimates into \eqref{eq:SK}. In order to assess the prediction accuracy, we compute the standardized root mean squared error (SRMSE) as follows 
\begin{equation*}\label{eq:SRMSE}
\mathrm{SRMSE} = \frac{\sqrt{\sum_{i=1}^K  \left[\SFZ(\BFx_i) - \hat\SFZ(\BFx_i)\right]^2 }}{\sqrt{\sum_{i=1}^K  \left[\SFZ(\BFx_i)-K^{-1}\sum_{h=1}^K\SFZ(\BFx_h)\right]^2}}.
\end{equation*}
since the three surfaces are of substantially different scales and the standardization facilitates the comparison. We repeat the experiment for both noiseless ($\sigma=0$) and noisy ($\sigma>0$) data, for each of the three surfaces, each  of the three covariance functions and $m=3,4,\ldots, 12$. The results are presented in Figure \ref{fig:Matyas}.

\begin{figure}[t]
\begin{center}
\caption{Accuracy for Predicting the Surfaces in Figure \ref{fig:art}.} \label{fig:Matyas}
$
\begin{array}{cc}
\includegraphics[width=0.45\textwidth]{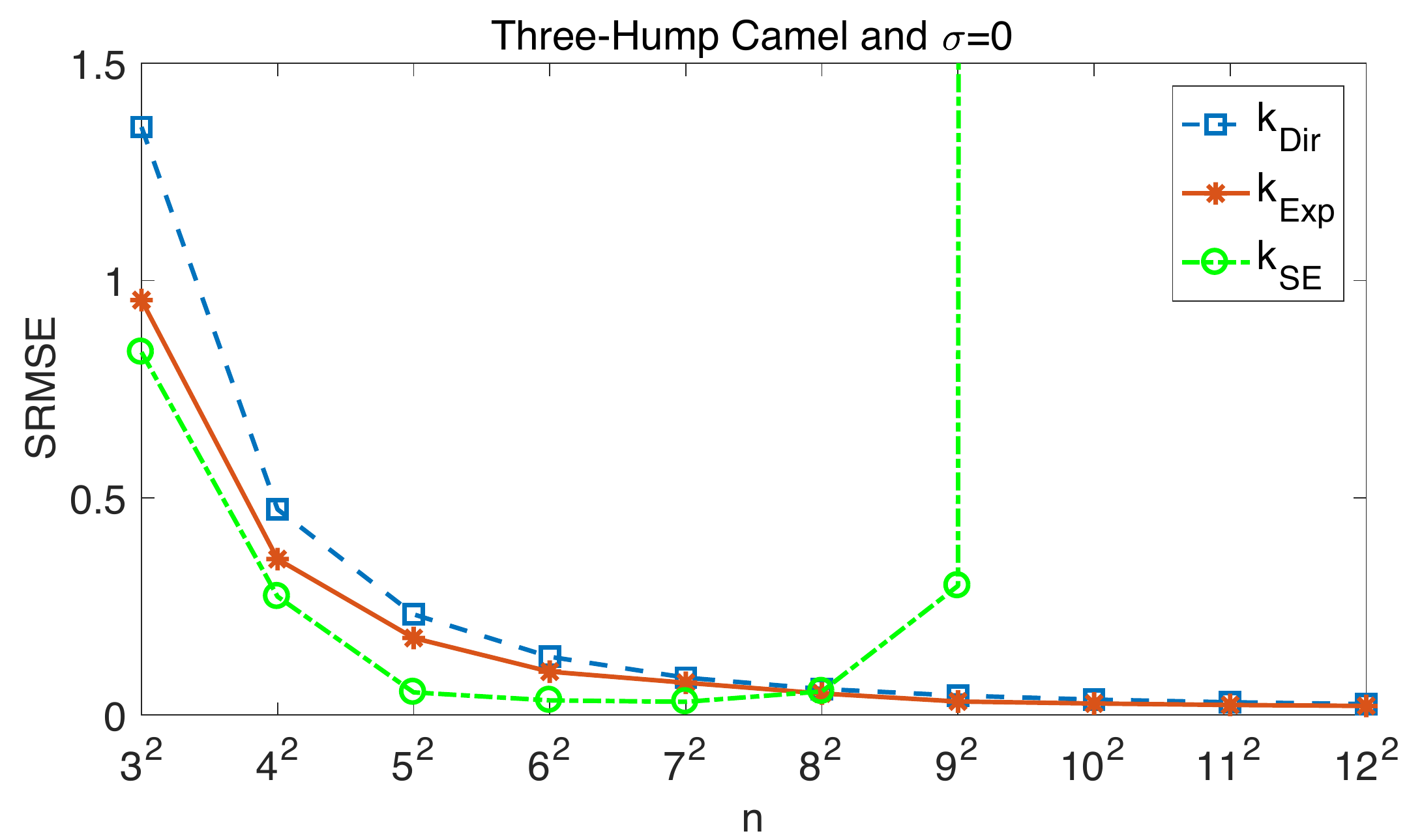} &
\includegraphics[width=0.45\textwidth]{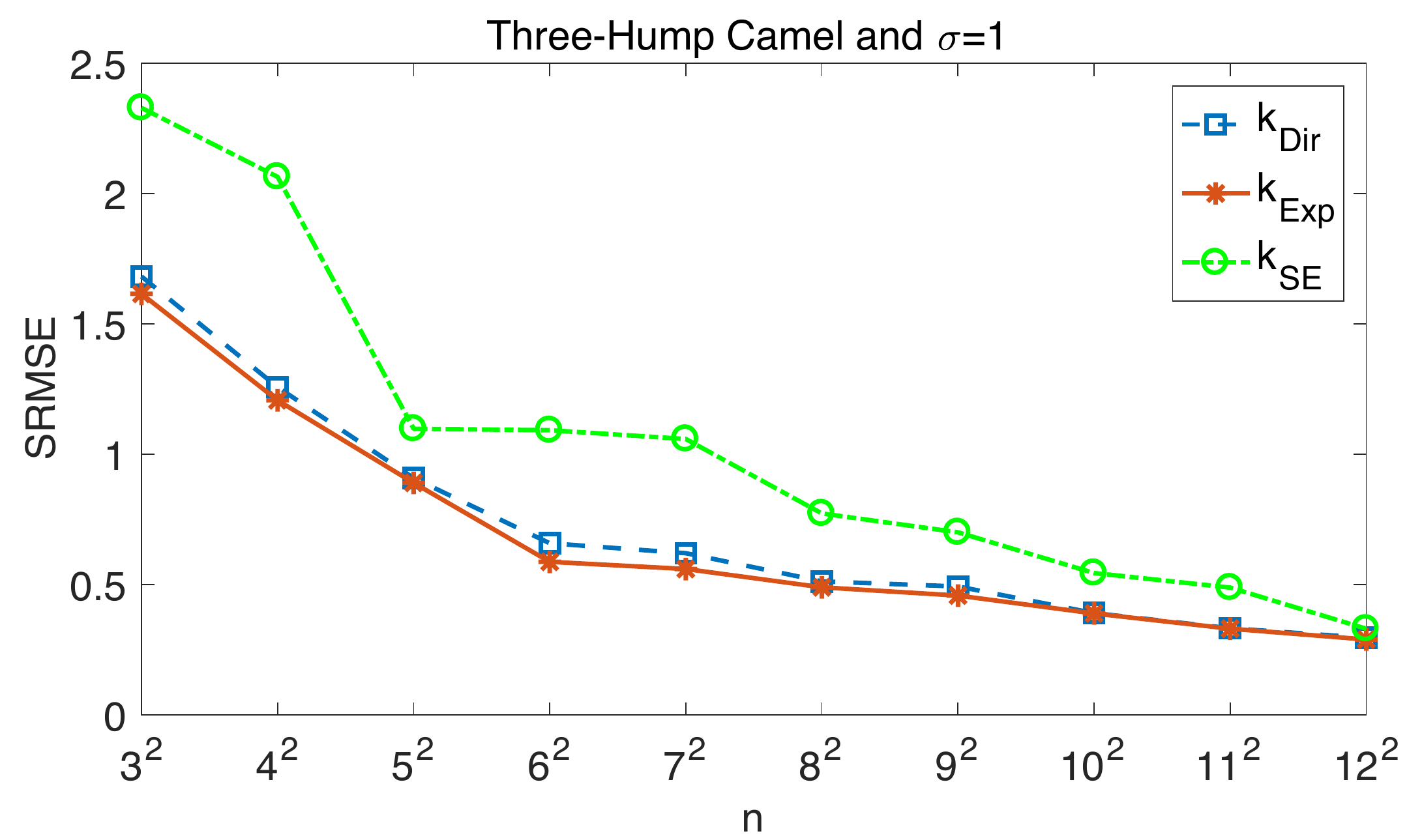} \\
\includegraphics[width=0.45\textwidth]{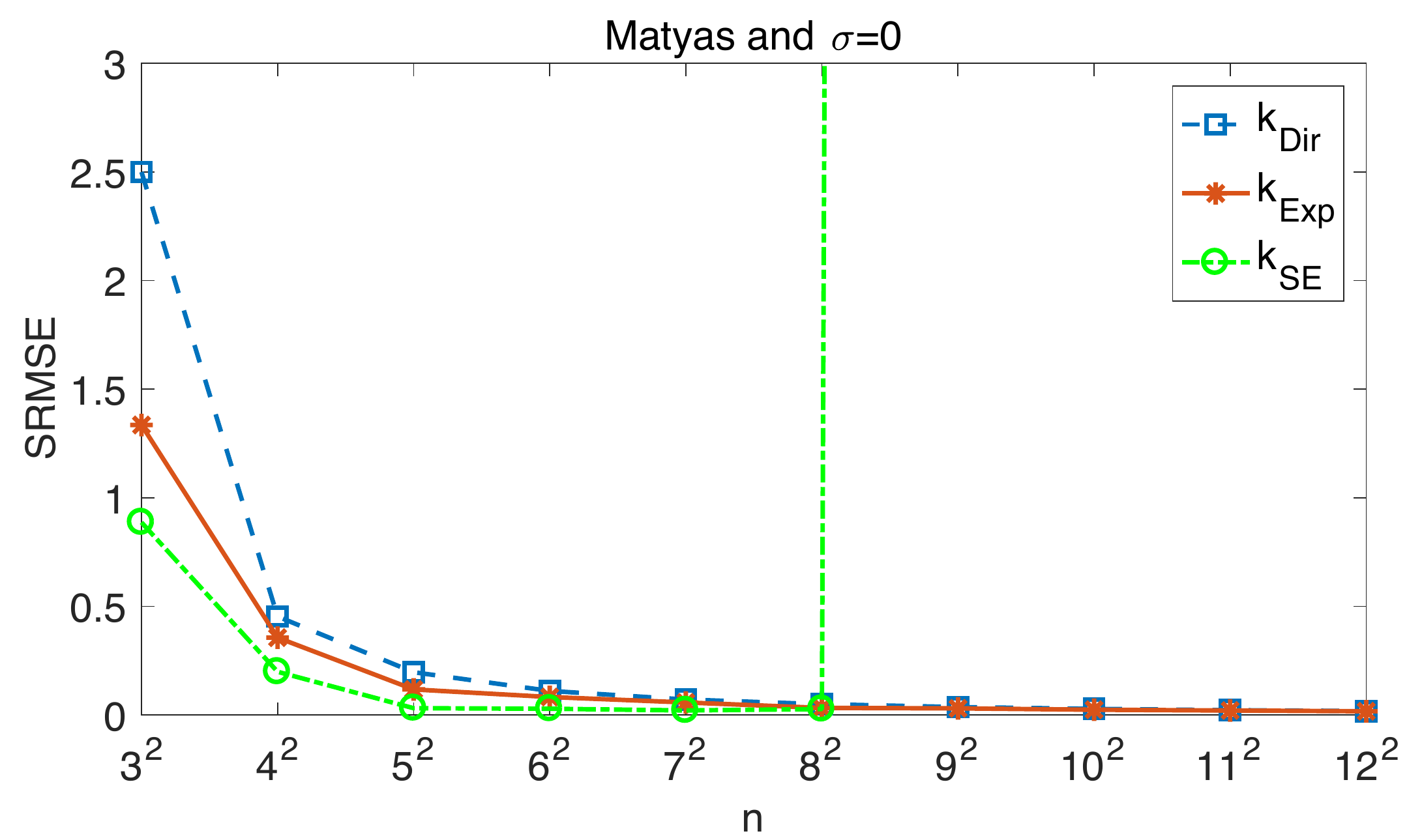} & 
\includegraphics[width=0.45\textwidth]{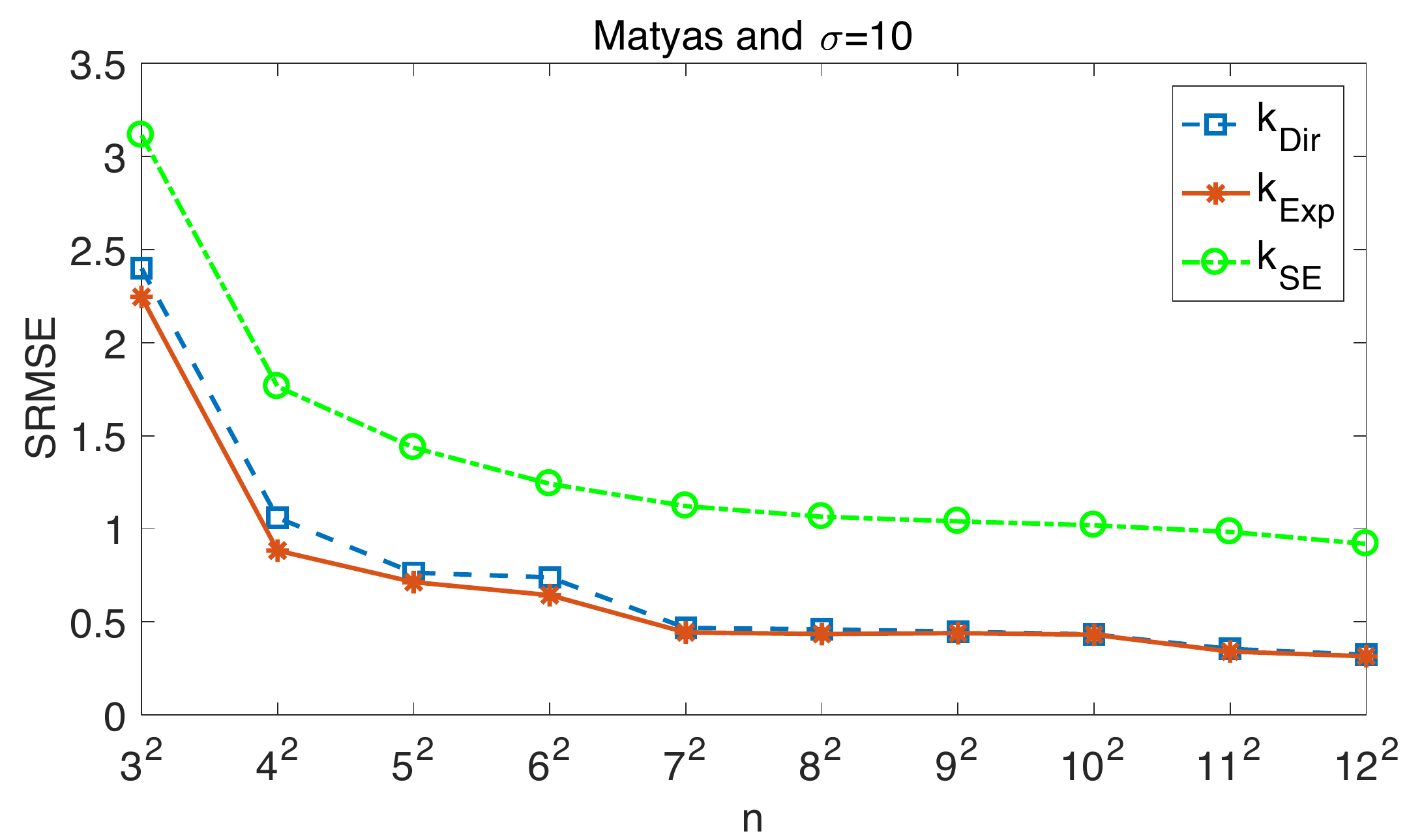} \\ 
\includegraphics[width=0.45\textwidth]{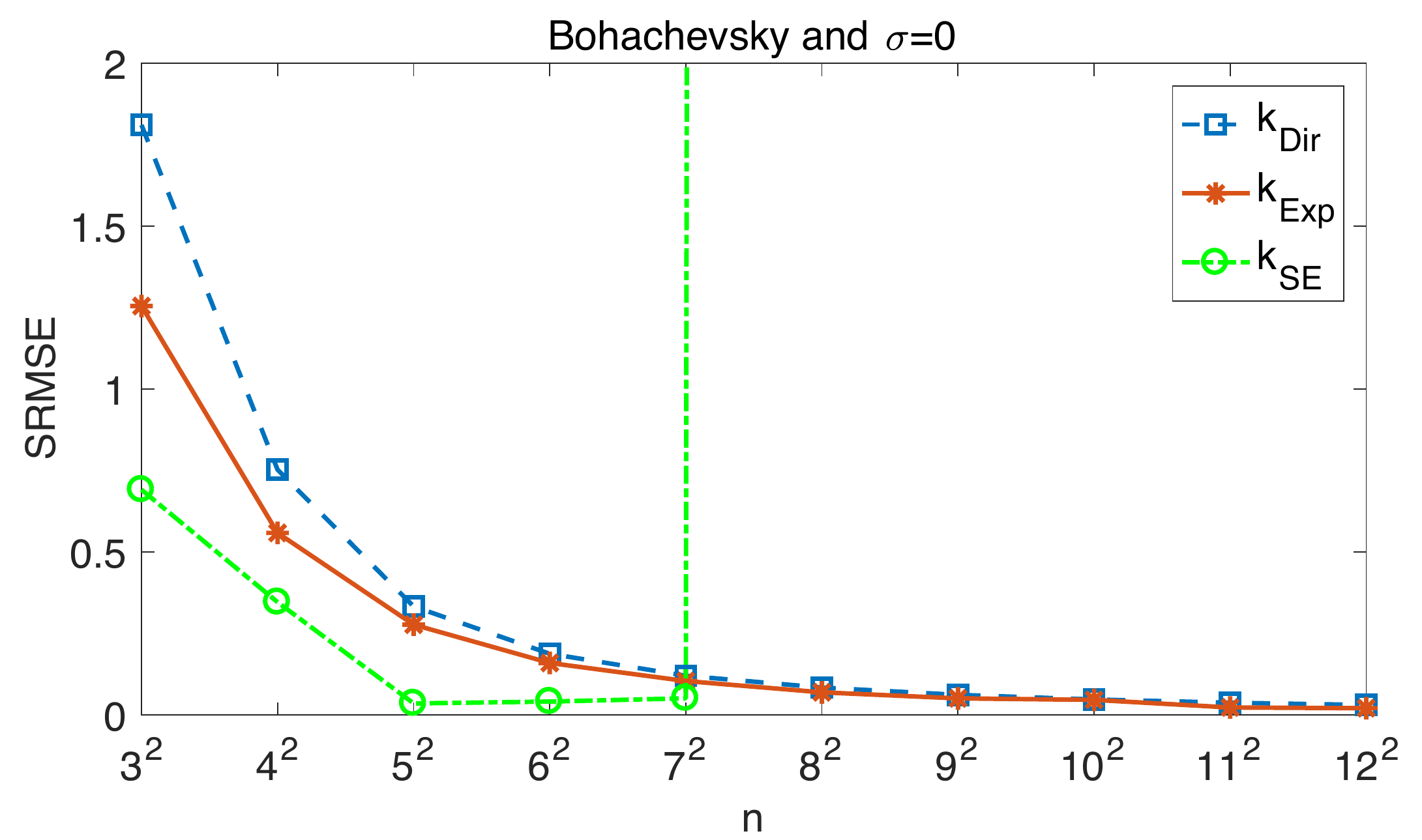} &
\includegraphics[width=0.45\textwidth]{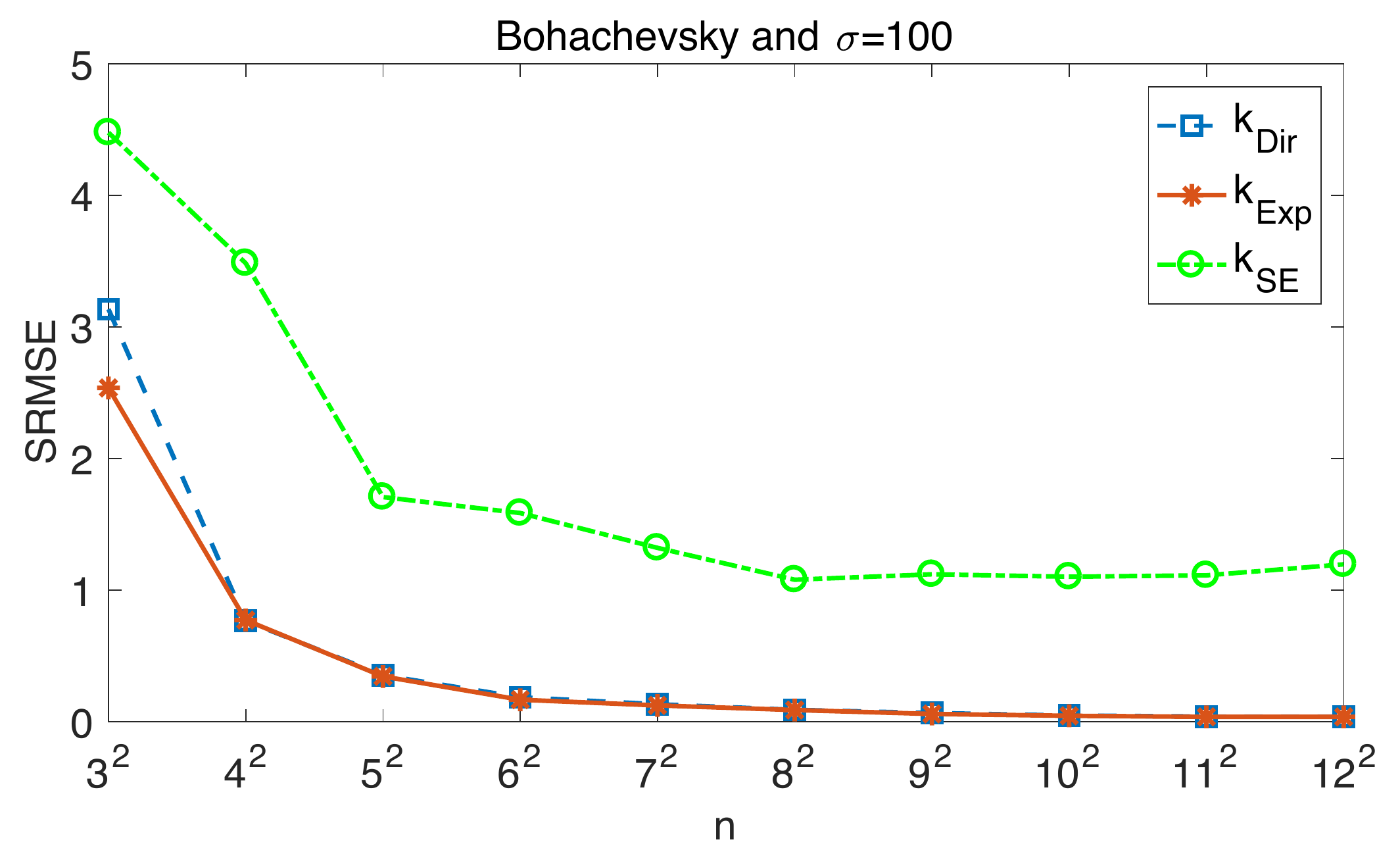} 
\end{array}
$
\end{center}
\end{figure}

Clearly, in the absence of simulation errors, i.e., $\sigma=0$, $k_\mathrm{SE}$ has yields highest prediction accuracy especially when $n$ is small, while $k_{\mathrm{Exp}}$ and $k_{\mathrm{Dir}}$ have almost identical performance. However, when $n$ is large, $k_{\mathrm{SE}}$ will encounter the serious numerical instability issue, as reflected by the sudden ``blow-up'' in SRMSE. This is because for large $n$, e.g., $n>50$, $\BFSigma_\SFM$ becomes highly ill-conditioned during the numerical procedure of solving MLE, in which case the numerical error associated with computing $\BFSigma_\SFM^{-1}$ is overwhelming, and both the parameter estimates and the prediction are unreliable.

On the other hand, in the presence of simulation errors, the numerical instability issue is mitigated greatly and we do not observe the ``blow-up'' behavior in SRMSE in our experiments even for large $n$. This is because the matrix that needs to be inversed now in order to compute the MLE and the SK predictor is $\BFSigma_\SFM+\BFSigma_\varepsilon$, which is far away from being singular despite the ill-condition of $\BFSigma_\SFM$. Nevertheless, the simulation errors degrade the prediction accuracy of SK in general, and $k_{\mathrm{SE}}$ appear to suffer the most. Specifically, the SRMSE associated with $k_{\mathrm{SE}}$ is significantly higher than the other two. The performances of $k_{\mathrm{Exp}}$ and $k_{\mathrm{Dir}}$ are comparable with the former noticeably better.

In order to reveal clearly the possible numerical instability associated matrix inversion, we compute the \emph{condition number} (associated with the $\mathsf{L}^2$ vector norm) of $\BFSigma_\SFM+\BFSigma_\varepsilon$, which measures how roundoff errors during computation impact the entries of the computed inverse matrix; see \citet[Chapter 5.8]{HornJohnson12} for exposition on the subject. The positive definiteness of $\BFSigma_\SFM+\BFSigma_\varepsilon$ implies that its condition number is the ratio between its largest eigenvalue to its smallest eigenvalue. The larger the condition number is, the more ill-conditioned the matrix is. In Figure \ref{fig:CondNum}, we plot the condition number of $\BFSigma_\SFM+\BFSigma_\varepsilon$ with plug-in parameter estimates from MLE for fitting the samples from the three-hump camel function. The plots for the Matyas function and the Bohachevsky function are highly similar, thereby omitted.

\begin{figure}[t]
\begin{center}
\caption{Condition Number of $\BFSigma_\SFM+\BFSigma_\varepsilon$.}\label{fig:CondNum}
\includegraphics[width=0.45\textwidth]{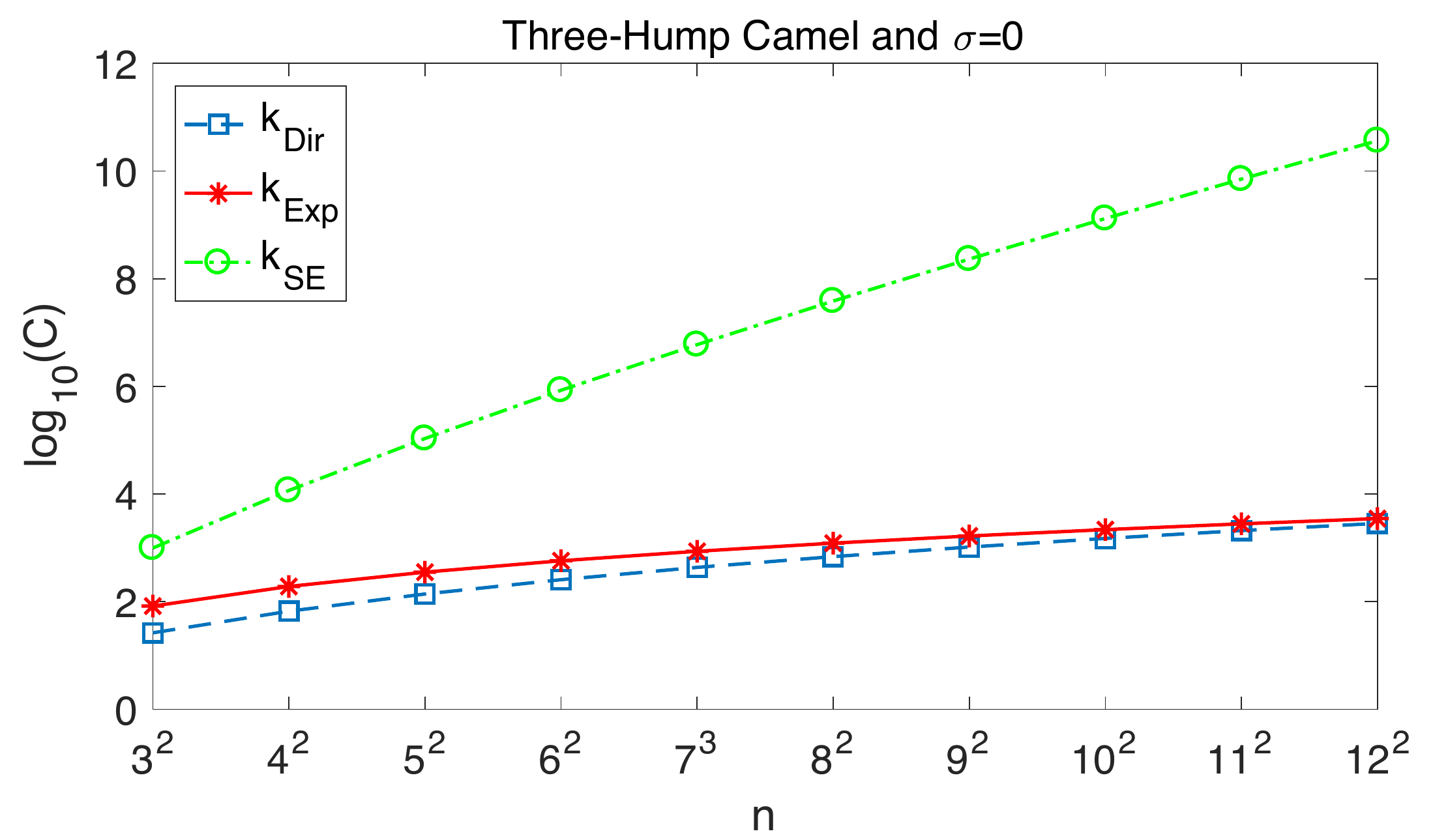} 
\includegraphics[width=0.45\textwidth]{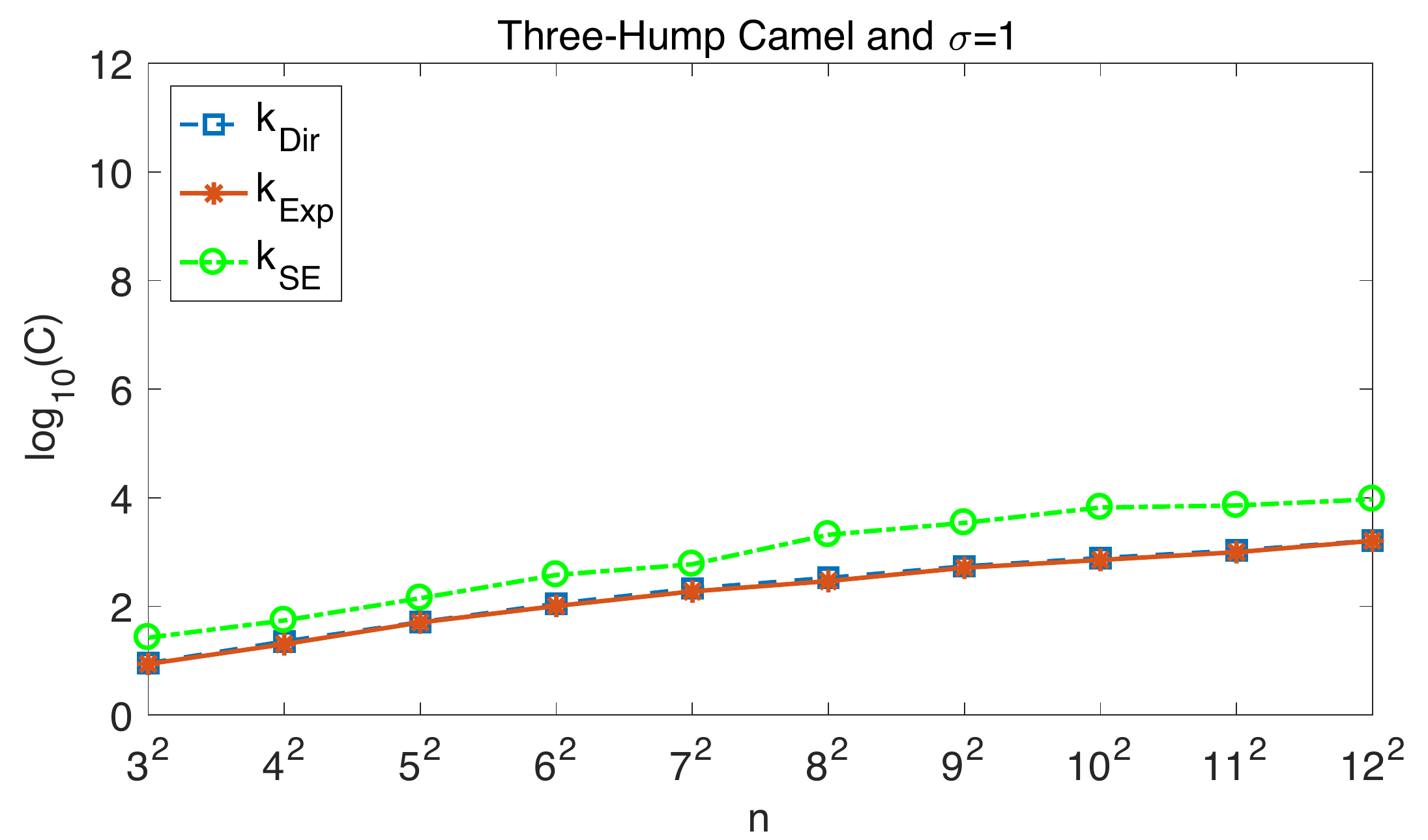}
\end{center}
\small{\textit{Note.} Three-hump camel function; $\BFSigma_\varepsilon=\BFzero$ if $\sigma=0$.}
\end{figure}

Figure \ref{fig:CondNum} shows that the condition number of $\BFSigma_\SFM$ for $k_{\mathrm{SE}}$ basically increases exponentially fast in $n$. For example, it is larger than $10^{10}$ for $n= 12^2$, which means that $\BFSigma_{\SFM}$ is severely ill-conditioned and explains the erroneous prediction results revealed in Figure \ref{fig:Matyas}. By contrast,  the condition number of $\BFSigma_\SFM$ grows dramatically slower for the other two covariance functions. However, in the presence of simulation errors, the condition number of $\BFSigma_\SFM+\BFSigma_\varepsilon$ is reduced substantially, especially for $k_{\mathrm{SE}}$.  Indeed, it has been well documented in geostatistics literature that the condition number of the covariance matrix associated with $k_{\mathrm{SE}}$ is particularly large. A typical treatment is to add artificially the so-called ``nugget effect'' which plays essentially the same role as $\BFSigma_\varepsilon$ mathematically; see, e.g., \cite{AbabouBagtzoglouWood94} and references therein.

\subsection{Scalability Demonstration}

We now demonstrate the scalability of SK when equipped with MCFs. In the experiments that follow, we do not incorporate $k_{\mathrm{SE}}$ in the comparison, because with it SK scales poorly as $n$ increases and almost certainly ends up with numerical failure  as shown in \S\ref{sec:2d_surface}. We consider two response surfaces. One is the Griewank function 
\[\SFZ(\BFx) = \sum_{i=1}^4 \left(\frac{x^{(i)}}{20}\right)^2 - 10\prod_{i=1}^D\cos\left(\frac{x^{(i)}}{\sqrt{i}}\right) + 10, \quad \BFx\in[-5,5]^4,\]
with $D=4$; see Figure \ref{fig:Griewank} (left panel) for its 2-dimensional projections.

\begin{figure}[t]
\begin{center}
\caption{Two-Dimensional Projections of the Griewank Function and the Expected Cycle Time.} \label{fig:Griewank}
\includegraphics[width=0.32\textwidth]{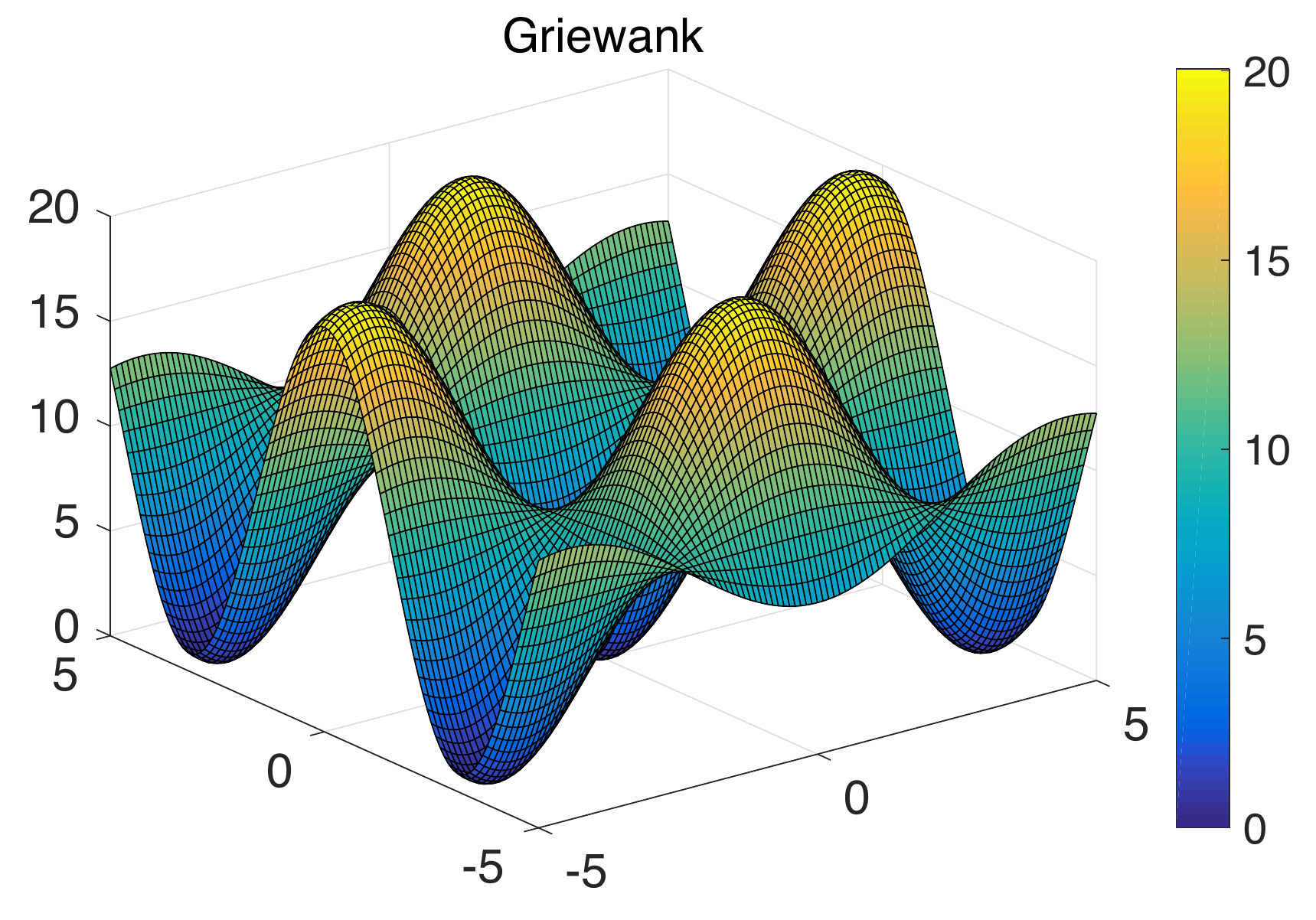}
\includegraphics[width=0.32\textwidth]{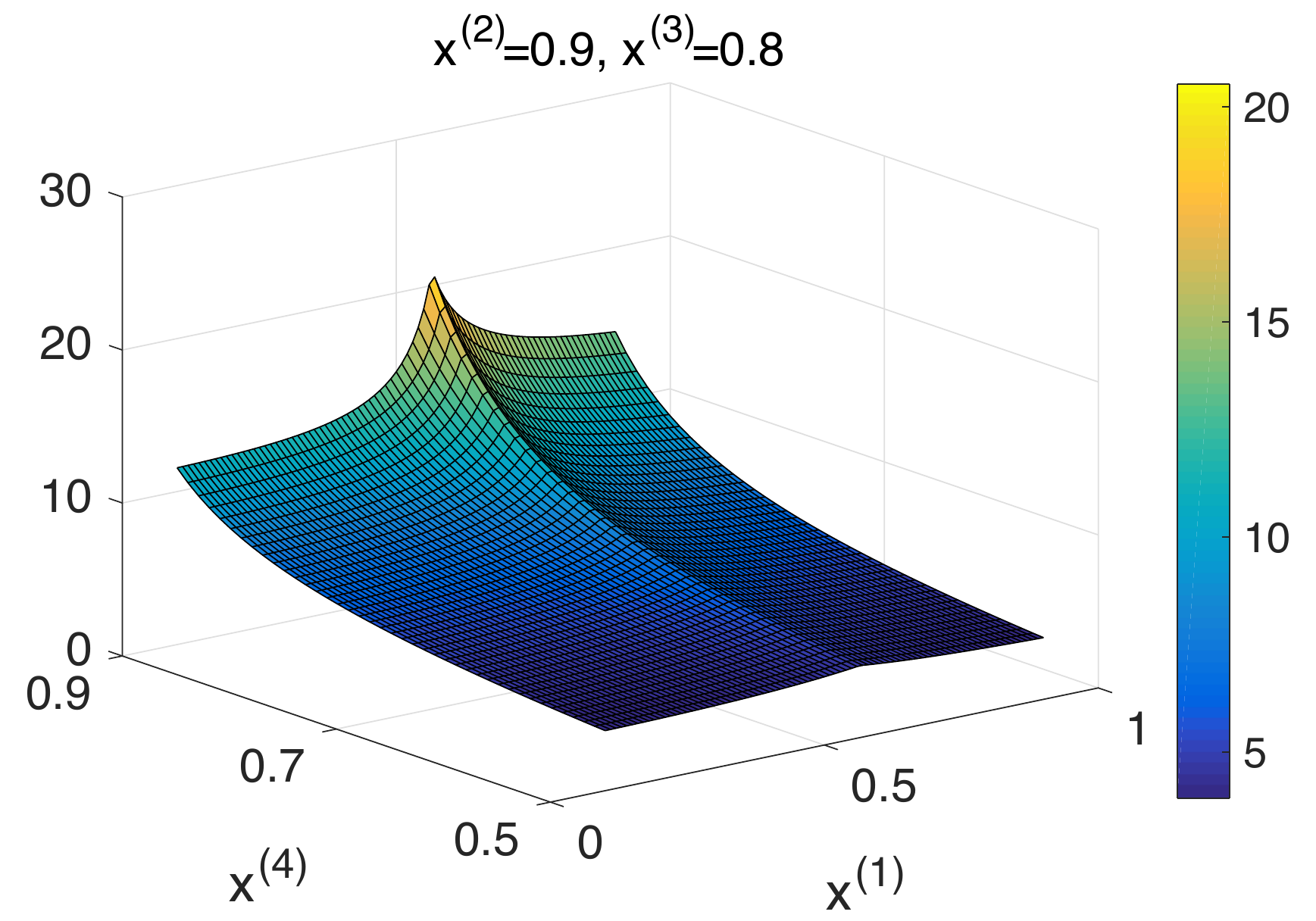}
\includegraphics[width=0.32\textwidth]{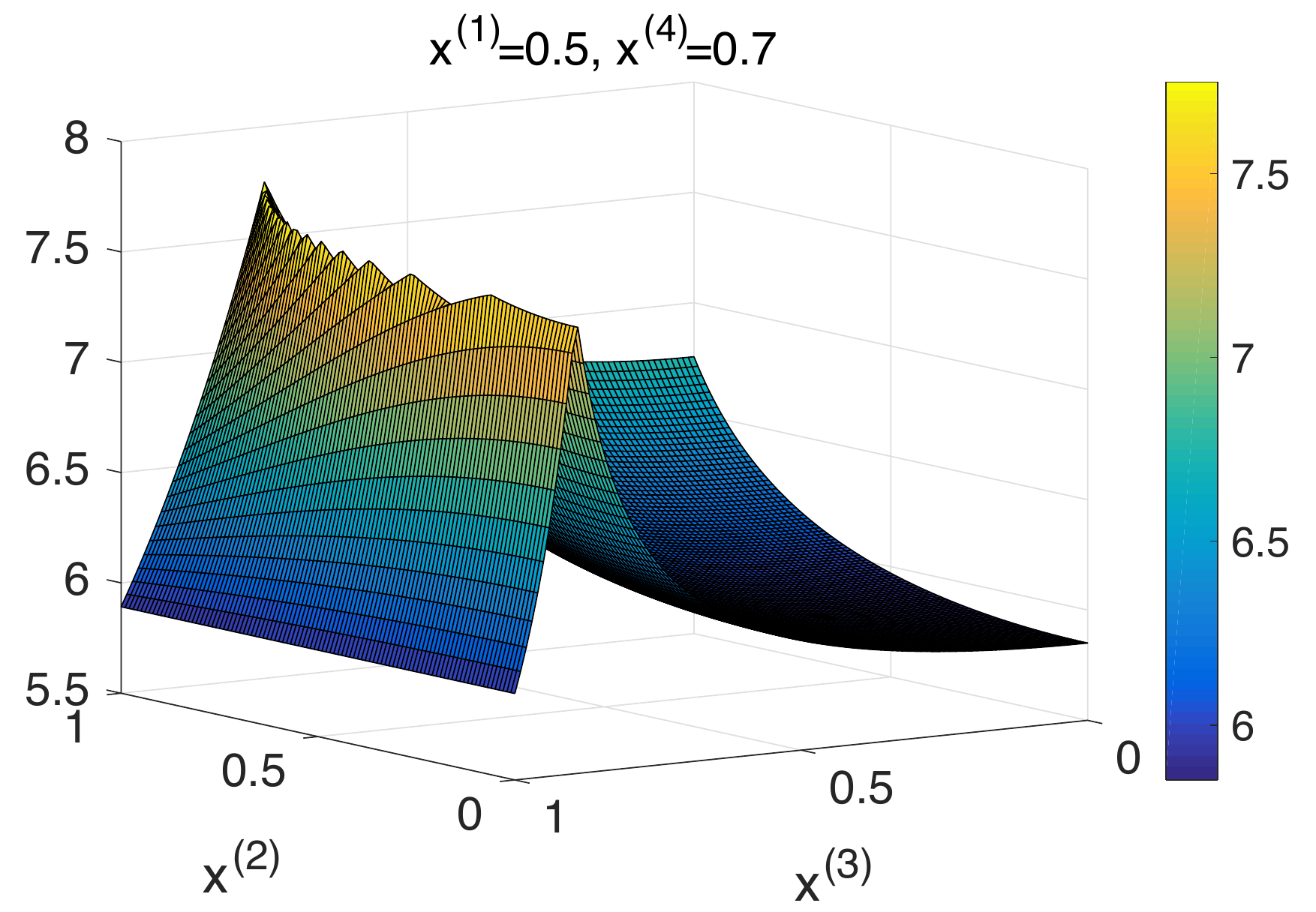} 
\end{center}
\end{figure}

The experiment is set up in the same way as \S\ref{sec:2d_surface}. We choose $n=m^4$ design points and $K=100^4$ prediction points, and make both sets of points equally spaced within the design space. The sampling variance at each design point is set to be $\sigma^2=1$. In addition to the prediction accuracy based on SRMSE, we compare $k_{\mathrm{Dir}}$ and $k_{\mathrm{Exp}}$ in terms of the computational efficiency as well. Notice that the implementation of SK comprises primarily two steps -- parameter estimation and computation of the predictor. Inversion of $\BFSigma_\SFM+\BFSigma_\varepsilon$, which is the scalability bottleneck, is performed repeatedly in the former step. By contrast, given the estimated  parameters, the matrix inversion is a one-time operation and thus can be stored to compute the predictor \eqref{eq:BLUP} at different design points, since the matrix is independent of the design point. 

As discussed in \S\ref{sec:MLE}, $k_{\mathrm{Dir}}$ enjoys a more efficient MLE scheme than general MCFs such as $k_{\mathrm{Exp}}$. We therefore compare their computational efficiency by measuring the CPU time used to solve the MLE. Specifically, we use the Matlab function \texttt{fsolve} to solve numerically the first-order optimality conditions \eqref{eq:MLE_conditions1} and \eqref{eq:MLE_conditions2} for $k_{\mathrm{Exp}}$ and $k_{\mathrm{Dir}}$, respectively. We set the initial point randomly, repeat the experiment 100 times, and compute the average CPU time. The results are presented in Figure \ref{fig:scalability} (upper panel).

A second surface arises from a queueing context and is adopted from \cite{YangLiuNelsonAnkenmanTongarlak11}. Consider a $N$-station Jackson network in which both the interarrival times and the service times are exponentially distributed. The arrivals consist of $D$ different types of products and the fraction of product $i$ is $\alpha_i$, $i=1,\ldots,D$. Suppose that station $j$ has a service rate $\mu_j$, regardless of the product type, $j=1,\ldots,N$. The station having the largest utilization among all is called the bottleneck station. Let $\rho$ denote the utilization of the bottleneck station. The design variable is $(\alpha_1,\ldots,\alpha_D, \rho)$, for $\alpha_i\in[0,1]$ with $\alpha_1+\cdots+\alpha_D=1$ and $\rho\in[0.5,0.9]$. The response surface of interest is the expected cycle time (CT) of, say, product 1. It is shown in \cite{YangLiuNelsonAnkenmanTongarlak11} that 
\begin{equation}\label{eq:CT}
\E[\mathrm{CT}_1]= \sum_{j=1}^N \frac{\delta_{1j}}{\mu_j\left[1- \rho\left(\frac{\sum_{i=1}^D\alpha_i\delta_{ij}/\mu_j}{\max_h \sum_{i=1}^D\alpha_i\delta_{ih}/\mu_h}\right)\right]},
\end{equation}
where $\delta_{i,j}$ is the expected number of visits to station $j$ by product $i$. The parameters $\mu_j$ and $\delta_{i,j}$ are generated randomly and given as follows: 
\[\mu = 
\begin{pmatrix}
1.25 \\
1.85 \\
1.97 \\
1.45
\end{pmatrix},
\quad 
\delta = 
\begin{pmatrix}
1.553 & 1.012 & 0.926 & 0.242 \\
0.127 & 1.066 & 1.115 & 0.536 \\ 
1.182 & 1.597 & 1.486 & 1.850 \\
1.800 & 1.310 & 1.029 & 1.179
\end{pmatrix}.
\]

Notice that the design space is not a hyperrectangle. To accommodate the requirement that the design points form a regular lattice, we conduct the following change of variables. Define 
$x^{(1)}=\sqrt{\alpha_1}$, $x^{(i)}=\sqrt{\alpha_{i}/(1-\sum_{h=1}^{i-1}\alpha_h)}$, $i=2,\ldots,D-2$, $x^{(D)}=\rho$. Then, $x^{(i)}\in[0,1]$ for $i=1,\ldots,D-1$ because $\alpha_1+\cdots+\alpha_D=1$. Let $\BFx=(x^{(1)},\ldots,x^{(D)})\in[0,1]^{D-1}\times[0.5,0.9]$ and $\SFZ(x)$ be the expression of \eqref{eq:CT} after the change of variables; see  Figure \ref{fig:Griewank} (middle and right panels) for its 2-dimensional projections. A critical difference between this surface and the others  is that it is not differentiable everywhere. This is because the bottleneck station varies as the product-mix vector $(\alpha_1,\ldots,\alpha_D)$ changes. 

We assume $D=N=4$. The experiment setup is the same as that for the Griewank function, except that we choose $n=5m^3$ design points as follows:  $x^{(i)}\in\{\frac{1}{m+1},\ldots,\frac{m}{m+1}\}$, $i=1,2,3$, and $x^{(D)}\in\{0.5,0.6,\ldots,0.9\}$, for $m=5,6,\ldots,15$.  The results are presented in Figure \ref{fig:scalability} (lower panel).

\begin{figure}[t]
\begin{center}
\caption{Efficiency for Solving MLE and Prediction Accuracy of SK with MCFs.} \label{fig:scalability}
$
\begin{array}{cc}
\includegraphics[width=0.45\textwidth]{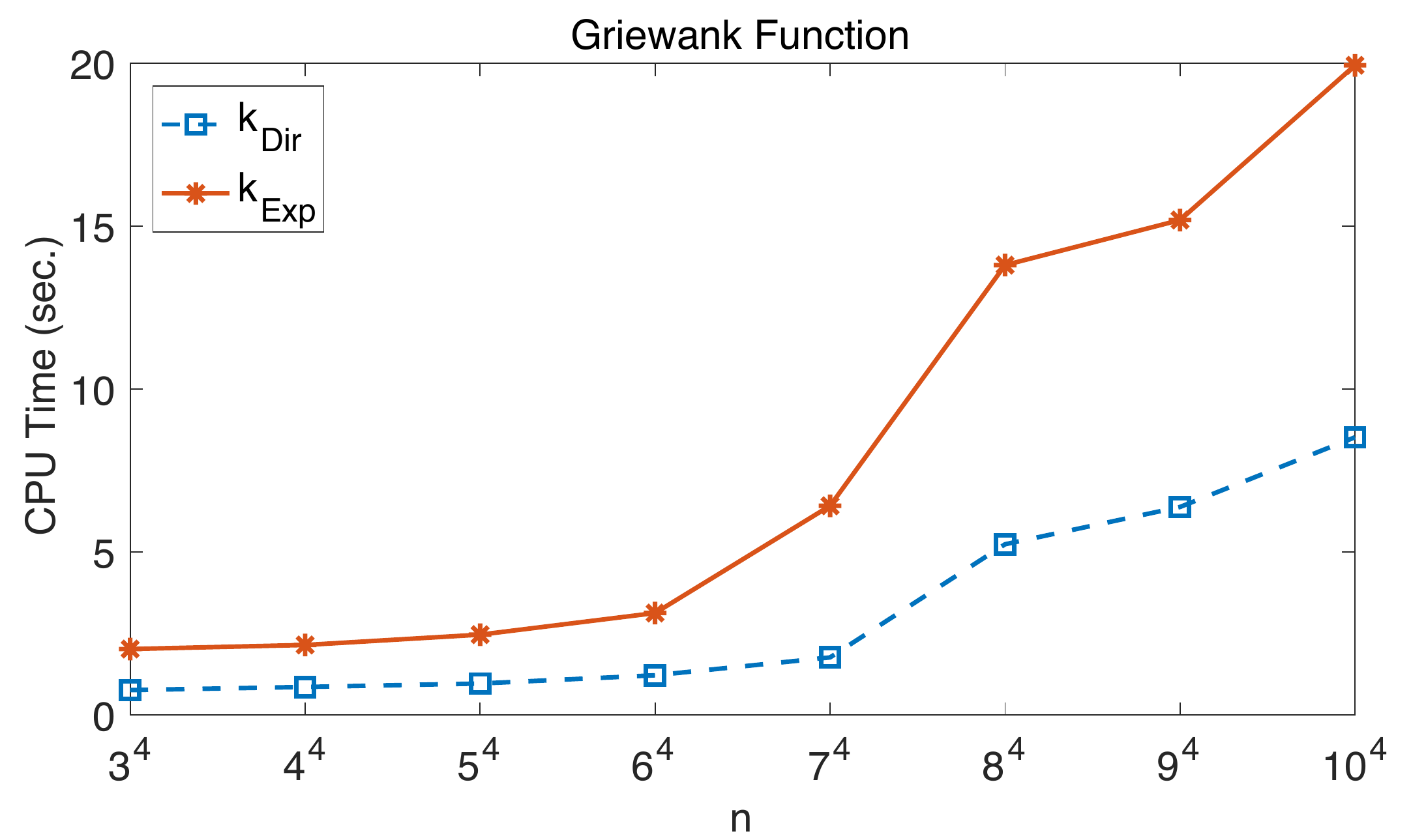} &
\includegraphics[width=0.45\textwidth]{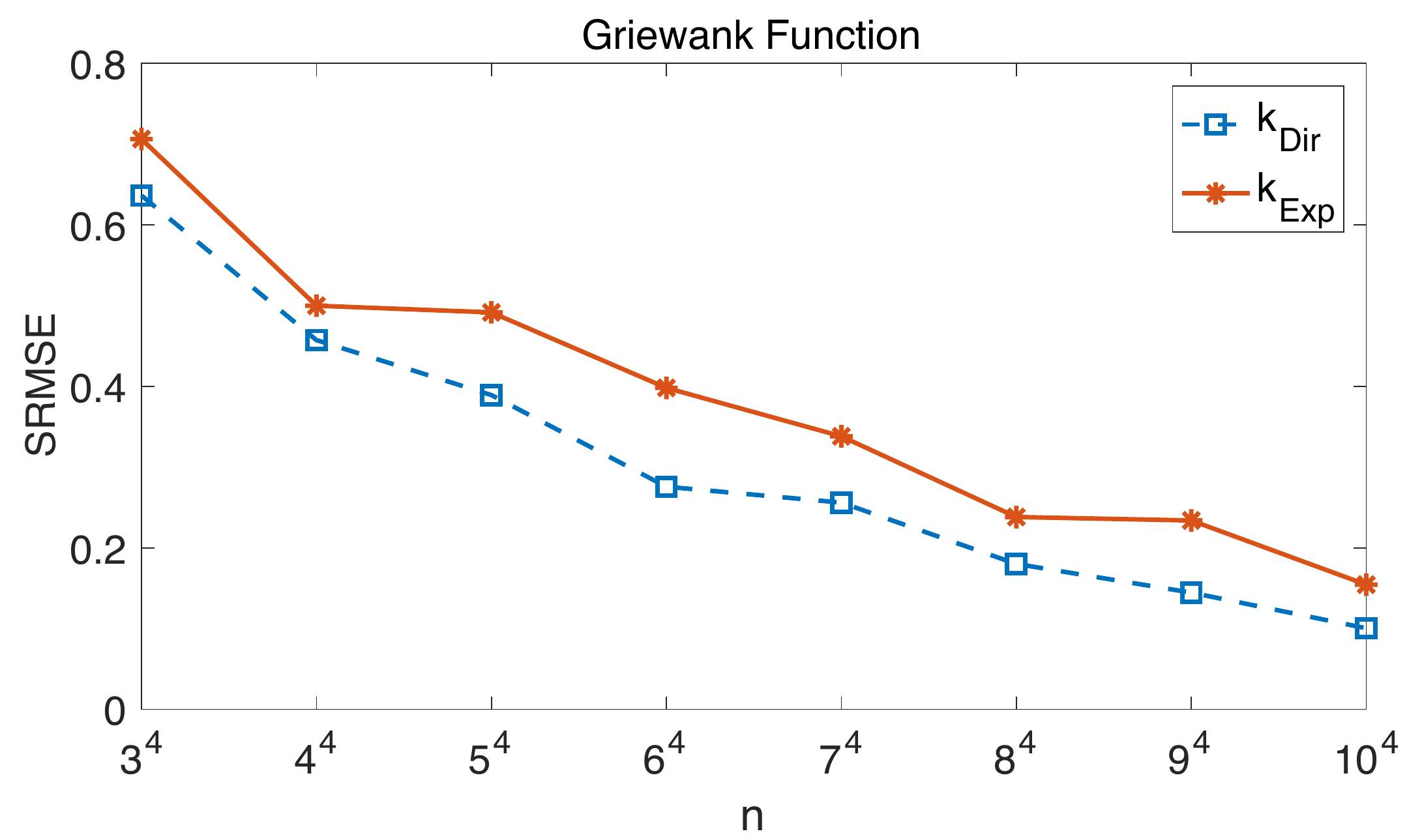}  \\ 
\includegraphics[width=0.45\textwidth]{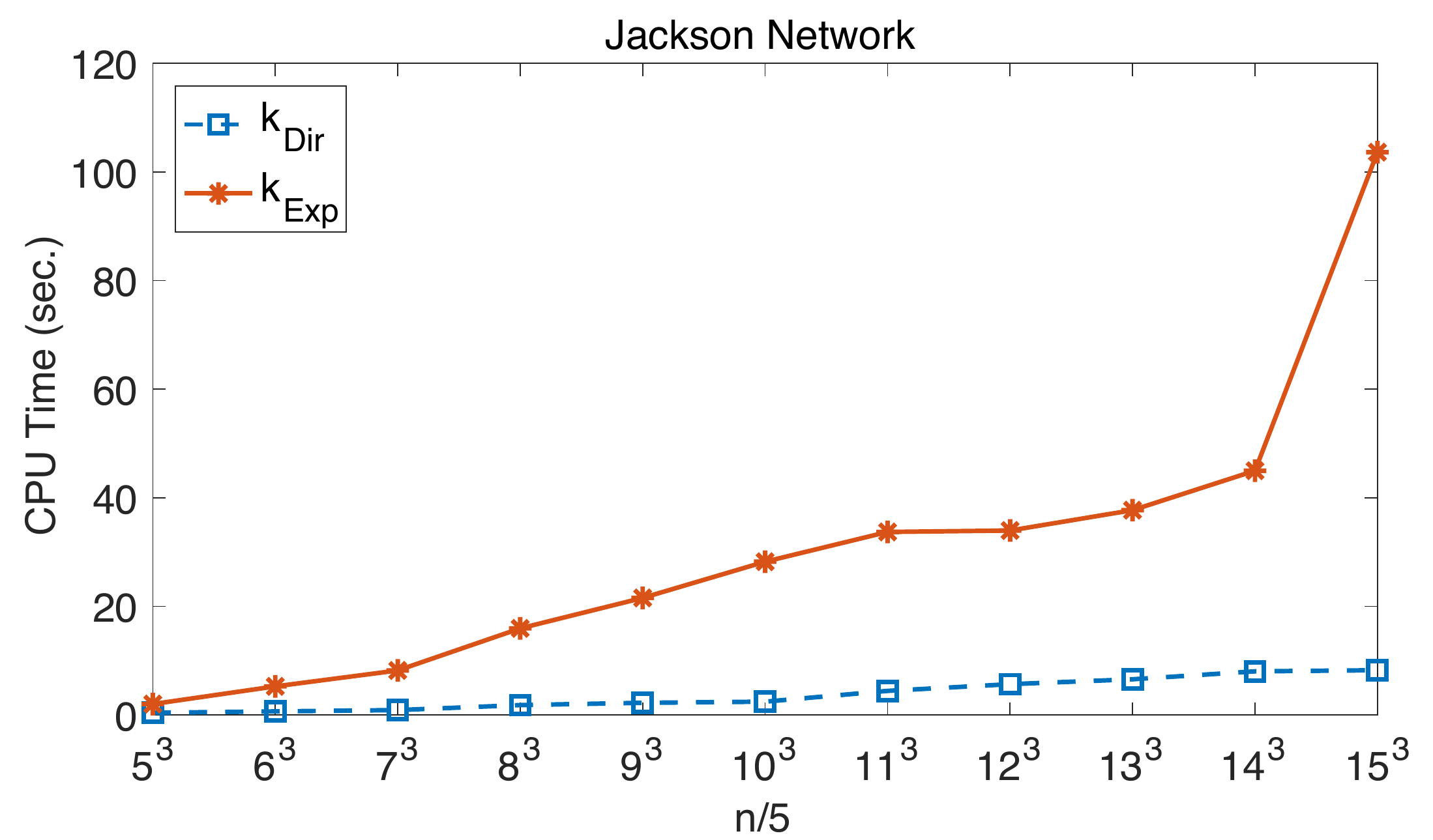} & 
\includegraphics[width=0.45\textwidth]{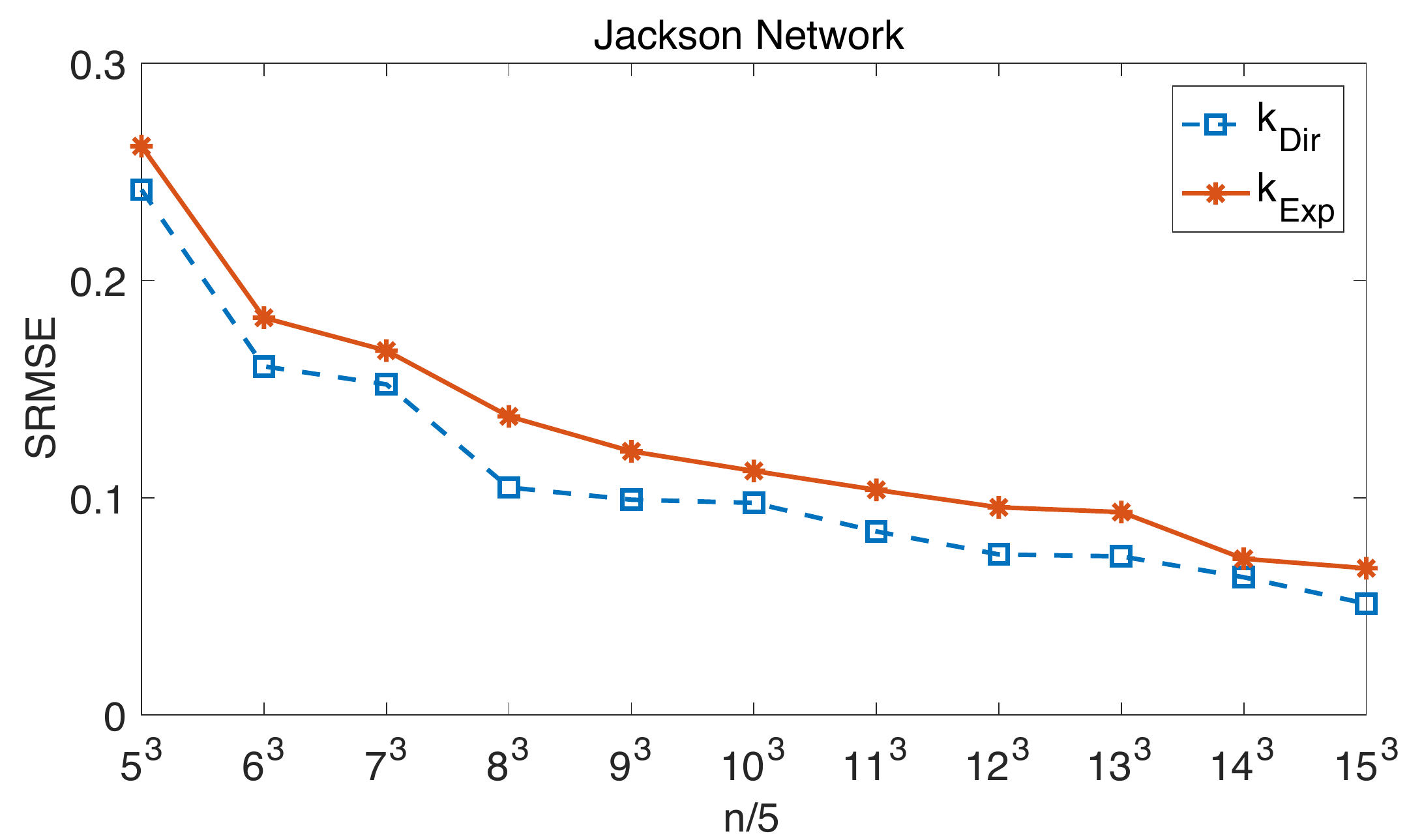} 
\end{array}
$
\end{center}
\end{figure}

We see from Figure \ref{fig:scalability} that SK can scale up dramatically with the use of MCFs. It can easily handle large-scale problems in a computationally efficient and numerically stable fashion. For example, even with $10^4\times 10^4$ covariance matrices, the MLE can be solved within a minute on an average desktop computer and does not encounter any numerical instability issue. This is a consequence of the analytical invertibility and the sparsity structure induced by MCFs. 

Moreover, between the two MCFs tested here, $k_{\mathrm{Dir}}$ outperforms $k_{\mathrm{Exp}}$ substantially in terms of computational efficiency, due to the enhanced MLE scheme in \S\ref{sec:enhanced_MLE}. However, we stress that such enhancement comes at the cost of flexibility in the allocation of simulation budget across design points, because $\BFSigma_\varepsilon$ needs to be in the form of $\sigma^2\BFI$. In terms of prediction accuracy, $k_{\mathrm{Dir}}$ is also noticeably better than $k_{\mathrm{Exp}}$.

\section{Concluding Remarks} \label{sec:conclusions}
The present paper addresses the poor scalability of the popular SK metamodel using a novel approach. By imposing a Markovian structure on the Gaussian random field, we identify the form of the covariance function that leads to analytically invertible covariance matrices with sparsity in the inverse. We further develop a connection between such MCFs and the Green's functions of S-L equations, which effectively provides a flexible, principled approach to constructing MCFs. With the use of MCFs, the computational complexity related to matrix inversion is reduced from $\CalO(n^3)$  to $\CalO(n^2)$ in general without any matrix approximations, to $\CalO(n)$ in the absence of simulation errors, and even to $\CalO(1)$ for some specific MCFs with carefully chosen design points. 

Extensive numerical experiments  demonstrate that for small-scale problems, MCFs have comparable performance as the squared exponential covariance function, a standard choice for SK, in terms of the prediction accuracy; however, the true advantage of MCFs resides in large-scale problems, which can be handled in a timely and stable manner without suffering from the numerical instability issue that SK normally exhibits under general covariance functions. 

Several follow-up problems should be investigated to realize the full potential of the methodology. For example, the condition number of the covariance matrix is examined numerically in the present paper. The observation that MCFs yield a small condition number ought to be addressed theoretically to further strengthen the foundation of the methodology. For another example, using gradient information to enhance the prediction accuracy of SK is a technique that receives much attention; see \cite{ChenAnkenmanNelson13} and \cite{QuFu14}. However, in the presence of the gradient, the size of the covariance matrix that needs to be inverted becomes $(D+1)n\times (D+1)n$, since there is a distinct derivative surface for the partial derivative along each dimension in addition to the response surface itself. Hence, the big $n$ problem is even more severe in this context and our methodology can potentially be of great help.

\section*{Acknowledgment}
The first author is supported by the Hong Kong PhD Fellow Scheme (Ref. No. PF14-13781). The second author is supported by the Hong Kong Research Grant Council (Project No. 16211417).

\appendix
\section{Proof of Proposition \ref{prop:minor}}\label{app:A}
We prove that for each $i=2,\ldots,n$,
\begin{equation}\label{eq:offdiag_minor}
|\BFK(\CalX\setminus\{x_{i-1}\}, \CalX\setminus\{x_i\})| =
p_1q_n\prod_{j=2,j\neq i}^n (p_j q_{j-1}-p_{j-1}q_j),
\end{equation}
by induction on $n$, the size of $\CalX$. The result is trivial for $n=2$. For $n=3$, 
\[|\BFK(\CalX\setminus\{x_1\}, \CalX\setminus\{x_2\})|= 
\begin{vmatrix}
p_1q_2 & p_2q_3 \\
p_1q_3 & p_3q_3
\end{vmatrix}
= p_1q_2p_3q_3 - p_1p_2q_3^2 = p_1q_3(p_3q_2-p_2q_3),
\]
and 
\[
|\BFK(\CalX\setminus\{x_2\}, \CalX\setminus\{x_3\})|=     
\begin{vmatrix}
p_1q_1  & p_1q_2 \\
p_1q_3  & p_2q_3 
\end{vmatrix}    
= p_1p_2q_1q_3 - p_1^2q_2q_3 = p_1q_3(p_2q_1-p_1q_2).
\]

We now suppose that the result holds for all $n\leq N-1$. For $n=N$, we first consider $i=2$. 
\begin{equation*}\label{eq:bigmatrix2}
\BFK(\CalX\setminus\{x_1\}, \CalX\setminus\{x_2\}) =
\kbordermatrix{
 & x_1  & x_3 & \cdots & x_N \\
x_2 & p_1q_2  & p_2q_3 & \cdots & p_2q_N \\
x_3 & p_1q_3 & p_3q_3 & \cdots & p_3q_N \\
\vdots &    &   & \vdots & \\
x_N & p_1q_N & p_3q_N & \cdots & p_Nq_N
}.
\end{equation*}
Applying the Laplace expansion along the first row, 
\begin{align}
&|\BFK(\CalX\setminus\{x_1\}, \CalX\setminus\{x_2\})| \nonumber \\
=& p_1q_2|\BFK(\CalX\setminus\{x_1,x_2\}, \CalX\setminus\{x_2,x_1\})| + \sum_{\ell=3}^N(-1)^{1+(\ell-1)} p_2q_\ell |\BFK(\CalX\setminus\{x_1,x_2\}, \CalX\setminus\{x_2,x_\ell\})|
 \nonumber  \\
= & 
p_1q_2|\BFK(\CalX\setminus\{x_1,x_2\}, \CalX\setminus\{x_2,x_1\})| - p_2q_3|\BFK(\CalX\setminus\{x_1,x_2\}, \CalX\setminus\{x_2,x_3\})|,\label{eq:expansion6}
\end{align}
where the second equality holds because the first two columns (corresponding to $x_1$ and $x_3$) in the minors are linearly dependent for all $\ell\geq 4$ (so that only the first term in the summation is nonzero). We apply Proposition \ref{prop:determinant} with $n=N-2$ to the first summand of  \eqref{eq:expansion6}. For the second summand, we let $\CalX'=\CalX\setminus\{x_2\}$ and relabel its points as $\{x_1,x_3,\ldots,x_N\} = \{x_1',x_2',\ldots,x'_{N-1}\}$. Then, we can apply  the induction assumption to obtain 
\begin{align*}
|\BFK(\CalX\setminus\{x_1,x_2\}, \CalX\setminus\{x_2,x_3\})| =& |\BFK(\CalX'\setminus\{x_1'\}, \CalX\setminus\{x_2'\})| \\ 
=& p'_1q'_{N-1}\prod_{j=3}^{N-1}(p'_jq'_{j-1}-p'_{j-1}q'_j) = p_1q_N\prod_{j=4}^N(p_jq_{j-1}-p_{j-1}q_j),
\end{align*}
where $p'_i=p(x'_i)$ and $q'_i=q(x'_i)$.  Hence, \eqref{eq:expansion6} becomes
\begin{align*}
|\BFK(\CalX\setminus\{x_1\}, \CalX\setminus\{x_2\})| =&  p_1q_2p_3q_N\prod_{j=4}^N(p_jq_{j-1}-p_{j-1}q_j)
- p_2q_3p_1q_N\prod_{j=4}^N(p_jq_{j-1}-p_{j-1}q_j) \\
= &p_1q_N\prod_{j=3}^N(p_jq_{j-1}-p_{j-1}q_j),
\end{align*}
proving \eqref{eq:offdiag_minor} for $n=N$ and $i=2$. 

For $i=3,\ldots,N$, 
\begin{equation}\label{eq:bigmatrix3}
\BFK(\CalX\setminus\{x_{i-1}\}, \CalX\setminus\{x_i\}) =
\kbordermatrix{
 & x_1  & x_2 & \cdots &  x_{i-1} & x_{i+1} & \cdots  & x_N \\
x_1 & p_1q_1  & p_1q_2 & \cdots &  p_1q_{i-1} & p_1q_{i+1} & \cdots & p_1q_N \\
x_2 & p_1q_2 & p_2q_2 & \cdots & p_2q_{i-1} & p_2q_{i+1} & \cdots & p_2q_N \\
\vdots& &  & \vdots &  &   & \vdots &\\
x_{i-2} &p_1q_{i-2} & p_2q_{i-2} & \cdots & p_{i-2}q_{i-1} & p_{i-2}q_{i+1} & \cdots & p_{i-2}q_N \\
x_{i} &p_1q_{i} & p_2q_{i} & \cdots & p_{i-1}q_{i} & p_{i}q_{i+1} & \cdots & p_{i}q_N \\
\vdots& &  & \vdots &  &   & \vdots &\\
x_N & p_1q_N & p_2q_N & \cdots & p_{i-1}q_N & p_{i+1}q_N & \cdots & p_Nq_N
}.
\end{equation}
Applying the Laplace expansion along the first row and using the same argument as the one leading to \eqref{eq:expansion6}, we obtain 
\begin{align}
&|\BFK(\CalX\setminus\{x_{i-1}\}, \CalX\setminus\{x_i\})| \nonumber \\
= & 
p_1q_1|\BFK(\CalX\setminus\{x_{i-1},x_1\}, \CalX\setminus\{x_i,x_1\})| - p_1q_2|\BFK(\CalX\setminus\{x_{i-1},x_1\}, \CalX\setminus\{x_i,x_2\})|.\label{eq:expansion7}
\end{align}
Notice that the first column of $\BFK(\CalX\setminus\{x_{i-1},x_1\}, \CalX\setminus\{x_i,x_2\})$ is a multiple of the first column of $\BFK(\CalX\setminus\{x_{i-1},x_1\}, \CalX\setminus\{x_i,x_1\})$. Hence, if $p_2\neq 0$, then 
\begin{equation}\label{eq:first_col_multiple}
|\BFK(\CalX\setminus\{x_{i-1},x_1\}, \CalX\setminus\{x_i,x_2\})| = \frac{p_1}{p_2}|\BFK(\CalX\setminus\{x_{i-1},x_1\}, \CalX\setminus\{x_i,x_1\})|.
\end{equation}
Moreover, by the induction assumption, i.e., applying \eqref{eq:offdiag_minor} to $\CalX\setminus\{x_1\}=\{x_2,\ldots,x_n\}$,
\begin{equation}\label{eq:induction_result}
|\BFK(\CalX\setminus\{x_{i-1},x_1\}, \CalX\setminus\{x_i,x_1\})| = p_2q_N\prod_{j=3,j\neq i}^N(p_j q_{j-1}-p_{j-1}q_j).
\end{equation}
Combining \eqref{eq:expansion7}, \eqref{eq:first_col_multiple}, and \eqref{eq:induction_result} yields 
\begin{align*}
|\BFK(\CalX\setminus\{x_{i-1}\}, \CalX\setminus\{x_i\})| =& \left(p_1q_1-\frac{p_1^2q_2}{p_2}\right) p_2q_n\prod_{j=3,j\neq i}^N(p_j q_{j-1}-p_{j-1}q_j)\\ 
=& p_1q_N\prod_{j=2,j\neq i}^N(p_j q_{j-1}-p_{j-1}q_j).
\end{align*}

On the other hand, if $p_2=0$, then we apply the Laplace expansion to $\BFK(\CalX\setminus\{x_{i-1},x_1\}, \CalX\setminus\{x_i,x_2\})$ along its first row, i.e., the row corresponding to $x_2$. Then, 
\begin{align}
|\BFK(\CalX\setminus\{x_{i-1},x_1\}, \CalX\setminus\{x_i,x_2\})| 
=& p_1q_2 |\BFK(\CalX\setminus\{x_{i-1},x_1,x_2\}, \CalX\setminus\{x_i,x_2,x_1\})| \nonumber \\
=& p_1q_2 p_3q_N\prod_{j=4,j\neq i}^N (p_j q_{j-1}-p_{j-1}q_j), \label{eq:induction_result2}
\end{align}
where the second equality follows from the induction assumption. Moreover, since $p_2=0$, $|\BFK(\CalX\setminus\{x_{i-1},x_1\}, \CalX\setminus\{x_i,x_1\})| = 0$ by \eqref{eq:induction_result}. It then follows from \eqref{eq:expansion7} and \eqref{eq:induction_result2} that, since $p_2=0$,
\begin{align*}
|\BFK(\CalX\setminus\{x_{i-1}\}, \CalX\setminus\{x_i\})| 
= &- p_1^2q_2^2p_3q_N \prod_{j=4,j\neq i}^N(p_j q_{j-1}-p_{j-1}q_j) = p_1q_N \prod_{j=2,j\neq i}^N(p_j q_{j-1}-p_{j-1}q_j).
\end{align*}

\section{Proof of Corollary \ref{cor:Dirichlet}}\label{app:B}
Without loss of generality, we assume that $\eta^2=1$. The proof is based on Theorem \ref{theo:inverse} and explicit calculations of the nonzero entries of $\BFG^{-1}$. In particular, it can be shown that regardless of the sign of $\nu$, the value of $p_iq_{i-1}-p_{i-1}q_i$ is a constant, independent of $i$. It then follows from Theorem \ref{theo:inverse} that we can take 
\[a = \frac{1}{p_2q_1-p_1q_2}, \; b=\frac{p_2}{p_1},\; c=\frac{p_{i+1}q_{i-1}-p_{i-1}q_{i+1}}{p_{i+1}q_i-p_iq_{i+1}},\;\mbox{and }d = \frac{q_{n-1}}{q_n},\]
to obtain the desired expression of $\BFG^{-1}$. 

There are six cases in total, depending on the sign of $\nu$ and the BC. We demonstrate the calculation only for the case corresponding to  the Dirichlet BC and  $\nu<0$. The calculation involved in the other cases is either simpler or highly similar, so we omit the details. 

Specifically, let $p(x)= \sin(\gamma x)$ and $q(x) = \sin(\gamma(1-x))$. 
Then, 
\[b = \frac{p_2}{p_1} = \frac{\sin(\gamma (x_1+h))}{\sin(\gamma x_1)},\quad d =\frac{q_{n-1}}{q_n} = \frac{\sin(\gamma(1-x_n+h))}{\sin(\gamma(1-x_n))},\]
and 
\begin{align*}
p_iq_{i-1}-p_iq_{i-1}&= \sin(\gamma ih)\sin(\gamma -\gamma(i-1)h)-\sin(\gamma(i-1)h)\sin(\gamma -\gamma ih)\\
&=\sin(\gamma ih)[\sin(\gamma)\cos(\gamma(i-1)h)-\cos(\gamma)\sin(\gamma(i-1)h)] \\
& -\sin(\gamma(i-1)h)[\sin(\gamma)\cos(\gamma ih)-\cos(\gamma)\sin(\gamma ih)]\\
&=\sin(\gamma) [\sin(\gamma ih)\cos(\gamma(i-1)h)-\cos(\gamma ih)\sin(\gamma(i-1)h)] = \sin(\gamma) \sin(\gamma h),
\end{align*}
for $i=2,\ldots,n$. Likewise, $p_{i+1}q_{i-1}-p_{i+1}q_{i-1} =\sin(\gamma) \sin(2\gamma h)$, for $i=2,\ldots,n-1$. Hence, 
\begin{align*}
a =& \frac{1}{p_2q_1-p_1q_2} = \frac{1}{\sin(\gamma) \sin(\gamma h)}, \\[1ex]
c=& \frac{p_{i+1}q_{i-1}-p_{i-1}q_{i+1}}{p_{i+1}q_i-p_iq_{i+1}} = \frac{\sin(2\gamma h)}{\sin(\gamma h)} = 2\cos(\gamma h).
\end{align*}

\bibliographystyle{chicago}
\bibliography{ScalableSK}
\end{document}